\documentclass[onecolumn]{IEEEtran}
\usepackage[latin1]{inputenc}
\usepackage[english]{babel}
\usepackage[margin=1in]{geometry}
\usepackage{xcolor}

\usepackage{blindtext}
\usepackage{amssymb}
\usepackage{hyperref}
\hypersetup{
    colorlinks=true,
    linkcolor=blue,
    filecolor=magenta,      
    urlcolor=cyan,
}
 
\usepackage{amsthm}
\usepackage[normalem]{ulem}
\usepackage{amsmath}
\usepackage{booktabs}
\usepackage{color}
\usepackage{tabularx}
\usepackage{array}
\usepackage{cite}
\usepackage{soul}
\usepackage{mathtools}
\usepackage{bm}
\usepackage{multirow}
\usepackage{makecell}
\usepackage{thmtools}
\usepackage{thm-restate}
\usepackage{graphicx}
\usepackage{subcaption}
\newtheorem{theorem}{Theorem}[section]
\newtheorem{corollary}[theorem]{Corollary}
\newtheorem{lemma}[theorem]{Lemma}
\newtheorem{proposition}[theorem]{Proposition}

\newtheorem{example}{Example}
\newtheorem{remark}[theorem]{Remark}

\usepackage{cleveref}

\newcommand\nc\newcommand
\nc{\cA}{\mathcal{A}}\nc{\cB}{\mathcal{B}}\nc{\cC}{\mathcal{C}}\nc{\cD}{\mathcal{D}}
\nc{\cE}{\mathcal{E}}\nc{\cF}{\mathcal{F}}\nc{\cG}{\mathcal{G}}\nc{\cH}{\mathcal{H}}
\nc{\cI}{\mathcal{I}}\nc{\cJ}{\mathcal{J}}\nc{\cK}{\mathcal{K}}\nc{\cL}{\mathcal{L}}
\nc{\cM}{\mathcal{M}}\nc{\cN}{\mathcal{N}}\nc{\cO}{\mathcal{O}}\nc{\cP}{\mathcal{P}}
\nc{\cQ}{\mathcal{Q}}\nc{\cR}{\mathcal{R}}\nc{\cS}{\mathcal{S}}\nc{\cT}{\mathcal{T}}
\nc{\cU}{\mathcal{U}}\nc{\cV}{\mathcal{V}}\nc{\cW}{\mathcal{W}}\nc{\cX}{\mathcal{X}}
\nc{\cY}{\mathcal{Y}}\nc{\cZ}{\mathcal{Z}}

\nc{\bba}{\mathbf{a}}\nc{\bbb}{\mathbf{b}}\nc{\bbc}{\mathbf{c}}\nc{\bbd}{\mathbf{d}}
\nc{\bbe}{\mathbf{e}}\nc{\bbf}{\mathbf{f}}\nc{\bbg}{\mathbf{g}}\nc{\bbh}{\mathbf{h}}
\nc{\bbi}{\mathbf{i}}\nc{\bbj}{\mathbf{j}}\nc{\bbk}{\mathbf{k}}\nc{\bbl}{\mathbf{l}}
\nc{\bbm}{\mathbf{m}}\nc{\bbn}{\mathbf{n}}\nc{\bbo}{\mathbf{o}}\nc{\bbp}{\mathbf{p}}
\nc{\bbq}{\mathbf{q}}\nc{\bbr}{\mathbf{r}}\nc{\bbs}{\boldsymbol{S}}\nc{\bbt}{\mathbf{t}}
\nc{\bbu}{\mathbf{u}}\nc{\bbv}{\mathbf{v}}\nc{\bbw}{\mathbf{w}}
\nc{\bbz}{\mathbf{z}}

\nc{\bbA}{\mathbf{A}}\nc{\bbB}{\mathbf{B}}\nc{\bbC}{\mathbf{C}}\nc{\bbD}{\mathbf{D}}
\nc{\bbE}{\mathbf{E}}\nc{\bbF}{\mathbf{F}}\nc{\bbG}{\mathbf{G}}\nc{\bbH}{\mathbf{H}}
\nc{\bbI}{\mathbf{I}}\nc{\bbJ}{\mathbf{J}}\nc{\bbK}{\mathbf{K}}\nc{\bbL}{\mathbf{L}}
\nc{\bbM}{\mathbf{M}}\nc{\bbN}{\mathbf{N}}\nc{\bbO}{\mathbf{O}}\nc{\bbP}{\mathbf{P}}
\nc{\bbQ}{\mathbf{Q}}\nc{\bbR}{\mathbf{R}}\nc{\bbS}{\mathbf{S}}\nc{\bbT}{\mathbf{T}}
\nc{\bbU}{\mathbf{U}}\nc{\bbV}{\mathbf{V}}\nc{\bbW}{\mathbf{W}}
\nc{\bbZ}{\mathbf{Z}}

\nc{\sA}{\mathsf{A}}\nc{\sB}{\mathsf{B}}\nc{\sC}{\mathsf{C}}\nc{\sD}{\mathsf{D}}
\nc{\sE}{\mathsf{E}}\nc{\sF}{\mathsf{F}}\nc{\sG}{\mathsf{G}}\nc{\sH}{\mathsf{H}}
\nc{\sI}{\mathsf{I}}\nc{\sJ}{\mathsf{J}}\nc{\sK}{\mathsf{K}}\nc{\sL}{\mathsf{L}}
\nc{\sM}{\mathsf{M}}\nc{\sN}{\mathsf{N}}\nc{\sO}{\mathsf{O}}\nc{\sP}{\mathsf{P}}
\nc{\sQ}{\mathsf{Q}}\nc{\sR}{\mathsf{R}}\nc{\sS}{\mathsf{S}}\nc{\sT}{\mathsf{T}}
\nc{\sU}{\mathsf{U}}\nc{\sV}{\mathsf{V}}\nc{\sW}{\mathsf{W}}\nc{\sX}{\mathsf{X}}
\nc{\sY}{\mathsf{Y}}\nc{\sZ}{\mathsf{Z}}

\newcommand{\F}{{\Bbb F}}
\newcommand{\Fq}{{{\Bbb F}}_{\!q}}

\newcommand{\E}{{\Bbb E}}

\renewcommand{\Pr}{\mathrm{Pr}}
\newcommand{\codim}{\operatorname{codim}}

\DeclareMathOperator{\spn}{Span}
\DeclareMathOperator{\rank}{rank}
\newcommand{\gv}[1]{[#1]_q}

\newcommand{\qbin}[2]{\genfrac{[}{]}{0pt}{}{#1}{#2}_q}

\newcommand{\bone}{\mathbf{1}} 

\newcommand{\bra}[1]{[#1]_q}

\newcommand{\mathset}[1]{\left\{#1\right\}}

\newcommand{\ceilenv}[1]{\left\lceil #1 \right\rceil}
\newcommand{\floorenv}[1]{\left\lfloor #1 \right\rfloor}
\newcommand{\parenv}[1]{\left( #1 \right)}
\newcommand{\sparenv}[1]{\left[ #1 \right]}

\newcommand{\bal}[1]{\begin{align}\label{#1}}
\newcommand{\eal}{\end{align}}

\renewcommand{\le}{\leqslant}
\renewcommand{\leq}{\leqslant}
\renewcommand{\ge}{\geqslant}
\renewcommand{\geq}{\geqslant}

\theoremstyle{definition}
\newtheorem{definition}{Definition}[section]
\newtheorem{problem}{Problem}

\usepackage{comment}

\definecolor{revisionyellow}{RGB}{180,140,0}

\title{The Coverage Depth Problem in Distributed DNA Data Storage}
\author{Xiangliang~Kong, Ohad~Elishco, Chen~Wang, and Tolga~M.~Duman%
\thanks{This work was supported in part by the National Key R\&D Program of China under Grant 2025YFA1017200 and by the ERC Advanced Grant 101054904: TRANCIDS. Views and opinions expressed are however those of the authors only and do not necessarily reflect those of the European Union or the European Research Council Executive Agency. Neither the European Union nor the granting authority can be held responsible for them. The research of C. Wang was supported in part at the Technion by a fellowship from the Lady Davis Foundation.} 
\thanks{This research was partially supported by the Israel Science Foundation (Grant No. 1789/23).}
\thanks{X. Kong (rongxlkong@gmail.com) is with the State Key Laboratory of Mathematical Sciences, Academy of Mathematics and Systems Science, Chinese Academy of Sciences, Beijing 100190, China. O. Elishco (elishco@gmail.com) is with the School of Electrical and Computer Engineering, Ben-Gurion University of the Negev, Beer Sheva 8410501, Israel. C. Wang (cwang@campus.technion.ac.il) is with the Department of Computer Science, Technion -- Israel Institute of Technology, Haifa 3200003, Israel. Tolga M. Duman (duman@ee.bilkent.edu.tr) is with the Department of Electrical and Electronics Engineering, Bilkent University, Ankara 06800, Turkey.}
}
\date{\today}

\begin{document}

\maketitle

\begin{abstract}
Random sampling in DNA sequencing produces repeated reads, increasing retrieval latency and sequencing cost. We study the coverage-depth problem for full-message recovery in distributed DNA storage under noiseless uniform sampling, where strands are partitioned among $M$ containers and one strand is independently sampled with replacement from each container per round. For arbitrary linear codes and ordered partitions, we derive exact formulas for the recovery-time distribution and expectation. We prove that MDS codes, whenever they exist, are optimal for every fixed partition, and establish a universal lower bound on the expected total read cost together with its equality conditions. For MDS codes, we identify container-size regimes that yield genuine savings in total reads and regimes that provide only parallelism without changing the asymptotic sequencing cost. For simplex codes, we prove that the $q$-ary simplex code is, up to isomorphism, the unique single-container minimizer among codes with the same parameters, resolving a recent conjecture by Bertuzzo, Ravagnani, and Yaakobi. We further construct a partition attaining the minimum total read cost and derive bounds for intermediate and balanced partitions. These results clarify when distributed sampling reduces latency alone and when it also reduces sequencing cost.
\end{abstract}

\section{Introduction}
\label{sec:introduction}

DNA-based data storage has emerged as a promising technology for long-term digital archiving. DNA offers extremely high information density, long-term chemical stability, and low maintenance requirements, and a number of proof-of-concept systems have demonstrated the feasibility of storing and retrieving digital information using synthetic DNA molecules \cite{ChurchEtAl12,GoldmanEtAl13,YazdiEtAl15,OrganickEtAl18}. The coding-theoretic and information-theoretic challenges arising in such systems have consequently attracted much attention, for example,  see the surveys \cite{ShomoronyHeckel22,SabaryEtAl24} and the references therein.

A typical DNA storage pipeline consists of encoding, synthesis, physical storage, sequencing, and decoding. The user data are first encoded into a collection of short DNA sequences, commonly referred to as \emph{oligos} or \emph{strands}. These designed strands are synthesized, amplified into many physical copies, and stored together in one or more containers. To retrieve the information, DNA molecules are sampled and sequenced, producing an unordered multiset of reads. Indices embedded in the strands may identify the designed strand from which a read originates, but they do not make the molecules directly addressable: the sequencing process still samples molecules from the physical pool, and repeated observations of the same designed strand are unavoidable.

This random-sampling feature distinguishes DNA storage from conventional storage devices. Even in an idealized noiseless model, a sequencer may repeatedly return strands that have already been observed, while other strands remain unseen. Consequently, the number of reads required for decoding can be substantially larger than the number of information strands from the  message. In physical terms, the \emph{coverage depth} is the number of sequenced reads normalized by the number of designed strands \cite{HeckelEtAl19}. Since the number of designed strands is fixed for a given code, minimizing coverage depth is equivalent to minimizing the total number of reads. The latter is directly related to both sequencing cost and recovery latency.

The resulting sampling problem is closely connected to the classical coupon collector problem \cite{ErdosRenyi61,Feller68,FlajoletEtAl92,BonehHofri97}. In the uncoded setting, every designed strand is a coupon type, and decoding requires observing all coupon types. Error-correcting codes change the stopping condition: rather than observing every coded strand, it is enough to observe a subset that determines the original message. Thus, coding replaces the coupon collector's ``collect all coupons'' criterion by a decodability criterion.

More precisely, consider a $k$-dimensional linear code $\mathcal C\subseteq\mathbb F_q^n$ with a rank-$k$ generator matrix $\mathbf G$. Each column of $\mathbf G$ corresponds to one coded DNA strand. In the single-container noiseless model, a read selects one column uniformly at random with replacement, and the message can be fully recovered when the sampled columns span $\mathbb F_q^k$. Bar-Lev \emph{et al.}~\cite{BSGY25} initiated the systematic theoretical study of this stopping time, referred to as the \emph{coverage-depth problem in DNA data storage}. They investigated its dependence on the sampling distribution, the outer code, and the strand-retrieval procedure. In particular, for uniform noiseless sampling, they proved that an $[n,k]_q$ MDS code, whenever it exists, minimizes the expected number of reads among $[n,k]_q$ linear codes. 


\subsection{Related Work}

Several extensions of the coverage-depth problem in DNA data storage have recently been investigated. Besides full-message recovery, Bar-Lev \emph{et al.}~\cite{BSGY25} introduced the singleton random-access problem, in which the user requests one prescribed information strand rather than the entire message. 
The combinatorial and geometric aspects of random access have subsequently been studied in \cite{GBRY25,GruicaMontanucciZullo24,BoruchovskyEtAl25}, which showed that code optimality for full-message recovery does not imply optimality for random-access objectives.

Other related models include recovering groups of strands representing a common information unit \cite{AbrahamEtAl24}, optimizing the trade-off between synthesis and sequencing costs through practical coding architectures \cite{ChandakEtAl19}, analyzing sequencing coverage in combinatorial shortmer-based storage systems \cite{PreussEtAl24}, and studying coverage and decoding probabilities for composite DNA letters \cite{CohenYaakobi24}. These models modify either the physical representation of information or the decoding objective, and therefore lead to stopping rules different from the full-rank condition considered in this paper.

When MDS codes do not exist over the prescribed field, the performance of structured small-field codes becomes a central question. Bertuzzo, Ravagnani, and Yaakobi~\cite{BRY26} developed algebraic and combinatorial methods for computing the single-container coverage depth through information-set enumeration, duality and weight distributions. They obtained explicit formulas for several classical code families, including simplex, Hamming, ternary Golay, extended ternary Golay, and first-order Reed--Muller codes. Based on the resulting formula for simplex codes and computational evidence, they conjectured that the $q$-ary simplex code minimizes the coverage depth among all codes having the same length and dimension. One of the results of the present paper resolves this conjecture in the affirmative and establishes the corresponding uniqueness statement.

\subsection{Distributed Full-Message Recovery}

Physical separation of coded data across multiple storage units is a standard mechanism for providing parallel access and fault tolerance. In classical distributed storage systems, a file is encoded across several nodes, and a central question is how much information must be downloaded from the surviving nodes to repair a failed one. The repair-bandwidth problem was introduced by Dimakis \emph{et al.}~\cite{DimakisEtAl10} and has since developed into an extensive theory of regenerating and related storage codes; see \cite{RamkumarEtAl22} for a survey.

DNA containers, however, are passive physical objects rather than computational nodes. A container cannot evaluate a prescribed function of its contents and transmit the result. Instead, information is obtained by sequencing, and every read returns a randomly sampled strand from the container. Thus, even if the same erasure code is used, the download process in DNA storage is fundamentally different from that in a conventional distributed storage system.

Levy, Con, Yaakobi, and Kiah~\cite{LCYK26} recently initiated the study of DNA-based distributed storage from the viewpoint of container failures. Their goal is to reconstruct the contents of a failed container by sequencing the surviving containers, and they analyzed the associated recovery time for scalar MDS and MDS array codes.

The objective of the present paper is complementary. We consider the problem of recovering the \emph{entire message} when the coded strands are distributed among several containers. This question arises even in the absence of container failures and remains necessary after a failed container has been repaired: the ultimate readout operation must still accumulate enough information to decode the complete message. The two problems also have different stopping rules. Failed-container recovery asks whether a prescribed missing block is determined by the observations from the surviving containers, whereas full-message recovery asks whether all observed coded strands jointly span the entire $k$-dimensional message space.

Specifically, we partition the $n$ columns of $\mathbf G$ into $M$ labeled containers according to an ordered partition $\mathcal P$. In every sequencing round, one column is sampled independently and uniformly, with replacement across rounds, from each container. Let $\tau(\mathbf G,\mathcal P)$ be the first round in which the sampled columns span $\mathbb F_q^k$. Since parallel sequencing can reduce latency without increasing the total number of reads, we distinguish two performance metrics and consider them separately: the expected recovery time in rounds, $\mathbb E[\tau(\mathbf G,\mathcal P)]$, and the expected total number of reads, $M\mathbb E[\tau(\mathbf G,\mathcal P)]$.

This formulation gives rise to the \emph{distributed coverage depth problem}, which consists of the following three interconnected questions: (1) For a fixed ordered $M$-partition, which code minimizes the expected recovery time? (2) For a fixed code and fixed $M$, how to partition its columns to achieve the smallest expected recovery time? (3) Under practical capacity constraints (such as balanced container sizes), which codes have a genuine reduction in the total sequencing cost (total number of reads), rather than only a reduction in latency (measured in recovery rounds)?

In this paper, we address the three questions listed above. We study the expected recovery time for both general linear codes and specific code families, providing explicit estimations of the expected recovery rounds. Furthermore, we analyze the extent to which single-container overhead can be reduced in distributed settings.


\subsection{Main Contributions}

The principal contributions of this work are summarized below.

\begin{itemize}

\item By decomposing the recovery process into successive stages and using $q$-binomial M\"obius inversion on the subspaces of $\mathbb{F}_q^k$, we derive an exact formula for $\mathbb E[\tau(\mathbf{G},\mathcal{P})]$ for an arbitrary rank-$k$ linear code with generator matrix $\mathbf{G}$ and an arbitrary ordered partition $\mathcal{P}$.

\item We show that whenever an $[n,k]_q$ MDS code exists, it is optimal for every fixed partition, both in terms of expected recovery time and total read cost. Furthermore, we establish a general lower bound on the expected number of recovery rounds, hence on the expected total number of reads, and we characterize exactly when $\tau(\bbG,\cP)=1$ a.s. In particular, for $M=k$ the bound is attained by the staircase partition of any linear code, so that $\min_{M,\cP} M\E[\tau(\bbG,\cP)]=k$ for every $[n,k]_q$ code. We then exhibit one-round-perfect layouts with additional structure for MDS, Hamming, first-order Reed-Muller and simplex codes. 

\item We consider the distributed coverage depth problem for MDS codes. When every container has a fixed size $s$, we identify explicit parameter regimes in which the recovery time concentrates on an integer $r$. Within part of this regime, the resulting total read cost is lower than that of the single-container case by a quantity linear in $k$. For containers of size $k^\varepsilon$ with $0<\varepsilon< 1$, we derive asymptotic formulas for both the expected recovery time and the total number of reads, showing that distribution provides parallel access without changing the asymptotic sequencing cost.

\item We also consider the distributed coverage depth problem for simplex codes. We first show that the $q$-ary simplex code is, up to isomorphism, the unique length-$[k]_q$, dimension-$k$ code that minimizes the single-container coverage depth, thereby resolving \cite[Conjecture~3.2]{BRY26}. Then, we provide an explicit $k$-partition that achieves the optimal total number of reads for recovery. Moreover, for $1<M<k$ we show that an intermediate number of containers cannot reduce the total read cost whenever $M\nmid k$ and $M\nmid k+1$, and also whenever $M\mid k+1$ provided $q\geq 3$ or $k=3$. The remaining regimes, namely $M\mid k$ for every $q$, and $M\mid k+1$ for $q=2$ and $k\geq 4$, are settled for cases $(q,k,M)=(2,5,2)$ and $(3,4,2)$.
\end{itemize}

Taken together, these results reveal that distributed sampling has two fundamentally distinct effects: parallelism reduces recovery latency, whereas a reduction in total sequencing cost requires the layout $(\mathbf{G},\mathcal{P})$ to suppress repeated or rank-deficient observations. The former effect is largely architectural; the latter depends delicately on the interaction between the algebraic properties of the code and the combinatorial structure of the partition.

\subsection{Organization}

The remainder of this paper is organized as follows. Section~\ref{sec: problem setting} introduces the distributed coverage depth model along with its metrics and foundational notions; it derives exact formulas for $\mathbb E[\tau(\mathbf G,\mathcal P)]$, proves the optimality of MDS codes, and develops general lower bounds on the total number of reads and the theory of one-round-perfect layouts. Section~\ref{sec: MDS distributed coverage} specializes this framework to MDS codes and investigates the associated distributed coupon-collector problem in constant-size, polynomial-size, and two-container regimes. Section~\ref{sec: simplex distributed coverage} focuses on simplex codes, establishing their unique single-container optimality and analyzing staircase, balanced, and random balanced layouts. Finally, Section~\ref{sec:conclusion} concludes the paper and outlines several open questions.

\section{Problem Setting and General Results}\label{sec: problem setting}

\subsection{Distributed Coverage-Depth Problem for DNA Storage}\label{subsec: problem statement}


Let $q$ be a prime power, $\Fq$ be the field of order $q$, and $k\ge2$. For two spaces $U$ and $V$, we write $U\le V$ if $U$ is a subspace of $V$, and $U\lneq V$ if $U$ is a proper subspace of $V$. 

We first recall the coverage depth problem introduced in \cite{BSGY25}. 

\begin{problem}[The DNA data coverage depth problem, \cite{BSGY25}]\label{def: coverage depth}
Let $\mathcal{C}\subseteq \mathbb{F}_q^{n}$ be a $k$-dimensional code of length $n$ defined by a rank-$k$ generator matrix $\mathbf{G} \in \mathbb{F}_q^{k \times n}$ with no zero columns, whose columns $\mathbf{g}_1, \dots, \mathbf{g}_n$ form a multiset of $n$ vectors. Suppose we draw columns independently and uniformly with replacement. Let
\begin{equation}\label{eq: def CD-matrix}
\tau \triangleq \min\left\{t \in \mathbb{Z}^+ : \dim(\spn( \mathbf{g}_{j_1}, \dots, \mathbf{g}_{j_t})) = k\right\},
\end{equation}
where $\mathbf{g}_{j_1}, \dots, \mathbf{g}_{j_t}$ are the columns drawn in rounds $1, \dots, t$. Then, the coverage depth problem is to determine the value of $\mathbb{E}[\tau]$.
\end{problem}

Next, we define the \textbf{coverage depth problem in distributed DNA data storage}. 

Given $k$ different information DNA strands $(x_1,x_2,\dots,x_k)$, they are encoded into $n$ coded DNA strands $(c_1,c_2,\ldots,c_n)$ using a code $\mathcal{C}$ with generator matrix $\mathbf{G}\in \mathbb{F}_q^{k\times n}$, and stored in $M$ different containers according to an ordered $M$-partition $\mathcal{P}=(P_1,P_2,\ldots,P_M)$ of $[n]$ with non-empty parts. Unless explicitly stated otherwise, $k\ge 2$, the generator matrix $\mathbf{G}$ has rank $k$ and no zero columns. Moreover, repeated columns and nonzero scalar multiples of columns are allowed and are counted with their coordinate multiplicities.

Write $\mathbf{G}$ in column-wise form as
$$
\mathbf{G}=(\mathbf{g}_1,\mathbf{g}_2,\ldots,\mathbf{g}_{n}),
$$
where $\mathbf{g}_i\in \mathbb{F}_q^{k}$ denotes the $i$-th column. Let $n_i=|P_i|$ for each $1\le i\le M$, so that $\sum_{i\in[M]} n_i=n$. For each $i\in[M]$, define
$$
\mathbf{G}_i\triangleq \mathbf{G}|_{P_i},
$$
where $\mathbf{G}|_{P_i}$ denotes the $k\times n_i$ submatrix of $\mathbf{G}$ obtained by restricting $\mathbf{G}$ to the columns indexed by $P_i$. Then, 
$$
\mathbf{c}_i=(x_1,\ldots,x_k)\cdot \mathbf{G}_{i}=(c_{i,1},\ldots,c_{i,n_i})
$$
denotes the vector of coded strands stored in the $i$-th container.

\begin{problem}[The Coverage Depth Problem in Distributed DNA Data Storage]
Let the code $\mathcal{C}$ with generator matrix $\mathbf{G}$ and the ordered partition $\mathcal{P}$ be as defined above. In each round, one column is drawn independently and uniformly at random with replacement from each submatrix $\mathbf{G}_i$. For each $i \in [M]$, let $j_{i,t}$ denote the index of the column drawn from the $i$-th container at round $t$, and define the set of distinct column indices observed by the end of round $t$ as
\begin{equation}\label{eq: column index set by round t}
    J_t \triangleq \{j_{i,s} : i \in [M],\, 1 \le s \le t\},
\end{equation}
with $J_0 \triangleq \varnothing$. Let 
\begin{equation}\label{eq: column space by round t}
    V_t \triangleq \spn\{\mathbf{g}_j : j \in J_t\}.
\end{equation}
with $V_0 \triangleq \{\mathbf{0}\}$. Define the stopping time
\begin{equation}\label{eq:def_tau_G_P}
    \tau(\mathbf{G},\mathcal{P}) \triangleq \min\left\{
        t \in \mathbb{Z}^+ :
        \mathrm{span}\left( \{\mathbf{g}_j : j \in J_t\} \right) = \mathbb{F}_q^k
    \right\}.
\end{equation}
The distributed coverage depth problem associated with the pair $(\mathbf{G},\mathcal{P})$ is then to determine the expected value $\mathbb{E}\left[\tau(\mathbf{G},\mathcal{P})\right]$.
\end{problem}

\begin{remark}\label{rmk: synchronous}
The model is synchronous: exactly one read is taken from every container in every round, and the final round is charged $M$ reads even if a single read would have sufficed to complete the span. Hence, $M\E[\tau(\bbG,\cP)]\geq M\ceilenv{k/M}$, and part of the measure overhead is an artifact of this accounting rather than of the sampling. An asynchronous variant, in which a total read budget is allocated is discussed as an open problem in Section~\ref{sec:conclusion}.  
\end{remark}

Throughout the paper, for the single-container layout $\mathcal{P}^{(1)} \triangleq ([n])$, we write $\tau^{(1)}(\mathbf{G}) \triangleq \tau(\mathbf{G},\mathcal{P}^{(1)})$ for simplicity. Moreover, by~\cite[Proposition 2.1]{BRY26}, for any linear code $\mathcal{C}$, the expectation $\mathbb{E}[\tau^{(1)}(\mathbf{G})]$ is invariant to the choice of the generator matrix $\mathbf{G}$. Hence, we will use the notations $\mathbb{E}[\tau^{(1)}(\mathbf{G})]$ and $\mathbb{E}[\mathcal{C}]$ interchangeably.

Note that in DNA storage, the sequencing cost depends primarily on the total number of reads required during the information recovery process. In the single-container model ($M=1$), where one read is performed per round, the total number of reads simply equals the number of rounds needed for recovery. However, in the distributed model with $M \ge 2$, this value equals the number of containers times the number of rounds. In this paper, beyond determining $\mathbb{E}\left[\tau(\mathbf{G},\mathcal{P})\right]$ for various codes and partitions, we also evaluate the expected total number of reads, $M\mathbb{E}\left[\tau(\mathbf{G},\mathcal{P})\right]$, during the recovery process.

According to its definition, the ordered partition $\mathcal{P}$ can exert a substantial influence on the value of $\mathbb{E}[\tau(\mathbf{G},\mathcal{P})]$ for fixed $\mathbf{G}$ and $M$. Moreover, among all possible partitions, the case where all parts have the same size is of particular interest to storage system engineers, since the corresponding containers consequently have identical capacities and can be treated uniformly in the system. We call such a partition \emph{balanced} and present its formal definition as follows.

\begin{definition}[Balanced $M$-Partition]\label{def:balanced-layout-unified}
An ordered $M$-partition $\mathcal{P}=(P_1, \dots, P_M)$ is called \emph{balanced} if $|P_1| = \cdots = |P_M| = n/M$, which requires that $M \mid n$.
\end{definition}

Note that $\tau(\mathbf{G},\mathcal{P}) \ge 1$ holds for every rank-$k$ matrix $\mathbf{G}$ and every ordered $M$-partition $\mathcal{P}$ for any integer $M$. To characterize one of the extremal cases where $\tau(\mathbf{G},\mathcal{P}) = 1$ a.s. (almost surely), we introduce the following notion.

\begin{definition}[One-Round Decodability and One-Round Perfection]
\label{def:one-round-perfect-unified}
A coded layout $(\mathbf{G},\mathcal{P})$, where $\mathbf{G}$ is a generator matrix of an $[n,k]_q$ linear code and $\mathcal{P}$ is an ordered $M$-partition of $[n]$, is called \emph{one-round-decodable} if every $M$-tuple $(\mathbf{g}_{i_1}, \mathbf{g}_{i_2}, \dots, \mathbf{g}_{i_M})$ with ${i_j} \in P_j$ for all $j \in [M]$ spans $\mathbb{F}_q^k$, or equivalently, if $\tau(\mathbf{G},\mathcal{P}) = 1$ a.s. If, in addition, $M = k$, then the layout $(\mathbf{G},\mathcal{P})$ (or simply $\mathcal{P}$, when $\mathbf{G}$ is clear from context) is called \emph{one-round-perfect}.
\end{definition}

To characterize the advantages of distributed storage, we introduce the following notions.

\begin{definition}[Overhead and Distributed Saving]
\label{def:overhead-saving-unified}
We define the single-container recovery overhead for an $[n,k]_q$ linear code $\mathcal{C}$ with generator matrix $\mathbf{G}$ as
\begin{equation}\label{eq:def-overhead}
    \mathrm{Ov}(\mathcal{C}) \triangleq \mathbb{E}[\mathcal{C}] - k.
\end{equation}
The saving of an $M$-container layout $(\mathbf{G},\mathcal{P})$ relative to the single-container model, in terms of the expected total number of reads, is defined as $\mathbb{E}[\tau^{(1)}(\mathbf{G})] - M\mathbb{E}[\tau(\mathbf{G},\mathcal{P})]$. The layout achieves \emph{full saving} if $\mathbb{E}[\tau^{(1)}(\mathbf{G})] - M\mathbb{E}[\tau(\mathbf{G},\mathcal{P})] = \mathrm{Ov}(\mathcal{C})$, or equivalently, if $M\mathbb{E}[\tau(\mathbf{G},\mathcal{P})] = k$.
\end{definition}

\subsection{Exact Formulas for Distributed Coverage Depth}\label{subsec: exact formula}

For an integer $m\ge 0$, write $[m]_q = \frac{q^m-1}{q-1} = \sum_{i=1}^{m}q^{i-1}$ (with $[0]_q = 0$) for the number of points in the $(m-1)$-dimensional projective space $\mathrm{PG}(m-1,q)$, and write
$$
\begin{bmatrix} m \\ j \end{bmatrix}_q = \prod_{i=0}^{j-1} \frac{q^{m-i}-1}{q^{j-i}-1}
$$
for the number of $j$-dimensional subspaces of $\mathbb{F}_q^m$. Let $\mathbb{P} \triangleq \mathrm{PG}(k-1,q)$ denote the set of projective points in $\mathbb{F}_q^k$.

For a subspace $U \leq \mathbb{F}_q^{k}$, a generator matrix $\mathbf{G}\in \mathbb{F}_q^{k\times n}$, and an ordered $M$-partition $\mathcal{P}$ of $[n]$, let $m_{i}^{U}=|\mathbf{G}_i\cap U|$ for each $i\in [M]$, and define
\begin{equation}\label{eq:rho}
\rho_U=\prod_{i=1}^{M}\frac{m_i^{U}}{n_i},
\end{equation}
where $\mathbf{G}_i\cap U$ denotes the set of columns in $\mathbf{G}_i$ that lie in $U$. Because columns are drawn independently across different containers and uniformly within each container, $\rho_U$ is the probability that all $M$ columns sampled in a single round lie in $U$. Moreover, when $M=1$, we simply write $\pi(U)=m_1^{U}/n$ instead of $\rho_U$.


We next establish the exact formula for $\mathbb{E}[\tau(\mathbf{G},\mathcal{P})]$ given general $\mathbf{G}$ and $\mathcal{P}$. To this end, we first present the following lemma, which is a standard consequence of the M\"obius inversion formula.

\begin{lemma}\label{lem:qie}
For any $A\subseteq\Fq^{k}$, we have
$$
\sum_{c\ge1}(-1)^{c-1}q^{\binom{c}{2}}\bigl|\{U\le \Fq^k:\codim(U)=c,\ A\subseteq U\}\bigr| = \bone_{[\spn(A)\neq\Fq^{k}]}.
$$
\end{lemma}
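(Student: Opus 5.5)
We first reduce to subspaces. A subspace $U$ contains $A$ if and only if it contains $W\triangleq\spn(A)$. So the left-hand side depends only on $W$. Put $d=\dim W$, so that $\codim(W)=k-d$. The subspaces $U$ with $W\le U\le \Fq^k$ and $\codim(U)=c$ are in bijection, via $U\mapsto U/W$, with the subspaces of codimension $c$ in $\Fq^k/W\cong\Fq^{k-d}$. Their number is $\qbin{k-d}{c}$, which is zero for $c>k-d$. Writing $m=k-d$, the claim therefore becomes the identity
$$
\sum_{c=1}^{m}(-1)^{c-1}q^{\binom{c}{2}}\qbin{m}{c}=\bone_{[m\ge 1]}.
$$
When $m=0$, that is, when $A$ spans $\Fq^k$, the sum is empty and equals $0$, as required.

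For $m\ge1$, we use the $q$-binomial theorem (Gaussian/Cauchy binomial formula):
$$
\prod_{i=0}^{m-1}(1+q^{i}x)=\sum_{c=0}^{m}q^{\binom{c}{2}}\qbin{m}{c}x^c.
$$
Setting $x=-1$ makes the factor with $i=0$ vanish, so
$$
\sum_{c=0}^{m}(-1)^cq^{\binom{c}{2}}\qbin{m}{c}=0.
$$
Separating the $c=0$ term, which equals $1$, and multiplying through by $-1$ gives exactly $\sum_{c\ge1}(-1)^{c-1}q^{\binom{c}{2}}\qbin{m}{c}=1$.

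Alternatively, in the spirit of the paper's reference to M\"obius inversion, one can argue through the subspace lattice. The M\"obius function of the lattice of subspaces of $\Fq^m$ satisfies $\mu(X,Y)=(-1)^{j}q^{\binom{j}{2}}$ when $\dim Y-\dim X=j$. The defining relation $\sum_{X\le Y\le Z}\mu(X,Y)=0$ for $X\lneq Z$, applied in the dual (codimension) form over the interval $[W,\Fq^k]$, gives the same identity.

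There is no serious obstacle. The only points needing care are checking that the quotient correspondence preserves codimension, and correctly recalling the $q$-binomial theorem with the factor $q^{\binom{c}{2}}$. If one prefers to avoid citing that theorem, it can be proved by induction on $m$, using the $q$-Pascal rule $\qbin{m}{c}=\qbin{m-1}{c-1}+q^{c}\qbin{m-1}{c}$.
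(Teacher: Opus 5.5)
Your proof is correct and follows essentially the same route as the paper. You pass to the quotient $\Fq^k/\spn(A)$ to count $\qbin{k-a}{c}$ codimension-$c$ subspaces, then evaluate the $q$-binomial theorem at $x=-1$, so that the $i=0$ factor kills the product whenever $A$ does not span. The only cosmetic difference is that you treat the spanning case as an empty sum directly, whereas the paper keeps the $c=0$ term, uses the empty product $1$, and then subtracts it.
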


\begin{IEEEproof}
Let $a=\dim(\spn(A))$. Clearly, a subspace $U$ contains $A$ if and only if it contains $\spn(A)$.

Instead of counting in the space $\Fq^k$, we count in the quotient space $\Fq^k/\spn(A)$, which has dimension $k-a$. The subspaces of $\Fq^k$ of codimension $c$ that contain $\spn(A)$ are in one-to-one correspondence with the codimension-$c$ subspaces of this quotient space. The number of such subspaces is $\qbin{k-a}{c}$. Therefore, the sum in the statement can be rewritten as
\begin{equation}\label{eq:qie1}
    \sum_{c\ge 1}
    (-1)^{c-1}q^{\binom{c}{2}}
    \bigl|\{U\le \Fq^k:\codim(U)=c,\ A\subseteq U\}\bigr|
    =
    \sum_{c\ge 1}
    (-1)^{c-1}q^{\binom{c}{2}}
    \qbin{k-a}{c}.
\end{equation}

To evaluate the right-hand side, we use the $q$-binomial theorem
\cite[Chapter 3]{JM17} in the form
$$
\sum_{c\ge0}(-1)^{c}q^{\binom{c}{2}}\qbin{k-a}{c} = \prod_{i=0}^{k-a-1}(1-q^{i}).
$$
We now evaluate this product according to whether $A$ spans the full space.

\begin{itemize}
    \item If $a<k$, i.e., $\spn(A)\neq\Fq^k$, then the factor corresponding to
    $i=0$ appears in the product. Since $1-q^0=0$, the product is zero.

    \item If $a=k$, i.e., $\spn(A)=\Fq^k$, then the product is empty and hence
    equals $1$.
\end{itemize}
Thus,
$$
\sum_{c\ge0}(-1)^{c}q^{\binom{c}{2}}\qbin{k-a}{c} = \bone_{[\spn(A)=\Fq^k]}.
$$
Separating the $c=0$ term, which equals $1$, and subtracting it from both sides of the above equation gives
$$
\sum_{c\ge1} (-1)^{c}q^{\binom{c}{2}} \qbin{k-a}{c} = \bone_{[\spn(A)=\Fq^k]}-1 = -\bone_{[\spn(A)\neq\Fq^k]}.
$$
Multiplying both sides by $-1$ and using \eqref{eq:qie1} proves the claim.
\end{IEEEproof}

Using Lemma \ref{lem:qie}, we now present the exact formula of $\E[\tau(\mathbf{G},\mathcal{P})]$ 

\begin{theorem}\label{thm:distributed-exact-representations}
Let $V_t$ be the space defined in \eqref{eq: column space by round t}, i.e., the space spanned by the columns drawn by round $t$.  Let $\{0\}=\mathcal{V}_0 \lneq \mathcal{V}_1 \lneq \cdots \lneq \mathcal{V}_L=\mathbb F_q^k$ be the chain of subspaces obtained from $(V_t)_{0\le t\le \tau(\mathbf{G},\mathcal P)}$ by deleting repeated consecutive subspaces. Then, $L\le k$ a.s. and
\begin{equation}\label{eq:epoch-representation}
    \mathbb E[\tau(\mathbf{G},\mathcal P)]
    =
    \mathbb E\left[
        \sum_{\ell=0}^{L-1}
        \frac{1}{1-\rho_{\mathcal V_\ell}}
    \right].
\end{equation}
Moreover, for every integer $t\ge 0$,
\begin{equation}
\label{eq:mobius-tail-formula}
    \Pr\bigl(\tau(\mathbf{G},\mathcal P)>t\bigr)
    =
    \sum_{U\lneq\mathbb F_q^k}
    (-1)^{\mathrm{codim}(U)-1}
    q^{\binom{\mathrm{codim}(U)}{2}}
    \rho_U^t,
\end{equation}
where we set $x^0=1$ when $x=0$. Consequently,
\begin{equation}
\label{eq:mobius-expectation-formula}
    \mathbb E[\tau(\mathbf{G},\mathcal P)]
    =
    \sum_{U\lneq\mathbb F_q^k}
    (-1)^{\mathrm{codim}(U)-1} q^{\binom{\mathrm{codim}(U)}{2}}
    \frac{1}{1-\rho_U}.
\end{equation}
\end{theorem}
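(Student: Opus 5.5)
The proof has three parts: the epoch representation \eqref{eq:epoch-representation}, the tail formula \eqref{eq:mobius-tail-formula}, and then the expectation formula \eqref{eq:mobius-expectation-formula} by summing tails.

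\textbf{Epoch representation.} The bound $L\le k$ is immediate. Each step of the chain $\mathcal V_0\lneq\cdots\lneq\mathcal V_L$ raises the dimension by at least one, and the dimensions run from $0$ to $k$.

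For the formula, I would use the strong Markov property of the process $(V_t)$. Rounds are i.i.d., and $V_{t+1}=V_t+\spn(\text{columns drawn in round }t+1)$, so $(V_t)$ is a Markov chain on subspaces. Suppose the chain has just entered the state $\mathcal V_\ell\neq\mathbb F_q^k$. From then on, each round independently leaves the space unchanged exactly when all $M$ sampled columns lie in $\mathcal V_\ell$, which happens with probability $\rho_{\mathcal V_\ell}$. Hence, conditionally on the history up to entering $\mathcal V_\ell$, the number $T_\ell$ of rounds spent before leaving $\mathcal V_\ell$ (including the leaving round) is geometric with success probability $1-\rho_{\mathcal V_\ell}$.

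We need $\rho_{\mathcal V_\ell}<1$. If $\rho_{\mathcal V_\ell}=1$, every column would lie in $\mathcal V_\ell$, contradicting $\rank\bbG=k$. Therefore $\E[T_\ell\mid\mathcal F]=1/(1-\rho_{\mathcal V_\ell})$, where $\mathcal F$ is the history up to entering $\mathcal V_\ell$.

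Writing $\tau=\sum_{\ell=0}^{L-1}T_\ell$, we can treat the random number of terms by writing $\tau=\sum_{\ell=0}^{k-1}T_\ell\bone_{[\ell<L]}$. The event $\{\ell<L\}$ is measurable with respect to the history at entry to epoch $\ell$. Conditioning term by term and using the tower property then gives \eqref{eq:epoch-representation}.

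\textbf{Tail formula.} Fix $t\ge0$ and apply Lemma~\ref{lem:qie} with $A=\{\mathbf g_j: j\in J_t\}$, so that $\spn(A)=V_t$. This gives
\[
\bone_{[\tau>t]}=\bone_{[V_t\neq\mathbb F_q^k]}=\sum_{U\lneq\mathbb F_q^k}(-1)^{\codim(U)-1}q^{\binom{\codim(U)}{2}}\bone_{[A\subseteq U]}.
\]
Take expectations. The event $A\subseteq U$ means that every column drawn in each of the $t$ rounds lies in $U$. By independence across containers and rounds, its probability is $\rho_U^t$; for $t=0$ this is $1$, consistent with the convention $0^0=1$. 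This yields \eqref{eq:mobius-tail-formula}.

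\textbf{Expectation formula.} We have $\E[\tau]=\sum_{t\ge0}\Pr(\tau>t)$. Substituting the finite sum \eqref{eq:mobius-tail-formula} and interchanging the two sums, the geometric series gives $\sum_{t\ge0}\rho_U^t=1/(1-\rho_U)$.

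The one point needing care is justifying this interchange, since the terms alternate in sign. There are only finitely many $U$, and every proper $U$ has $\rho_U<1$, by the same rank argument as above (a column outside $U$ exists, so some factor $m_i^U/n_i<1$). Each inner series therefore converges absolutely, and exchanging a finite sum with convergent series is legitimate. This gives \eqref{eq:mobius-expectation-formula}.

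I expect the main obstacle to be the rigorous treatment of the epoch decomposition with its random number of epochs. The tower-property argument with indicators handles it cleanly.
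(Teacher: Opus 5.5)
Your proposal is correct and follows essentially the same route as the paper: geometric holding times at each distinct subspace combined with the tower property for \eqref{eq:epoch-representation}, a pathwise application of Lemma~\ref{lem:qie} with $\Pr(A_t\subseteq U)=\rho_U^t$ for the tail formula, and tail summation over the finite subspace lattice (using $\rho_U<1$ from $\rank(\mathbf{G})=k$) for the expectation. Your indicator decomposition $\tau=\sum_{\ell=0}^{k-1}T_\ell\bone_{[\ell<L]}$ is a slightly more explicit version of the paper's tower-property step, not a different argument.
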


\begin{IEEEproof}
We first note that $\rho_U<1$ for every $U\lneq\mathbb F_q^k$. Indeed, if $\rho_U=1$, then every factor $m_i(U)/n_i$ must equal one. This means every column stored in every container lies in $U$, contradicting $\mathrm{rank}(\mathbf{G})=k$.

Let $\mathcal F_t$ be the $\sigma$-algebra generated by the samples collected in the first $t$ rounds. Set $\sigma_0\triangleq 0$, and, as long as $V_{\sigma_\ell}\ne\mathbb F_q^k$, define $\sigma_{\ell+1}\triangleq \min\{t>\sigma_\ell: V_t\supsetneq V_{\sigma_\ell}\}$. Then $\mathcal{V}_\ell=V_{\sigma_\ell}$, and for $0\le \ell<L$, the holding time at $\mathcal V_\ell$ is denoted as $H_\ell\triangleq \sigma_{\ell+1}-\sigma_\ell$. Since $\rho_{U}<1$, the sampled columns will leave $U$ in finite time a.s. Hence, the chain of subspaces will reach $\mathbb{F}_q^k$ in finite time a.s. Moreover, since the span increases strictly between consecutive subspaces in the chain, we have $L\le k$ a.s.

Fix $0\le \ell<L$ and condition on the history up to time $\sigma_\ell$. If $\mathcal V_\ell=U$, then the span remains equal to $U$ after the next round if and only if all $M$ newly sampled columns lie in $U$. This event has probability $\rho_U$. Since different rounds are independent, we have $\Pr\bigl(H_\ell=h\mid \mathcal F_{\sigma_\ell}\bigr) =\rho_U^{h-1}(1-\rho_U)$ for integer $h\geq 1$. Hence, we have
$$
\mathbb E\bigl[ H_\ell \mid \mathcal F_{\sigma_\ell} \bigr] = \frac{1}{1-\rho_{\mathcal V_\ell}}.
$$
Because $\tau(\mathbf{G},\mathcal P)=\sigma_L=\sum_{\ell=0}^{L-1}H_\ell$ and $L\le k$, the tower property yields
\[
    \mathbb E[\tau(\mathbf{G},\mathcal P)]
    =
    \mathbb E\left[
        \sum_{\ell=0}^{L-1}
        \frac{1}{1-\rho_{\mathcal V_\ell}}
    \right],
\]
which proves \eqref{eq:epoch-representation}.


For $t\ge 0$, let $A_t\triangleq \{\mathbf{g}_j:~j\in J_t\}$ and $A_0\triangleq \varnothing$, where $J_t$ is defined by \eqref{eq: column index set by round t}. Then, we have $\tau(\mathbf{G},\mathcal P)>t$ if and only if $\spn(A_t)\ne\mathbb F_q^k$. Applying Lemma~\ref{lem:qie} to each realization of $A_t$
gives
$$
\mathbf 1\{\spn(A_t)\ne\mathbb F_q^k \} = \sum_{U\lneq\mathbb F_q^k} (-1)^{\mathrm{codim}(U)-1}q^{\binom{\mathrm{codim}(U)}{2}} \mathbf 1\{A_t\subseteq U\}.
$$
For a fixed subspace $U$, all samples collected during one round lie in $U$ with probability $\rho_U$. Independence across rounds therefore implies $\Pr(A_t\subseteq U)=\rho_U^t$. Taking expectations in the preceding indicator identity proves
\eqref{eq:mobius-tail-formula}.

Finally, using the tail-sum formula and the fact that the subspace
lattice is finite, we obtain
\begin{align*}
    \mathbb E[\tau(\mathbf{G},\mathcal P)]
    &=
    \sum_{t=0}^{\infty}
    \Pr\bigl(\tau(\mathbf{G},\mathcal P)>t\bigr)\\
    &=
    \sum_{U\lneq\mathbb F_q^k}
    (-1)^{\mathrm{codim}(U)-1}
    q^{\binom{\mathrm{codim}(U)}{2}}
    \sum_{t=0}^{\infty}\rho_U^t\\
    &=
    \sum_{U\lneq\mathbb F_q^k}
    (-1)^{\mathrm{codim}(U)-1}
    q^{\binom{\mathrm{codim}(U)}{2}}
    \frac{1}{1-\rho_U}.
\end{align*}
This proves
\eqref{eq:mobius-expectation-formula}.
\end{IEEEproof}

For the single-container case, where $M=1$, by definition, we have $\pi(U)=\frac{|\mathbf{G}\cap U|}{n}=\Pr(A_1\subseteq U)$. Note that the dimension of the span of the drawn vectors can increase only by one in each round. Consequently, we have $L=k$ in \eqref{eq:epoch-representation}, which leads to the following immediate corollary.

\begin{corollary}\label{cor:single-container-stage-decomposition}
For $0\le i\le k$, let $V^{(i)}$ be the span of the sampled columns at the first time its dimension becomes $i$. Then
\begin{equation}
\label{eq:single-container-stage-formula}
    \E[\mathcal{C}] = \sum_{i=0}^{k-1}
    \mathbb E\left[
        \frac{1}{1-\pi(V^{(i)})}
    \right] =
    \sum_{U\lneq\mathbb F_q^k}
    (-1)^{\mathrm{codim}(U)-1}q^{\binom{\mathrm{codim}(U)}{2}}
    \frac{1}{1-\pi(U)}.
\end{equation}
\end{corollary}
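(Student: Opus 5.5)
The corollary is the specialization $M=1$ of Theorem~\ref{thm:distributed-exact-representations}, so the plan is to reuse both representations proved there and add only one observation: in the single-container model the chain of distinct spans has length exactly $k$.

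First, set $M=1$ and $\mathcal P=\mathcal P^{(1)}=([n])$. Then $\rho_U=m_1^U/n_1=|\mathbf G\cap U|/n=\pi(U)$ for every subspace $U$. Substituting $\rho_U=\pi(U)$ into \eqref{eq:mobius-expectation-formula} immediately gives the second equality in \eqref{eq:single-container-stage-formula}, because $\E[\mathcal C]=\E[\tau^{(1)}(\mathbf G)]$ by definition.

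Second, for the stage decomposition, note that with one container exactly one column is added to $J_t$ per round. Hence $\dim V_t-\dim V_{t-1}\in\{0,1\}$ for every $t$. Consider the deduplicated chain $\{0\}=\mathcal V_0\lneq\cdots\lneq\mathcal V_L=\Fq^k$ from the theorem:
\begin{itemize}
\item each strict inclusion raises the dimension by exactly one, since between consecutive distinct spans there is a single round that changes the span;
\item the chain starts at dimension $0$ and ends at dimension $k$.
\end{itemize}
It follows that $L=k$ almost surely and $\dim\mathcal V_i=i$. Moreover, $\mathcal V_i$ is exactly the span at the first time the dimension equals $i$, which is $V^{(i)}$. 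The case $i=0$ gives $V^{(0)}=\{0\}$ at time $0$.

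Third, plug these facts into \eqref{eq:epoch-representation}. Because $L=k$ is deterministic, the expectation of the finite sum splits by linearity, giving $\sum_{i=0}^{k-1}\E\bigl[1/(1-\pi(V^{(i)}))\bigr]$. Each term is finite, since $\pi(U)<1$ for every proper $U$ (shown at the start of the theorem's proof).

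The only point needing care is the identification $\mathcal V_i=V^{(i)}$ together with $L=k$; this follows from the one-column-per-round increment argument above. Everything else is direct substitution.
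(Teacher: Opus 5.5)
Your proposal is correct and follows the paper's own argument. You specialize Theorem~\ref{thm:distributed-exact-representations} to $M=1$, so that $\rho_U=\pi(U)$, and use that one draw per round raises the dimension by at most one, which forces $L=k$ and $\mathcal V_i=V^{(i)}$ in \eqref{eq:epoch-representation}; you also spell out the chain identification and the linearity step that the paper leaves implicit.
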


\subsection{The Optimality of MDS Codes in Distributed Model}

In~\cite{BSGY25}, Bar-Lev \textit{et al.} studied the coverage depth problem for general codes and proved that MDS codes are optimal in terms of minimizing the expected number of reads among all codes with the same parameters. Next, we show that MDS codes are also optimal in terms of the expected number of recovery rounds in the distributed storage model.

\begin{lemma}\label{lem:rank-function-domination}
Let $\mathbf{G}=(\mathbf{g}_1,\ldots,\mathbf{g}_n),~\widetilde{\mathbf{G}}=(\widetilde {\mathbf{g}}_1,\ldots,\widetilde{\mathbf{g}}_n) \in\mathbb F_q^{k\times n}$ be two different rank-$k$ generator matrices, and let $\mathcal P=\{P_1,\ldots,P_M\}$ be any fixed $M$-partition of $[n]$. Suppose that
\begin{equation}
\label{eq:rank-domination-assumption}
    \mathrm{rank}(\mathbf{G}|_J) \ge \mathrm{rank}(\widetilde{\mathbf{G}}|_J),~\forall~J\subseteq[n].
\end{equation}
Then $\tau(\mathbf{G},\mathcal{P}) \le \tau(\widetilde{\mathbf{G}},\mathcal{P})$ holds a.s. In particular, for every $t\ge 0$, we have $\Pr\bigl(\tau(\mathbf{G},\mathcal P)>t\bigr) \le \Pr\bigl(\tau(\widetilde{\mathbf{G}},\mathcal P)>t\bigr)$, and
$$
\mathbb E[\tau(\mathbf{G},\mathcal P)] \le \mathbb E[\tau(\widetilde{\mathbf{G}},\mathcal P)].
$$
\end{lemma}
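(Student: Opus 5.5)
The plan is a coupling argument. Both layouts use the same partition $\mathcal P$, so the sampling process depends only on column \emph{indices*}, never on the column vectors. We realize both stopping times on a single probability space: in each round $t$ and for each container $i\in[M]$, draw one index $j_{i,t}$ uniformly from $P_i$. The resulting random index set $J_t$ from \eqref{eq: column index set by round t} is then used simultaneously for $\mathbf G$ and for $\widetilde{\mathbf G}$. Under this coupling, $\tau(\mathbf G,\mathcal P)$ and $\tau(\widetilde{\mathbf G},\mathcal P)$ are both functions of the same sequence $(J_t)_{t\ge0}$. Each marginal law is the correct one, because the index-sampling law is exactly the one in the problem definition.

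Next we rewrite the stopping rule in rank form. For any index set $J$, the vectors $\{\mathbf g_j:j\in J\}$ span $\mathbb F_q^k$ if and only if $\mathrm{rank}(\mathbf G|_J)=k$, and likewise for $\widetilde{\mathbf G}$. Hence
\[
\tau(\mathbf G,\mathcal P)=\min\{t:\mathrm{rank}(\mathbf G|_{J_t})=k\},\qquad
\tau(\widetilde{\mathbf G},\mathcal P)=\min\{t:\mathrm{rank}(\widetilde{\mathbf G}|_{J_t})=k\}.
\]
Set $t^*=\tau(\widetilde{\mathbf G},\mathcal P)$, which is finite a.s. by Theorem~\ref{thm:distributed-exact-representations}. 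Then $\mathrm{rank}(\widetilde{\mathbf G}|_{J_{t^*}})=k$. Hypothesis \eqref{eq:rank-domination-assumption} gives $\mathrm{rank}(\mathbf G|_{J_{t^*}})\ge k$, and since ranks are at most $k$ this forces equality. Therefore $\tau(\mathbf G,\mathcal P)\le t^*$ pointwise on the coupled space, which is the almost-sure claim. The statement is about the laws, so the inequality must be read under this coupling; stating this explicitly is the only subtle point.

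The remaining conclusions follow from monotonicity. For every $t\ge0$ we have the event inclusion $\{\tau(\mathbf G,\mathcal P)>t\}\subseteq\{\tau(\widetilde{\mathbf G},\mathcal P)>t\}$, which gives the tail inequality. Summing over $t$ with the tail-sum formula, as in the proof of Theorem~\ref{thm:distributed-exact-representations}, gives the inequality of expectations. Both expectations are finite, again by Theorem~\ref{thm:distributed-exact-representations}.

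I expect no real obstacle in this argument. The one point needing care is making precise that ``a.s.'' refers to the natural coupling through shared indices. The coupling is legitimate because $\mathcal P$, and hence the law of $(J_t)_{t\ge0}$, is identical for both matrices.
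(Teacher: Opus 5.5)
Your proposal is correct and follows the paper's proof: both couple the two layouts through the shared index sets $J_t$, invoke the rank-domination hypothesis at the round where $\widetilde{\mathbf G}|_{J_t}$ first reaches rank $k$ to get the inequality for every realization, and then read off the tail and expectation inequalities. Your explicit remark that ``a.s.'' must be read under this coupling (legitimate because the law of $(J_t)_{t\ge 0}$ depends only on $\mathcal P$) spells out what the paper calls ``the common coupling'' without further comment.
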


\begin{IEEEproof}
Let $J_t\subseteq [n]$ be the set of distinct coordinate indices observed by the end of round $t$. Under the common coupling, the two codes have observed the same set $J_t$. Then, by assumption \eqref{eq:rank-domination-assumption}, we have
$$
\mathrm{rank}(\mathbf{G}|_{J_t}) \ge \mathrm{rank}(\widetilde{\mathbf{G}}|_{J_t})
$$
for every $t$. Thus, whenever $\widetilde{\mathbf{G}}|_{J_t}$ has rank $k$, the matrix $\mathbf{G}|_{J_t}$ also has rank $k$. Thus, $\tau(\mathbf{G},\mathcal P) \le \tau(\widetilde{\mathbf{G}},\mathcal P)$ holds for every realization of the sampled indices. The tail probability and expectation inequalities follow immediately.
\end{IEEEproof}

\begin{theorem}
\label{thm:distributed-mds-optimality}
Assume that an $[n,k]_q$ MDS code exists, and let $\mathbf{G}_{\mathrm{MDS}}$ be one of its generator matrices. Then, for every rank-$k$ generator matrix $\mathbf{G}\in\mathbb F_q^{k\times n}$ and every fixed $M$-partition $\mathcal P$ of $[n]$, it holds that
$$
\mathbb E[\tau(\mathbf{G}_{\mathrm{MDS}},\mathcal P)] \le \mathbb E[\tau(\mathbf{G},\mathcal P)].
$$
\end{theorem}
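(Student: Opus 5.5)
The plan is to reduce the statement to Lemma~\ref{lem:rank-function-domination}, taking $\mathbf{G}=\mathbf{G}_{\mathrm{MDS}}$ and $\widetilde{\mathbf{G}}$ equal to the arbitrary rank-$k$ generator matrix $\mathbf{G}$ of the theorem. Once the rank-domination hypothesis \eqref{eq:rank-domination-assumption} holds for every $J\subseteq[n]$, the lemma gives $\tau(\mathbf{G}_{\mathrm{MDS}},\mathcal P)\le\tau(\mathbf{G},\mathcal P)$ almost surely under the common coupling. Taking expectations then yields the claim, and it does so simultaneously for every partition $\mathcal P$, since the lemma holds for any fixed $\mathcal P$.

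The key step is therefore a purely combinatorial fact: for every $J\subseteq[n]$,
\[
\mathrm{rank}(\mathbf{G}_{\mathrm{MDS}}|_J)=\min\{|J|,k\}\ge \mathrm{rank}(\mathbf{G}|_J).
\]
The upper bound for an arbitrary $\mathbf{G}$ is immediate: $\mathbf{G}|_J$ is a $k\times |J|$ matrix, so its rank is at most $\min\{|J|,k\}$.

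For the MDS equality, I will use the standard characterization of MDS codes: an $[n,k]_q$ code is MDS if and only if every $k$ columns of any of its generator matrices are linearly independent. I will recall this briefly via Singleton-bound tightness. If some $k$ columns indexed by $S$ were dependent, there would be a nonzero message $\mathbf{x}$ with $\mathbf{x}\mathbf{G}_{\mathrm{MDS}}|_S$ having a zero coordinate beyond what is allowed. Concretely, choosing $\mathbf{x}\neq 0$ orthogonal to a maximal independent subset of those columns gives at least $k$ zeros in the codeword $\mathbf{x}\mathbf{G}_{\mathrm{MDS}}$. Its weight is then at most $n-k$, contradicting $d=n-k+1$.

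With this characterization, the two cases follow. If $|J|\le k$, extend $J$ to a $k$-set; its columns are independent, so the columns in $J$ are independent and the rank is $|J|$. If $|J|>k$, then $J$ contains $k$ independent columns, so the rank is $k$.

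I do not expect a genuine obstacle. The only point to state carefully is the column-independence characterization of MDS codes, since it must apply to the particular generator matrix $\mathbf{G}_{\mathrm{MDS}}$ chosen. This is fine because the property is invariant under change of basis of the row space: $\mathrm{rank}(\mathbf{A}\mathbf{G}|_J)=\mathrm{rank}(\mathbf{G}|_J)$ for invertible $\mathbf{A}$.

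I would also remark that, since the almost-sure dominance holds pathwise, the tail probabilities are ordered as well, and the total read cost $M\E[\tau]$ is minimized by the MDS code for every $\mathcal P$.
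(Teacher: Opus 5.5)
Your proposal is correct and follows the paper's proof: you establish $\mathrm{rank}(\mathbf{G}_{\mathrm{MDS}}|_J)=\min\{|J|,k\}\ge \mathrm{rank}(\mathbf{G}|_J)$ for every $J\subseteq[n]$ and then apply Lemma~\ref{lem:rank-function-domination} with $\mathbf{G}_{\mathrm{MDS}}$ in the dominating role. The paper takes the column-independence property of MDS generator matrices as known. Your Singleton-bound justification of it is sound, provided you note that the codeword $\mathbf{x}\mathbf{G}_{\mathrm{MDS}}$ is nonzero because $\mathbf{G}_{\mathrm{MDS}}$ has rank $k$.
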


\begin{IEEEproof}
For every $J\subseteq[n]$, the MDS property yields that $\mathrm{rank}\bigl(\mathbf{G}_{\mathrm{MDS}}|_J\bigr) = \min\{|J|,k\} \ge \mathrm{rank}(\mathbf{G}|_{J})$. Hence, by Lemma~\ref{lem:rank-function-domination},  $\tau(\mathbf{G}_{\mathrm{MDS}},\mathcal P) \le \tau(\mathbf{G},\mathcal P)$ holds a.s. Moreover, we also have
$$
\mathbb E[\tau(\mathbf{G}_{\mathrm{MDS}},\mathcal P)] \le \mathbb E[\tau(\mathbf{G},\mathcal P)].
$$
This completes the proof.
\end{IEEEproof}

\begin{remark}\label{ex:nonmds}
We note that MDS codes are not the only optimal codes in terms of distributed coverage depth. For example, let $M=k$ and $n=ak$ for an integer $a\ge 2$. For each $j\in [M]$, container $j$ stores $a$ nonzero scalar multiples of the $j$-th standard basis vector: 
$$
\mathbf{G}_{j}=(\alpha_{j,1}\mathbf{e}_j,\alpha_{j,2}\mathbf{e}_j,\ldots,\alpha_{j,a}\mathbf{e}_j),
$$
where $\alpha_{j,\ell}\in \mathbb{F}_q^{*}$ for each $\ell\in [a]$. This is the generator matrix of a repetition code with repetition level $a$. Clearly, it is not an MDS code and its generator matrix has rank $k$. Note that the sampled columns contain a nonzero multiple of each $\mathbf{e}_j$ for $j\in[k]$ after the first round. Hence, $\tau(\mathbf{G},\mathcal{P})=1$, which is optimal.
\end{remark}

\subsection{General Lower Bounds on $\E[\tau(\mathbf{G},\mathcal{P})]$ and One-Round-Perfect Layouts}

In Section~\ref{subsec: exact formula}, we provided an exact formula for $\E[\tau(\bbG,\cP)]$  under a general layout $(\bbG,\cP)$. In this section, we first establish a lower bound on  $\E[\tau(\bbG,\cP)]$, and hence, on the expected total number of reads $M\E[\tau(\bbG,\cP)]$.

\begin{theorem}\label{thm:floor}
Let $M$ be an integer satisfying $1\leq M\leq k$, and let $r\triangleq \left\lceil \frac{k}{M}\right\rceil$. Then, for any rank-$k$ generator matrix $\mathbf{G}\in\mathbb{F}_q^{k\times n}$ and any $M$-partition $\mathcal{P}$, we have
$$
\mathbb{E}\!\left[\tau(\mathbf{G},\mathcal{P})\right] \ge r + \frac{\bigl(\frac{q^{k-M}-1}{q^k-1}\bigr)^r}{1 - \frac{q^{k-M}-1}{q^k-1}}.
$$
Moreover, when $M=k$ the lower bound is attained by the staircase partition. That is, \emph{every} $[n,k]_q$ linear code admits a one-round-perfect layout.
\end{theorem}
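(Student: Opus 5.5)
The plan is to combine the tail-sum formula $\mathbb E[\tau]=\sum_{t\ge 0}\Pr(\tau>t)$ with two separate lower bounds on the tail: a deterministic one for small $t$ and a hyperplane-averaging one for large $t$. Write
\[
p\triangleq \frac{q^{k-M}-1}{q^k-1}=\frac{[k-M]_q}{[k]_q}.
\]
\textbf{Small $t$.} For $t<r$, at most $tM<k$ columns have been observed by round $t$, so $\dim V_t<k$ and $\Pr(\tau>t)=1$. These terms contribute exactly $r$.

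\textbf{Large $t$.} It then suffices to show $\Pr(\tau>t)\ge p^t$ for every $t\ge r$. Summing the geometric series $\sum_{t\ge r}p^t=p^r/(1-p)$ then gives the stated bound. For any hyperplane $H$, the event that every sample lies in $H$ forces $\tau>t$, so $\Pr(\tau>t)\ge \rho_H^t$. It therefore suffices to find one hyperplane with $\rho_H\ge p$.

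\textbf{Finding the hyperplane.} I would average over a uniformly random hyperplane $H$, chosen independently of one round's samples $\mathbf x_1,\dots,\mathbf x_M$ (one per container). Then
\[
\mathbb E_H[\rho_H]=\mathbb E_{\mathbf x}\bigl[\Pr_H(\mathbf x_1,\dots,\mathbf x_M\in H)\bigr].
\]
For a fixed realization, let $d=\dim\spn(\mathbf x_1,\dots,\mathbf x_M)$. We have $1\le d\le M$, since the columns are nonzero. The hyperplanes containing this span number $[k-d]_q\ge [k-M]_q$, out of $[k]_q$ hyperplanes in total, so the conditional probability is at least $p$. Hence $\mathbb E_H[\rho_H]\ge p$, and some hyperplane satisfies $\rho_H\ge p$. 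This exchange of averages is the key step. One might first try Jensen over a random $H$, but that fails because the factors $m_i^H/n_i$ are correlated through $H$; swapping the order of averaging avoids this issue entirely. When $M=k$ we get $p=0$, the bound reduces to $r=1$, and the convention $0^0=1$ keeps the small-$t$ part consistent.

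\textbf{Attainment for $M=k$.} I would build the staircase partition as follows. Pick columns $\mathbf g_{b_1},\dots,\mathbf g_{b_k}$ forming a basis, let $W_j=\spn(\mathbf g_{b_1},\dots,\mathbf g_{b_j})$, and put
\[
P_j=\{i:\mathbf g_i\in W_j\setminus W_{j-1}\}.
\]
Every nonzero column lies in exactly one such level, and $b_j\in P_j$, so this is a genuine ordered $k$-partition with nonempty parts. Now take any transversal, with one vector from each $W_j\setminus W_{j-1}$. In the basis $\mathbf g_{b_1},\dots,\mathbf g_{b_k}$ its coordinate matrix is triangular with nonzero diagonal entries, so the transversal spans $\mathbb F_q^k$. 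Therefore $\tau=1$ almost surely, and the layout is one-round-perfect.
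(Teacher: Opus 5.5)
Your proposal is correct and follows the paper's proof step for step. It uses the deterministic bound $\Pr(\tau>t)=1$ for $t<r$, the hyperplane bound $\Pr(\tau>t)\ge\rho_H^t$, the same double-counting (exchange of averages) showing some hyperplane has $\rho_H\ge \frac{q^{k-M}-1}{q^k-1}$, and the staircase partition for $M=k$. The only differences are cosmetic: you sum $p^t$ directly instead of invoking monotonicity of $x\mapsto x^r/(1-x)$, and you fix a basis of columns up front rather than choosing the representatives $\mathbf{g}_{j_i}$ inductively.
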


\begin{IEEEproof}
After $t$ rounds, at most $Mt$ columns (counted with multiplicity) have been observed. 
If $t<r$, then $Mt\le M(r-1)<k$, hence the observed columns cannot span $\mathbb{F}_q^k$. 
Consequently, $\Pr(\tau(\mathbf{G},\mathcal{P})>t)=1$ for $0\le t<r$. 
Using the tail-sum formula for a nonnegative integer random variable,
\begin{align}
\mathbb{E}\!\left[\tau(\mathbf{G},\mathcal{P})\right] = \sum_{t=0}^{\infty} \Pr(\tau(\mathbf{G},\mathcal{P})>t) 
\ge r + \sum_{t=r}^{\infty} \Pr(\tau(\mathbf{G},\mathcal{P})>t). \label{eq: tail-sum estimate}
\end{align}

For any $(k-1)$-dimensional hyperplane $H$ and any $t\ge r$, the event that all $M$ samples in each of the first $t$ rounds fall inside $H$ implies $\tau(\mathbf{G},\mathcal{P})>t$. Hence, $\Pr(\tau(\mathbf{G},\mathcal{P})>t) \ge \rho_H^t$, and summing over $t\ge r$ yields
$$
\sum_{t=r}^{\infty} \Pr(\tau(\mathbf{G},\mathcal{P})>t) \ge \frac{\rho_H^r}{1-\rho_H}.
$$

Let $\mathcal{P}=\{P_1,\ldots,P_M\}$ be a partition of $[n]$ with $|P_i|=n_i$. Denote by $\mathcal{H}$ the set of all $(k-1)$-dimensional subspaces of $\mathbb{F}_q^k$, with $|\mathcal{H}| = \frac{q^k-1}{q-1}$. Then
\begin{align*}
\sum_{H\in\mathcal{H}} \rho_H 
&= \frac{1}{\prod_{i=1}^M n_i} \sum_{H\in\mathcal{H}} 
   \sum_{x_1\in P_1}\!\cdots\!\sum_{x_M\in P_M} \mathbf{1}_{\{x_1,\dots,x_M\in H\}} \\
&= \frac{1}{\prod_{i=1}^M n_i} \sum_{x_1\in P_1}\!\cdots\!\sum_{x_M\in P_M} 
   \bigl|\{H\in\mathcal{H} : \langle x_1,\dots,x_M\rangle \subseteq H\}\bigr|\\
&\ge \frac{1}{\prod_{i=1}^M n_i} \Bigl(\prod_{i=1}^M n_i\Bigr) \frac{q^{k-M}-1}{q-1} = \frac{q^{k-M}-1}{q-1},
\end{align*}
where the inequality follows from $\dim(\langle x_1,\dots,x_M\rangle)\leq M$. 
Therefore,
$$
\frac{1}{|\mathcal{H}|}\sum_{H\in\mathcal{H}} \rho_H \ge \frac{\frac{q^{k-M}-1}{q-1}}{\frac{q^k-1}{q-1}} = \frac{q^{k-M}-1}{q^k-1}.
$$
Consequently, there exists a hyperplane $H_0$ such that $\rho_{H_0} \ge \frac{q^{k-M}-1}{q^k-1}$. 
Since $x\mapsto x^r/(1-x)$ is increasing on $[0,1)$, we obtain
\begin{align}
\sum_{t=r}^{\infty} \Pr(\tau(\mathbf{G},\mathcal{P})>t) 
\ge \frac{\rho_{H_0}^r}{1-\rho_{H_0}} 
\ge \frac{\bigl(\frac{q^{k-M}-1}{q^k-1}\bigr)^r}{1 - \frac{q^{k-M}-1}{q^k-1}}. \label{eq: tail-sum estimate2}
\end{align}
Finally, the bound follows by \eqref{eq: tail-sum estimate} and \eqref{eq: tail-sum estimate2}.

Now, we consider the case where $M = k$. We first introduce the staircase partition. For any generator matrix $\mathbf{G}$, we define the partition blocks $P_1, \dots, P_k$ inductively as follows:
\begin{itemize}
\item For $i = 1$, we select an arbitrary column $\mathbf{g}_{j_1}$ from $\mathbf{G}$ and set $P_1 = \{j \in [n] \colon \mathbf{g}_j \in \langle \mathbf{g}_{j_1} \rangle\}$.
\item For $i>1$, select an arbitrary index $j_i\in [n]\setminus\bigcup_{\ell<i}P_\ell$, and set
\[ P_i=\mathset{j\in [n] ~:~ \mathbf{g}_j\in \langle \mathbf{g}_{j_1},\dots,\mathbf{g}_{j_i}\rangle} \setminus \bigcup_{\ell=1}^{i-1}P_\ell. \]
\end{itemize}
Write $V_i\triangleq \langle \bbg_{j_1},\dots,\bbg_{j_i}\rangle$. By construction, it holds that $\dim V_i=i$, parts $P_1,\dots,P_k$ are nonempty and pairwise disjoint, $\bigcup_{i=1}^{k}P_i=[n]$ by $\rank(\bbG)=k$, and $\bbg_j\in V_i\setminus V_{i-1}$ for every $j\in P_i$ (with $V_0=\mathset{0}$). Hence, if $h_i\in P_i$ for $i\in[k]$, then, since $\bbg_{h_i}\in V_i\setminus V_{i-1}=V_i\setminus \langle \bbg_{h_1},\dots,\bbg_{h_{i-1}} \rangle$, we have $\langle\bbg_{h_1}, \dots,\bbg_{h_i}\rangle =V_i$ for every $i\in[k]$. In particular, $\rank(\bbg_{h_1},\dots,\bbg_{h_k})=k$, so that $\tau(\bbG,\cP)=1$ a.s. and the lower bound is attained.
\end{IEEEproof} 

\begin{remark}\label{rem:bound-is-counting}
The additive term in Theorem~\ref{thm:floor} is small. Since $\rho\triangleq\frac{q^{k-M}-1}{q^{k}-1}\leq q^{-M}$ and $r=\ceilenv{k/M}$, one has $\rho^{r}/(1-\rho)=O(q^{-k})$. The content of the bound is therefore essentially the counting bound $\E[\tau]\geq \ceilenv{k/M}$ together with its integrality, which is what we exploit later in Theorem~\ref{thm:intclass}. When $\mathcal{C}$ is a simplex code and $M=1$, the bound only yields $\E[\cC]\geq k+O(q^{-k})$, whereas Theorem~\ref{thm: main} below yields a much stronger bound $\E[\cC]\geq k+O(q^{-1})$.
\end{remark}

By Theorem~\ref{thm:floor}, every $[n,k]_q$ linear code admits a one-round-perfect $k$-partition, namely the staircase partition. In this paper, we also establish another sufficient condition for a layout $(\mathbf{G},\mathcal{P})$ to be  one-round-perfect, along with several explicit constructions based on Hamming codes and first-order Reed--Muller codes. Due to space constraints, we defer the detailed proofs and descriptions of these results to Appendix~\ref{app:one-round-perfect additional}.

\begin{remark}\label{rem:unbalanced-one-round}
The sizes of different parts in the staircase construction vary substantially, hence, the resulting partition is highly unbalanced, as illustrated by the layouts for simplex codes in Lemma~\ref{lem: optimality of simplex code for multiple containers},  for Hamming codes in Corollary~\ref{cor:hammingfloor}, and for first-order Reed--Muller codes in Theorem~\ref{thm:rm}. Thus, achieving one-round perfection may come at the expense of severe size imbalance among containers. If equal container capacities are required, the full saving $\mathrm{Ov}(\mathcal{C})$ may no longer be attainable, yet distributed sampling can still yield a strict gain over the single-container model, as illustrated below.
\end{remark}


We first recall the definition of first-order Reed--Muller code. Let $q$ be a prime power and $m \ge 1$ be an integer. Let $\mathbf{p}_1, \mathbf{p}_2, \dots, \mathbf{p}_{q^m}$ denote all the $q^m$ points in $\mathbb{F}_q^m$. The \emph{$q$-ary first-order Reed--Muller code} of length $n = q^m$ and dimension $k=m+1$, denoted by $\mathrm{RM}_q(1, m)$, is defined as the code with generator matrix:
$$
\mathbf{G} = \begin{pmatrix}
1 & 1 & \cdots & 1 \\
\mathbf{p}_1 & \mathbf{p}_2 & \cdots & \mathbf{p}_{q^m}
\end{pmatrix}.
$$

\begin{example}
\label{ex:rm13-balanced}
Let $\mathcal C=\mathrm{RM}(1,3)$, whose standard generator matrix has columns $g_{\mathbf p}=(1,\mathbf{p}^{T})^{T}$, $\mathbf{p}\in\mathbb{F}_2^3$. A balanced $4$-partition of its eight columns is therefore a partition of the eight points of $\mathbb{F}_2^3$ into four pairs.

We first show that no such partition is one-round-perfect. Note that four columns $\mathbf g_{\mathbf p_1},\ldots,\mathbf g_{\mathbf p_4}$ fail to span $\mathbb F_2^4$ if and only if $\mathbf p_1,\ldots,\mathbf p_4$ are affinely dependent. Since these four points are distinct, this is equivalent to their forming an affine plane in $\mathbb{F}_2^3$. Note that there are $14$ affine planes in $\mathbb{F}_2^3$, and every pair of points is contained in exactly $3$ of them. Hence, for any balanced $4$-partition, at most $4\cdot 3=12$ planes contain an entire pair. Hence, at least $14-12=2$ affine planes contain no entire pair. Since each such plane contains four points and  intersects each of the four pairs in at most one point, it must contain exactly one point from each pair. The corresponding four columns are linearly dependent. Consequently, no balanced $4$-partition of columns of $\mathrm{RM}(1,3)$'s generator matrix is one-round-perfect.

More precisely, for a balanced partition $\mathcal P$, let $N(\mathcal P)$ denote the number of affine planes containing exactly one point from each pair. In the first round, the selected columns fail to span $\Fq^k$ if they lie in one of the $N(\mathcal P)$ such affine planes. Thus, conditioned on having selected one such plane, in each subsequent round the spanning of $\Fq^k$ is completed as soon as at least one container returns the other point of its stored pair, which occurs with probability $1-\frac{1}{2^4}=\frac{15}{16}$. Therefore,
$$
\mathbb E[\tau(\mathbf G,\mathcal P)] = 1+\frac{N(\mathcal P)}{16}\cdot\frac{16}{15} = 1+\frac{N(\mathcal P)}{15}.
$$

The preceding counting argument gives $N(\mathcal{P}) \ge 2$, and hence $\mathbb{E}[\tau(\mathbf{G},\mathcal{P})] \ge \frac{17}{15}$ for every balanced $4$-partition. The equality is attained, for example, by
\begin{align*}
\mathcal{P}^\star =
    & \bigl\{
      \{(1,0,0,0),(1,1,0,0)\},
      \{(1,0,1,0),(1,0,0,1)\}, \\
    & ~~\{(1,1,1,0),(1,0,1,1)\},
      \{(1,1,0,1),(1,1,1,1)\}
    \bigr\}.
\end{align*}
Moreover, the two affine planes containing exactly one point from each pair are $\{(0,0,0),(0,0,1),(1,1,0),(1,1,1)\}$ and $\{(1,0,0),(0,1,0),(0,1,1),(1,0,1)\}$. Thus, $N(\mathcal{P}^\star)=2$ and $\mathbb{E}[\tau(\mathbf{G},\mathcal{P}^\star)]=\frac{17}{15}$, which yields an expected total number of reads of $4\mathbb{E}[\tau(\mathbf{G},\mathcal{P}^\star)] = \frac{68}{15} \approx 4.533$.

For comparison, Theorem~\ref{thm:rm} (i), with $q=2$ and $m=3$, gives the single-container coverage depth $\mathbb E[\mathcal C] = \frac{115}{21} \approx 5.476$.
\end{example}

Note that $\tau(\mathbf{G},\mathcal{P}) \ge 1$ by definition. To better understand its extreme behavior, the next theorem characterizes precisely when $\tau(\mathbf{G},\mathcal{P}) = 1$.

\begin{theorem}\label{thm: oneround}
Let $\bbG\in \F_q^{k\times n}$ be a rank-$k$ generator matrix and let $\cP$ be an ordered $M$-partition of $[n]$. Then $\tau(\mathbf{G},\mathcal{P})=1$ a.s. if and only if, for every hyperplane $H$ of $\mathbb{F}_q^k$, there exists an $i\in[M]$ such that $\mathbf{G}_i\cap H=\emptyset$. Consequently, at least $k$ containers are necessary for $\tau(\mathbf{G},\mathcal{P})=1$ a.s.
\end{theorem}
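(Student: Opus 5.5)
The plan is to reduce the almost-sure statement to a deterministic one about one-round samples. Since every $M$-tuple $(\mathbf g_{h_1},\dots,\mathbf g_{h_M})$ with $h_i\in P_i$ has positive probability $\prod_i n_i^{-1}$ of being the first-round sample, $\tau(\mathbf G,\mathcal P)=1$ a.s. holds if and only if every such transversal tuple spans $\mathbb F_q^k$. This is exactly the one-round-decodability of Definition~\ref{def:one-round-perfect-unified}. Equivalently, in the notation of \eqref{eq:rho}, it holds if and only if $\rho_U=0$ for every proper subspace $U$. It therefore suffices to show that every transversal spans $\mathbb F_q^k$ precisely when each hyperplane misses some container entirely.

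For the forward direction, suppose some hyperplane $H$ meets every container, so that $\mathbf G_i\cap H\neq\emptyset$ for all $i\in[M]$. Choose $h_i\in P_i$ with $\mathbf g_{h_i}\in H$. The resulting transversal spans a subspace of $H$, which is not all of $\mathbb F_q^k$. Hence $\Pr(\tau>1)\ge\rho_H>0$. For the converse, suppose some transversal fails to span. Its span $U$ is then a proper subspace, and it lies in some hyperplane $H\supseteq U$. That $H$ contains a column from each container, so no container misses it. Taking contrapositives gives the equivalence.

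For the consequence, assume $M<k$ and pick any transversal $\mathbf g_{h_1},\dots,\mathbf g_{h_M}$. Its span has dimension at most $M\le k-1$, so it is contained in some hyperplane $H$. This $H$ meets every container, so by the characterization $\tau=1$ a.s. fails. One could also argue directly: fewer than $k$ vectors cannot span $\mathbb F_q^k$, which is the counting bound already used in Theorem~\ref{thm:floor}. I expect no genuine obstacle here. The only point needing care is the first step, which passes from ``almost surely'' to ``for every transversal''; it uses that the sample space is finite and each transversal has positive probability.
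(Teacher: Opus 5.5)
Your proof is correct and takes essentially the same route as the paper. Both reduce the almost-sure statement to ``every transversal spans $\mathbb{F}_q^k$'' (each transversal occurs with probability $\prod_i n_i^{-1}>0$), use the fact that a non-spanning set lies in some hyperplane in both directions, and get $M\ge k$ from the dimension bound on a single round's span; your aside that the condition is just $\rho_H=0$ for every hyperplane $H$ is a compact equivalent reformulation.
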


\begin{IEEEproof}
Let $S$ denote the multiset of columns drawn in the first round, with one column drawn from each container. Then $\tau(\mathbf{G},\mathcal{P})=1$ a.s. if and only if $\spn(S)=\mathbb{F}_q^k$ a.s. Moreover, $S$ fails to span $\mathbb{F}_q^k$ if and only if it is contained in some hyperplane of
$\mathbb{F}_q^k$.

Suppose first that, for every hyperplane $H$ of $\mathbb{F}_q^k$, there exists an $i\in[M]$ such that no column of $\mathbf{G}_i$ lies in $H$. Then the first-round outcome can not be contained in a hyperplane: if all columns in $S$ were contained in some hyperplane $H$, then the column drawn from container $i$ would have to lie in $H$, contradicting the choice of $i$. Hence every first-round outcome spans $\mathbb{F}_q^k$, and therefore $\tau(\mathbf{G},\mathcal{P})=1$ a.s.

Conversely, suppose that there exists a hyperplane $H_0$ such that each container contains at least one column lying in $H_0$. That is, for each $i\in[M]$, there is an $\ell_i\in P_i$ such that $\mathbf{g}_{\ell_i}\in H_0$. Note that 
$$
\Pr(S=\{\mathbf{g}_{\ell_1},\ldots,\mathbf{g}_{\ell_M}\})=\prod_{i=1}^M n_i^{-1}>0.
$$
This implies that all drawn columns lie in $H_0$ has positive probability, and hence they do not span $\mathbb{F}_q^k$. Thus $\Pr(\tau(\mathbf{G},\mathcal{P})=1)<1$. This proves the equivalence.

Finally, note that $\dim(\spn(S))\leq |S|=M$. Hence, $\tau(\mathbf{G},\mathcal{P})=1$ can hold only if $M\ge k$.
\end{IEEEproof}

We conclude this section by summarizing several representative code layouts in Table~\ref{tab:one-round-perfect-layouts}, all of which are one-round-perfect. The detailed proofs of the one-round-perfections of MDS codes and simplex codes are deferred to Sections \ref{sec: MDS distributed coverage} and \ref{sec: simplex distributed coverage}.

\begin{table}[htbp!]
  \centering
  \caption{Representative one-round-perfect layouts.}
  \label{tab:one-round-perfect-layouts}
  \begin{tabular}{ccccc}
    \toprule
    Code family & Parameters & Partition & Container sizes & Balanced? \\
    \midrule
    \thead{MDS code\\(Theorem \ref{thm:mdssplit})} & \thead{Any $n,k,q$ admitting \\ an $[n,k]_q$ MDS code} 
      & \thead{Any ordered $k$-partition}
      & \thead{Any $n_1,\ldots,n_k$, \\where $n_i\ge1$ and $\sum_i n_i=n$}
      & \thead{Yes \\ (when $k\mid n$)}\\
    \hline
    \thead{$q$-ary Hamming code\\(Corollary \ref{cor:hammingfloor})} & \thead{$[n,n-r]$ for any\\integer $r\geq 2$ with $n=\bra{r}$} 
      & General-position partition
      & $(r+1,1,\ldots,1)$
      & No \\
    \hline
    \thead{$\mathrm{RM}_q(1,m)$\\(Theorem \ref{thm:rm})} & \thead{$[q^m,m+1]$ for any\\integer $m\geq 1$}
      & Affine staircase partition
      & $(q^{m}-q^{m-1},\ldots,q-1,1)$
      & No \\
    \hline
    \thead{$q$-ary Simplex code\\(Lemma \ref{lem: optimality of simplex code for multiple containers})} & \thead{$[\bra{k},k]$ for any\\integer $k\geq 1$}
      & Projective staircase partition
      & $(q^{k-1},q^{k-2},\ldots,q,1)$
      & No \\
    \bottomrule
  \end{tabular}
\end{table}


\section{Distributed Coverage Depth Problem of MDS Codes}\label{sec: MDS distributed coverage}

In this section, we present several results on the distributed coverage depth problem of $[n,k]$ MDS codes. As we shall see, for certain parameter regimes, the expected total number of reads required to recover all $k$ information strands in the distributed storage setting is strictly smaller than that in the single-container setting. Throughout this section, for convenience, we use $\mathbf{G}_{\mathrm{MDS}}$ to denote the generator matrix of the corresponding MDS code.


For $[n,k]$ MDS codes, the distributed coverage depth problem defined in Section \ref{subsec: problem statement} reduces to the following variant of the classic coupon collector's problem (CCP): The model consists of $M$ distinct containers, where the $i$-th container holds $n_i$ distinct labeled balls. These $n = \sum_{i=1}^{M} n_i$ labeled balls correspond directly to the $n$ columns of $\mathbf{G}_{\mathrm{MDS}}$. At each round, we independently sample one ball uniformly at random (with replacement) from each container. The termination time $\tau(\mathbf{G}_{\mathrm{MDS}}, \mathcal{P})$ turns to be the number of rounds until $k$ distinct balls are collected. By this observation, we have the following immediate result.

\begin{theorem}\label{thm:mdssplit}
Let $\mathcal{C}$ be an MDS code $[n,k]$ with $k\ge2$ and $n\ge k$, Then, \emph{every} partition of $[n]$ into $k$ nonempty parts is one-round-perfect, which yields the full saving 
$$
\mathrm{Ov}(\mathcal{C})=\E[\mathcal{C}]-k=\sum_{i=1}^{k-1}\frac{i}{n-i}.
$$
\end{theorem}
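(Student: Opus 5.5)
The statement has two parts: one-round perfection of every $k$-partition, and the value of $\mathrm{Ov}(\mathcal{C})$. I would prove them in that order.

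For one-round perfection, fix any ordered partition $\mathcal{P}=(P_1,\dots,P_k)$ of $[n]$ into nonempty parts. In any single round, exactly one column is drawn from each container. Because the parts are pairwise disjoint, the $k$ drawn indices $j_{1,1},\dots,j_{k,1}$ are automatically distinct. The MDS property says $\rank(\mathbf{G}_{\mathrm{MDS}}|_J)=\min\{|J|,k\}$ for every $J\subseteq[n]$, so any $k$ distinct columns span $\mathbb{F}_q^k$. Hence $V_1=\mathbb{F}_q^k$ for every outcome, i.e., $\tau(\mathbf{G}_{\mathrm{MDS}},\mathcal{P})=1$ a.s., and the layout is one-round-perfect by Definition~\ref{def:one-round-perfect-unified}. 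The same conclusion follows from Theorem~\ref{thm: oneround}: a hyperplane $H$ contains at most $k-1$ columns of an MDS generator matrix, so at most $k-1$ containers meet $H$, and some container avoids $H$. Either way, $M\,\mathbb{E}[\tau]=k$, which is exactly full saving.

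For the overhead, we need $\mathbb{E}[\mathcal{C}]$ in the single-container model. There, recovery happens exactly when $k$ distinct coordinates have been seen, since any $k$ distinct columns span and fewer cannot. I would apply Corollary~\ref{cor:single-container-stage-decomposition}. Once $i$ distinct columns have been collected ($i<k$), the current span $V^{(i)}$ is an $i$-dimensional subspace. By the MDS property it contains exactly those $i$ columns; any further column inside it would give $i+1$ columns of rank $i\le k-1$, which is impossible. Therefore $\pi(V^{(i)})=i/n$ deterministically, and
\[
\mathbb{E}[\mathcal{C}]=\sum_{i=0}^{k-1}\frac{n}{n-i}=\sum_{i=0}^{k-1}\Bigl(1+\frac{i}{n-i}\Bigr)=k+\sum_{i=1}^{k-1}\frac{i}{n-i}.
\]
Subtracting $k$ gives $\mathrm{Ov}(\mathcal{C})$ as stated.

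The only step needing care is the claim $|\mathbf{G}\cap V^{(i)}|=i$, which rules out repeated or proportional columns. This holds because MDS with $k\ge2$ forbids two proportional columns, and more generally forbids $i+1$ columns in an $i$-dimensional space. Everything else is direct.
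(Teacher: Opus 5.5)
Your proof is correct and follows the paper's argument: one-round perfection comes directly from the fact that the $k$ drawn indices are distinct and any $k$ columns of an MDS generator matrix span $\mathbb{F}_q^k$. The only difference is that the paper cites $\mathbb{E}[\mathcal{C}]=n(H_n-H_{n-k})$ from \cite{BSGY25}, whereas you rederive it from Corollary~\ref{cor:single-container-stage-decomposition} via $\pi(V^{(i)})=i/n$, obtaining $\sum_{i=0}^{k-1} n/(n-i)$; this is the same quantity, and your derivation makes the argument self-contained.
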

\begin{IEEEproof}
Note that for MDS codes, any $k$ codeword symbols can recover the $k$ information symbols. Thus, the result follows directly from the definition of a one-round-perfect partition. Moreover, the value of the saving follows from the fact that $\mathbb{E}[\mathcal{C}] = n(H_{n} - H_{n-k})$ by~\cite[Corollary 1]{BSGY25} (see also~\cite[Theorem 2.2]{BRY26}).
\end{IEEEproof}

\begin{remark}\label{rmk: improved regimes}
Consider an $[n,k]$ MDS code $\mathcal{C}$ with $k\ge2$ and $n\ge k$. Since $\frac{i}{n} \le \frac{i}{n - i} \le \frac{i}{n-k+1}$, we have $\frac{k(k-1)}{2n} \le \mathrm{Ov}(\mathcal{C}) \le \frac{k(k-1)}{2(n-k+1)}$. As the rate $k/n$ varies, this yields a three-regime picture:
\begin{itemize}
    \item When $n \ge \frac{(k+2)(k-1)}{2}$, then the rate is at most $2/k$, which is very low. Then $\mathrm{Ov}(\mathcal{C}) \le \frac{k(k-1)}{2((k+2)(k-1)/2-k+1)} = 1$, so the full saving is at most one.
    \item When $ck \ll n < \frac{(k+2)(k-1)}{2}$, the rate is larger than $\frac{2}{k}(1-o(1))$ but far from constant. 
    Then, one has $\mathrm{Ov}(\mathcal{C}) \ge \frac{k(k-1)}{2n} \ge 1$ whenever $n < \frac{k(k-1)}{2}$, implying that the saving is at least one and grows like $\frac{k^{2}}{2n}$ as the rate increases.
    \item When $n=ck$ for a fixed constant $c>1$, $\mathrm{Ov}(\mathcal{C})$ grows linearly with $k$, so a constant \emph{fraction} of all reads is saved. We refer to this regime the \emph{constant rate} regime.
\end{itemize}
\end{remark}

Theorem~\ref{thm:floor} demonstrates that when $M=k$, there always exist one-round-perfect layouts that achieve full saving. Thus, what remains open is whether an \emph{intermediate} number of containers $M$ (with $1 < M < k$) can still achieve saving compared to the single-container model. Next, we answer this question in the affirmative by analyzing the distributed coverage depth of MDS codes for $1 < M < k$.

Recall from~\eqref{eq: column index set by round t} that $J_t$ denotes the set of distinct column indices observed by the end of round $t$. Hence, by the MDS property, we have $\tau(\mathbf{G}_{\mathrm{MDS}}, \mathcal{P}) = \min\{t \ge 1 \colon |J_t| \ge k\}$. Moreover, by the inclusion-exclusion principle, we have
\begin{equation}\label{eq:distinctlaw}
\Pr(|J_t \cap P_i| = d) = \binom{n_i}{d} \sum_{\ell=0}^{d} (-1)^{\ell} \binom{d}{\ell} \left(\frac{d-\ell}{n_i}\right)^{t},
\end{equation}
where $J_t \cap P_i$ is the set of distinct column indices drawn from $P_i$ by the end of round $t$.

In the following, we show an example for small parameters to illustrate the advantages of distributed storage compared to the single-server model.
\begin{example}
    For positive integers $n$, $k$, and $M$ such that $M\mid n$, consider an $[n, k]$ MDS code with an ordered balanced $M$-partition $\mathcal{P}$. Suppose that all containers have the same sizes $m = n/M$. Then, by \eqref{eq:distinctlaw}, we have
    $$
    \Pr\left(|J_t| \ge k\right) = \frac{1}{m^{tM}}
    \sum_{\substack{0 \le j_i \le \min\{m, t\} \\ j_1 + \cdots + j_M \ge k}}
    \prod_{i=1}^M \binom{m}{j_i} j_i! \begin{Bmatrix} t \\ j_i \end{Bmatrix},
    $$
    where $\begin{Bmatrix} \cdot \\ \cdot \end{Bmatrix}$ denotes the Stirling numbers of the second kind.  

Figure~\ref{fig:main} below compares the two models for $n=240$, $k=200$. In Figure~\ref{fig:1} the horizontal axis is the \emph{total} number of reads, 
so that the two families of curves are directly comparable in sequencing cost. The multi-container curves are steeper because the reads are spread deterministically over the containers, which suppresses part of the coupon-collector tail, the effect grows as $M$ increases, equivalently as the container size $n/M$ decreases. Figure~\ref{fig:2} plots the difference $\Pr(|J_t|\geq k)-\Pr(\tau^{(1)}\leq Mt)$ against the normalized read count $Mt/n$. It is nonnegative throughout and is maximized around $Mt/n\approx 1.750$, i.e., precisely in the range of read budgets at which the decision to stop sequencing is actually made. Outside that window both models have essentially decoded, or essentially not. 

\begin{figure}[htbp] 
    \centering
    \begin{subfigure}[b]{0.45\textwidth}
        \centering
        \includegraphics[width=\textwidth]{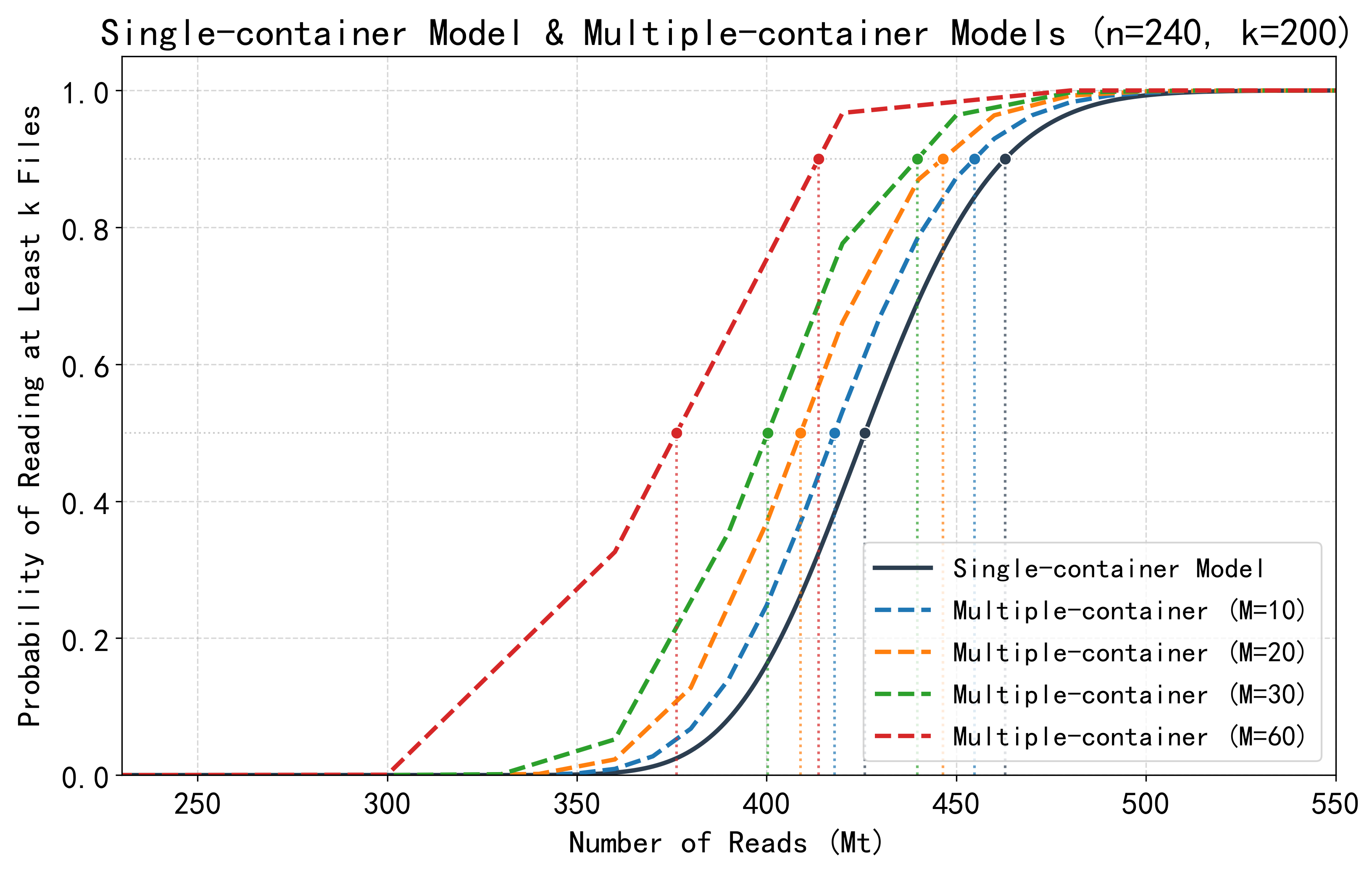}
        \caption{Probability of reading at least $k$ strands: single-container vs.\ multiple-container models.}
        \label{fig:1}
    \end{subfigure}
    \hfill
    \begin{subfigure}[b]{0.45\textwidth}
        \centering
        \includegraphics[width=\textwidth]{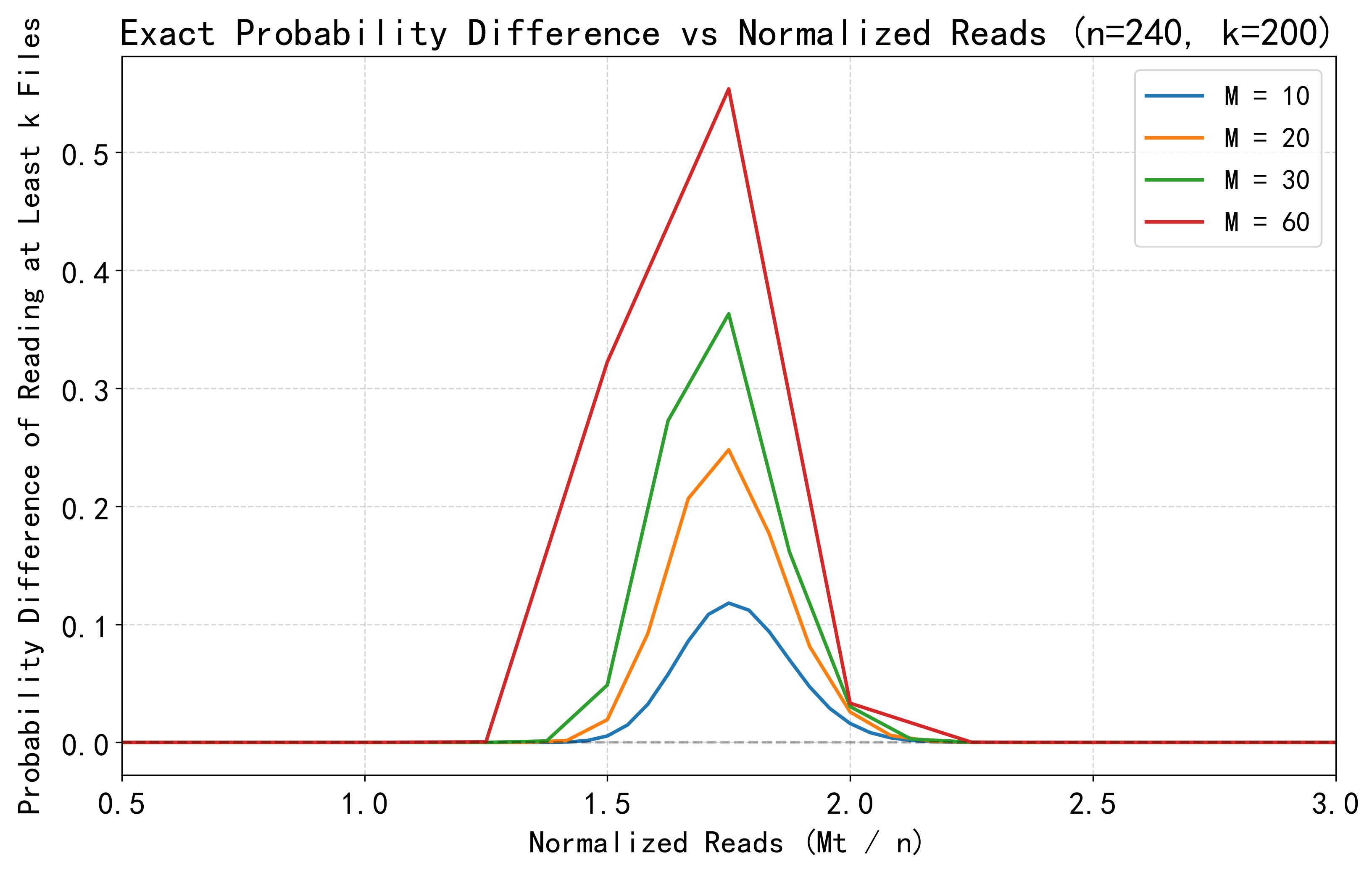}
        \caption{Probability difference of reading at least $k$ strands for the single-container and multiple-container models.}
        \label{fig:2}
    \end{subfigure}
    \caption{A comparison of the single container model and the multi-container model. Figure~\ref{fig:1} shows the probabilities of reading at least $k$ strands for the single-container and multiple-container models with $M \in \{10, 20, 30, 60\}$, and Figure~\ref{fig:2} displays the probability difference between the two models.}
    \label{fig:main}
\end{figure}
\end{example}

\subsection{Constant Container Size}

In this section, we consider the distributed coverage depth problem for MDS codes in the setting where $\mathcal{P}$ is an ordered balanced $M$-partition of $[n]$ (denoted by $\mathcal{P}_{B}$) and every container has constant size. In particular, we determine the asymptotic behavior of $\tau(\mathbf{G}_{\mathrm{MDS}}, \mathcal{P})$ and show that distribution indeed helps reduce the total number of reads.

Fix two integers $s\ge 2$ and $r\ge 2$, both independent of $k$, and denote $a_s\triangleq 1-\frac{1}{s}$. Let $\alpha>1$ be a constant satisfying
\begin{equation}
    \frac{1}{1-a_s^r} < \alpha < \frac{1}{1-a_s^{r-1}}.
    \label{eq:mds-alpha-range}
\end{equation}

\begin{theorem}\label{thm:mds-balanced-improvement}
Let $\{\mathcal{C}_k\}$ be a family of $[n = \alpha k, k]$ MDS codes with generator matrices $\{\mathbf{G}_k\}$, and $s\mid n$ for every $k$. Suppose the $n$ columns of $\mathbf{G}_k$ is partitioned arbitrarily into $M_k\triangleq \frac{n}{s}$ containers, each containing exactly $s$ columns. Let $\tau_k$ denote the first round by which $k$ different columns are observed. Then, as $k\rightarrow \infty$ (so $q\rightarrow \infty$),
$$
\E[\tau_k]=r+o(1),
$$
and hence the expected total number of drawings satisfies
$$
M_k\E[\tau_k] = \frac{\alpha r}{s}\,k+o(k).
$$
\end{theorem}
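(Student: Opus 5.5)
By the MDS property, $\tau_k=\min\{t\ge1:|J_t|\ge k\}$, where $|J_t|=\sum_{i=1}^{M_k} D_i(t)$ and $D_i(t)=|J_t\cap P_i|$ is the number of distinct columns seen in container $i$ after $t$ rounds. The containers are sampled independently, so the $D_i(t)$ are i.i.d. copies of the number of distinct coupons obtained in $t$ uniform draws from $s$ coupons. Since $s$ and $t$ do not depend on $k$, this is a bounded variable, taking values in $[0,s]$, with mean
\[
\mu_t\triangleq \E[D_i(t)]=s\bigl(1-a_s^t\bigr).
\]
Hence $\E|J_t|=M_k\mu_t=\frac{n}{s}\,s(1-a_s^t)=\alpha k(1-a_s^t)$. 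The hypothesis \eqref{eq:mds-alpha-range} is equivalent to
\[
\alpha(1-a_s^{r-1})<1<\alpha(1-a_s^{r}).
\]
So after $r-1$ rounds the expected number of distinct columns is $(1-\delta_1)k$, and after $r$ rounds it is $(1+\delta_2)k$, for fixed constants $\delta_1,\delta_2>0$.

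\textbf{Step 1 (concentration).} I would apply Hoeffding's inequality to the sum of $M_k=\alpha k/s$ independent variables bounded in $[0,s]$:
\[
\Pr\bigl(\,\bigl||J_t|-\E|J_t|\bigr|\ge \delta k\bigr)\le 2\exp\!\bigl(-2\delta^2k^2/(M_ks^2)\bigr)=2\exp(-c\,k).
\]
With $t=r-1$ this gives $\Pr(\tau_k\le r-1)=\Pr(|J_{r-1}|\ge k)\le e^{-ck}$; monotonicity of $|J_t|$ in $t$ covers all $t<r$. With $t=r$ it gives $\Pr(\tau_k>r)=\Pr(|J_r|<k)\le e^{-ck}$.

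\textbf{Step 2 (expectation).} Write
\[
\E[\tau_k]=\sum_{t\ge0}\Pr(\tau_k>t)=r-\sum_{t<r}\Pr(\tau_k\le t)+\sum_{t\ge r}\Pr(\tau_k>t).
\]
The first correction is at most $r e^{-ck}$. For the tail with $t\ge r$ I need a bound that is summable in $t$, uniformly in $k$.

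\textbf{Step 3 (the tail, the main obstacle).} For $t\ge r$ I would use $\Pr(\tau_k>t)\le \Pr(|J_t|<k)$. Since $\mu_t$ is increasing in $t$, the deviation $\E|J_t|-k\ge\delta_2k$ persists for all $t\ge r$, so Hoeffding still gives $e^{-ck}$ for each $t$. That alone is not summable over $t$, so I would add a cruder bound for large $t$. The probability that some column is still unseen after $t$ rounds is at most $n a_s^t$, and if every column has been seen then $|J_t|=n\ge k$. Hence $\Pr(\tau_k>t)\le\min\{e^{-ck},\,n a_s^t\}$. Splitting the sum at $t_0\approx C\log k$ gives
\[
\sum_{t\ge r}\Pr(\tau_k>t)\le t_0e^{-ck}+\sum_{t\ge t_0}\alpha k\,a_s^t=O(k^{-1})=o(1)
\]
for a suitable constant $C$. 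Therefore $\E[\tau_k]=r+o(1)$.

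\textbf{Step 4 (total reads).} Multiplying by $M_k=\alpha k/s$ gives
\[
M_k\E[\tau_k]=\frac{\alpha r}{s}k+M_k\cdot o(1).
\]
The error terms from Steps 2–3 are $O(k^{-1})+O(e^{-ck}\log k)$, so $M_k\cdot o(1)=O(1)=o(k)$, which proves the claim. The remark that $q\to\infty$ is needed only for the MDS codes to exist; it plays no role in the probability estimates.
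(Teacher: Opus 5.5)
Your proposal is correct and follows essentially the paper's route. Both arguments use i.i.d.\ per-container occupancy counts with mean $s(1-a_s^t)$, Hoeffding's inequality on either side of the threshold round $r$, and a bound on the expected number of unseen columns to control the far tail. The only real difference is in that last step: the paper applies Markov's inequality at threshold $n-k+1$, which gives the $k$-uniform summable dominator $\Pr(\tau_k>t)\le C_\alpha a_s^t$, and then invokes dominated convergence. Your union bound $n a_s^t$ grows with $k$ and forces the split at $t_0\approx C\log k$, but in exchange you get the explicit rate $\mathbb{E}[\tau_k]=r+O(k^{-1})$ and hence the sharper $M_k\mathbb{E}[\tau_k]=\frac{\alpha r}{s}k+O(1)$.
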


\begin{IEEEproof}
For each $k$ and $i\in[M_k]$ and $t\ge0$, let $D_{i,k}(t)$ denote the number of distinct columns observed from container $i$ during the first $t$ rounds, and define $D_k(t)\triangleq\sum_{i=1}^{M_k}D_{i,k}(t)$. Clearly, $D_k(t)$ is the total number of distinct columns observed during the first $t$ rounds, and we have $\tau_k\le t$ if and only if $D_k(t)\ge k$.

For each fixed $t$, the random variables $D_{1,k}(t),\ldots,D_{M_k,k}(t)$ are i.i.d. A fixed column in a container is not drawn during the first $t$ rounds with probability $a_s^t$. Hence, $\E[D_{i,k}(t)]=s(1-a_s^t)$. It follows that 
\begin{equation}
    \E[D_k(t)]
    =
    M_ks(1-a_s^t)
    =
    n(1-a_s^t).
    \label{eq:mds-occupancy-mean}
\end{equation}

Define $p_t\triangleq 1-a_s^t$. The first two inequalities in \eqref{eq:mds-alpha-range} are equivalent to
\begin{equation}
    \alpha p_{r-1}<1<\alpha p_r.
    \label{eq:mds-threshold-condition}
\end{equation}
Thus, $r$ is the first round in which the expected number of distinct observed
columns exceeds $k$ by a linear margin.

We next establish concentration around the mean. Since $0\le D_{i,k}(t)\le s$, by Hoeffding's inequality, for every $x>0$,
\begin{equation}
    \Pr\left(
        \left|D_k(t)-\E[D_k(t)]\right|\ge x
    \right)
    \le
    2\exp\left( -\frac{2x^2}{M_ks^2} \right).
    \label{eq:mds-hoeffding}
\end{equation}
Choose $\eta \triangleq \frac{1}{4} \min\left\{ 1-\alpha p_{r-1}, \alpha p_r-1 \right\}>0$. Since $n=\alpha k$, for $k$ sufficiently large, we have
$$
\begin{cases}
    \E[D_k(t)] \le (1-2\eta)k,~\text{when}~t\le r-1; \\
    \E[D_k(t)] \ge (1+2\eta)k,~\text{when}~t\ge r.
\end{cases}
$$
Combining these inequalities with \eqref{eq:mds-hoeffding} and using $M_k=\Theta(k)$, we  obtain,
$$
\begin{cases}
    \Pr(\tau_k\le t) = \Pr(D_k(t)\ge k) \le 2\exp\left( -\frac{2\eta^2k^2}{M_ks^2} \right) = e^{-\Omega(k)},~\text{when}~t\leq r-1;\\
    \Pr(\tau_k>t) = \Pr(D_k(t)<k) \le 2\exp\left(-\frac{2\eta^2k^2}{M_ks^2}\right) = e^{-\Omega(k)},~\text{when}~t\geq r.
\end{cases}
$$
Therefore, as $k\rightarrow \infty$, we have
\begin{equation}
    \Pr(\tau_k>t) \rightarrow
    \begin{cases}
        1, & 0\le t<r;\\
        0, & t\ge r,
    \end{cases}
    \label{eq:mds-tail-pointwise}
\end{equation}
for every fixed $t\geq 1$.

By \eqref{eq:mds-occupancy-mean}, the expectation number of columns that have not been observed by the end of round $t$ satisfies
$$
\E[n-D_k(t)]=na_s^t.
$$
If $\tau_k>t$, then fewer than $k$ distinct columns have been observed, and therefore $n-D_k(t)\ge n-k+1$. By Markov's inequality, this yields
\begin{equation}
    \Pr(\tau_k>t) \le \Pr(n-D_k(t)\ge n-k+1) \le \frac{na_s^t}{n-k+1}.
    \label{eq:mds-geometric-tail}
\end{equation}
Since $\frac{n}{k}=\alpha>1$, there exists a constant $C_\alpha>0$ such that $\frac{n}{n-k+1}\rightarrow C_\alpha$ as $k\rightarrow \infty$. Hence, we have
$$
\Pr(\tau_k>t) \le \min\{1,C_\alpha a_s^t\}.
$$
Then, using the tail-sum formula
and applying dominated convergence to \eqref{eq:mds-tail-pointwise}, we obtain
\begin{align*}
    \lim_{k\to\infty}\E[\tau_k] & =
    \lim_{k\to\infty} \sum_{t\ge0}\Pr(\tau_k>t)\\
    & = \sum_{t\ge0} \lim_{k\to\infty}\Pr(\tau_k>t)\\
    & = \sum_{t=0}^{r-1}1 = r,
\end{align*}
where the second equality follows since $\sum_{t\geq 0}C_{\alpha}a_s^t=\frac{C_{\alpha}}{1-a_s}<\infty$. Thus, we have $\E[\tau_k]=r+o(1)$. Moreover, since each complete round reads exactly $M_k$ columns, we further have
\begin{align}
    M_k\E[\tau_k] & =
    \frac{n}{s}\bigl(r+o(1)\bigr) = \frac{\alpha r}{s}\,k+o(k).
    \label{eq:mds-distributed-total}
\end{align}
This completes the proof.
\end{IEEEproof}

\begin{remark}\label{rmk: MDS multi-container vs single-container}
In addition to \eqref{eq:mds-alpha-range}, when $\alpha < (1 - e^{-r/s})^{-1}$, we have $\frac{\alpha-1}{\alpha} < e^{-r/s}$, which yields $\frac{r}{s} < \log\left(\frac{\alpha}{\alpha-1}\right)$. Combining $\E[\mathcal{C}_k] = n(H_n - H_{n-k})$ with \eqref{eq:mds-distributed-total}, we obtain in this case that
$$
\E[\mathcal{C}_k] - M_k\E[\tau_k] = \alpha \left( \log\left(\frac{\alpha}{\alpha-1}\right) - \frac{r}{s} \right) k + o(k).
$$
Hence we have $M_k\E[\tau_k]<\E[\mathcal{C}_k]$ for all sufficiently large $k$. That is, the expected total number of drawings of the multi-container model is strictly smaller than that of the single-server model. 

It is also worth noting that the number of containers satisfies $M_k = \Theta(k)$ and $M_k < k$. This holds because $s$ is fixed and satisfies $s>\alpha$: since $1 - a_s^{r-1} \ge 1 - a_s = \frac{1}{s}$, the upper bound on $\alpha$ in~\eqref{eq:mds-alpha-range} directly implies that $\alpha<s$.
\end{remark}

The saving established in Remark~\ref{rmk: MDS multi-container vs single-container} is a constant fraction $1-\frac{r/s}{\log\frac{\alpha}{\alpha-1}}$ of the total read cost. Since \eqref{eq:mds-alpha-range} forces $\alpha>(1-a_s^{r})^{-1}$ and hence $\log\frac{\alpha}{\alpha-1}<r\log\frac{s}{s-1}$, this fraction is at most
\[ 1-\parenv{s\log\frac{s}{s-1}}^{-1}
=\begin{cases}
    0.2787, & s=2;\\ 
    0.1779, & s=3;\\ 
    0.1310, & s=4, 
\end{cases}\]
and it tends to $0$ as $s\to\infty$. The regime is therefore genuinely a small-container regime: it requires $M_k=\alpha k/s=\Theta(k)$ containers, each holding a constant number $s$ of strands. 

\subsection{Polynomial Container Size}\label{sec: MDS large container size}

In this section, we consider the regime where each part of the balanced $M$-partition $\mathcal{P}_{B}$ of $[n]$ has size $n_i = n/M = k^{\varepsilon}$ for some fixed $0 < \varepsilon < 1$. Specifically, using the properties of negatively associated random variables, we derive the asymptotic behavior of $\mathbb{E}[\tau_k]$ as $k \to \infty$. The key properties regarding negative association, along with the detailed concentration and tail bounds, are provided in Appendix~\ref{app:mds-technical}.

Fix constants $\alpha > 1$ and $\varepsilon \in (0,1)$. For each integer $k\geq 1$, let $n = \alpha k$, $m_k = k^\varepsilon$, and $M_k = n/m_k$, and assume for simplicity that all these quantities are integers. Next, using Theorem~\ref{thm: CH bound for NA variables}, we prove the following result.

\begin{theorem}\label{thm: mds-balanced-nonimprovement}
Let $\mathcal{C}_k$ be a family of $[\alpha k, k]_q$ MDS codes and let $\tau_k$ be defined as in Theorem~\ref{thm:mds-balanced-improvement}. Set $\lambda \triangleq -\log \left(1 - \frac{1}{m_k}\right)$ and $t_c \triangleq \frac{\log(\alpha/(\alpha-1))}{\lambda}$. Then, as $k \to \infty$ (and thus $q \to \infty$), it holds that
\begin{equation}\label{eq:mds-nonimp-bound}
    \mathbb{E}\left[|\tau_k - t_c|\right] = O_\alpha\left(1 + \sqrt{\frac{m_k}{M_k}}\right).
\end{equation}
Consequently, we have
\begin{equation}\label{eq:mds-nonimp-exp}
    \mathbb{E}[\tau_k] = k^\varepsilon \log\left(\frac{\alpha}{\alpha-1}\right) + O_\alpha\left(1 + k^{\varepsilon-\frac{1}{2}}\right).
\end{equation}
\end{theorem}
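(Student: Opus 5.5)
We follow the occupancy approach of Theorem~\ref{thm:mds-balanced-improvement}, but with a concentration window that widens with $t$ and a continuous threshold $t_c$ in place of a fixed integer $r$. Write $a\triangleq 1-\frac{1}{m_k}=e^{-\lambda}$, and let $U_k(t)\triangleq n-D_k(t)$ be the number of columns not yet observed after $t$ rounds. Each column is unseen with probability $a^t$, so
\[
\E[U_k(t)]=n a^{t}=n e^{-\lambda t}.
\]
The threshold $t_c$ is exactly the point where $\E[U_k(t_c)]=n-k=n\frac{\alpha-1}{\alpha}$. By the MDS property, $\tau_k>t$ if and only if $U_k(t)\ge n-k+1$. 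We will bound
\[
\E|\tau_k-t_c| \le \sum_{t\ge0}\Pr(|\tau_k-t_c|>t)
\]
by handling separately the upper tail $\Pr(\tau_k>t_c+u)$ and the lower tail $\Pr(\tau_k\le t_c-u)$ for real $u\ge 0$.

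For concentration we use the indicators $\mathbf 1\{\text{column } j \text{ unseen by round } t\}$. Within one container these are negatively associated, since they are occupancy indicators of balls in bins. Across containers they are independent, so the whole family is negatively associated. We therefore apply the Chernoff--Hoeffding bound for NA variables (Theorem~\ref{thm: CH bound for NA variables}) to $U_k(t)$. It is cleaner, however, to use the per-container sums $D_{i,k}(t)\in[0,m_k]$, which are independent across the $M_k$ containers. Hoeffding then gives
\[
\Pr\bigl(|U_k(t)-\E U_k(t)|\ge x\bigr)\le 2\exp\!\Bigl(-\frac{2x^2}{M_k m_k^2}\Bigr).
\]

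The upper tail is handled as follows. At $t=t_c+u$ the mean deficit is
\[
(n-k)-\E U_k(t)=(n-k)\bigl(1-e^{-\lambda u}\bigr)\gtrsim_\alpha n\min\{1,u/m_k\},
\]
using $\lambda\approx 1/m_k$. Hoeffding then yields
\[
\Pr(\tau_k>t_c+u)\le 2\exp\!\bigl(-c_\alpha\, n^2\min\{1,u/m_k\}^2/(M_km_k^2)\bigr).
\]
Since $n=M_km_k$, the exponent simplifies to
\[
c_\alpha M_k\min\{1,u/m_k\}^2=c_\alpha\min\{M_k,\;u^2M_k/m_k^2\}.
\]
This is weaker than needed, and here lies the main obstacle: a plain Hoeffding bound with range $m_k$ per container loses a factor $m_k$. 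The right object is the variance. By NA, $\mathrm{Var}(U_k(t))\le \sum_j \Pr(\text{unseen})\le n$, so fluctuations are of order $\sqrt n$, and the time fluctuation is $\sqrt n/(\text{slope})$. The slope is $|\tfrac{d}{dt}\E U_k|=\lambda n e^{-\lambda t}\asymp n/m_k$ near $t_c$, which gives a time fluctuation of $m_k/\sqrt n=\sqrt{m_k/M_k}$, matching \eqref{eq:mds-nonimp-bound}.

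So the plan is to use a Bernstein/Chernoff-type NA bound on the sum of $n$ Bernoulli indicators (Theorem~\ref{thm: CH bound for NA variables}), which gives
\[
\Pr\bigl(|U_k(t)-\E U_k(t)|\ge x\bigr)\le 2\exp\!\bigl(-x^2/(2(n+x))\bigr).
\]
With $x\asymp_\alpha n\min\{1,u/m_k\}$, the tail is at most $\exp(-c\min\{n,u^2n/m_k^2\})$ for $u\le m_k$. Integrating over $u$ gives $O(m_k/\sqrt n)$. For $u>m_k$, we use Markov as in \eqref{eq:mds-geometric-tail}:
\[
\Pr(\tau_k>t)\le C_\alpha e^{-\lambda t}.
\]
This yields a geometric tail in units of $m_k$, of total mass $O(m_k e^{-cn})$ beyond $t_c+m_k\log n$, which is negligible, together with the $e^{-cn}$ bound on the window between. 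The lower tail is symmetric and is truncated at $u\le t_c$. The additive $O(1)$ accounts for integer rounding of $\tau_k$ against the real number $t_c$.

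Finally, for \eqref{eq:mds-nonimp-exp} we expand
\[
\lambda^{-1}=-1/\log\bigl(1-1/m_k\bigr)=m_k-\tfrac12+O(1/m_k),
\]
so $t_c=k^\varepsilon\log\frac{\alpha}{\alpha-1}+O(1)$. Combining this with $|\E\tau_k-t_c|\le\E|\tau_k-t_c|$ and $\sqrt{m_k/M_k}=\Theta_\alpha(k^{\varepsilon-1/2})$ completes the argument.
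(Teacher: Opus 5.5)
Your proposal is correct and follows essentially the paper's route: Theorem~\ref{thm: CH bound for NA variables} applied to the $n$ negatively associated unseen-column indicators, with mean deficits of order $\gtrsim_\alpha n\min\{1,u/m_k\}$ on both sides of $t_c$ summed into $O(1+\sqrt{m_k/M_k})$, and the far upper tail controlled by exponential bounds plus Markov. Since each indicator lies in $[0,1]$, that theorem already gives $2\exp(-2x^2/n)$, so your per-container and Bernstein detours are unnecessary; also, the Markov tail beyond $t_c+m_k\log n$ has mass $O(m_k/n)=O(1/M_k)$ rather than $O(m_k e^{-cn})$, which is still negligible.
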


\begin{IEEEproof}
We inherit the notations used in the proof of Theorem \ref{thm:mds-balanced-improvement}. Note that the stopping rule $D_k(t)\ge k$ is equivalent to the number of unseen coupons being at most $(\alpha-1) k$. Thus, in the following, we proceed the proof by estimating $\Pr(D_k(t)<k)$ and $\Pr(D_k(t)>k)$ according to different parameter regimes.

For the $j$-th column in container $i$, define the indicator variable
\begin{equation}\label{eq: (i,j)-indicator}
    I_{i,j}(t) \triangleq \mathbf{1}\{j\text{-th column in container } i \text{ has appeared by round } t\}.
\end{equation}
Then, by definition, we have $D_k(t)=\sum_{i=1}^{M_k}\sum_{j=1}^{m_k} I_{i,j}(t)$. Since the random variables $\{I_{i,j}(t)\}_{(i,j)\in [M_k]\times [m_k]}$ are negatively associated (see, e.g., \cite{BB25,Dubhashi_Panconesi_2009,DR98,JoagDevProschan83}) and independence across containers preserves this property \cite{DR98}, $D_k(t)$ is the sum of $n = M_k m_k$ negatively associated random variables. Recall that by \eqref{eq:mds-occupancy-mean}, we have $\mathbb{E}[D_k(t)] = n - n\left(1 - \frac{1}{m_k}\right)^t$. Thus, by Theorem~\ref{thm: CH bound for NA variables}, we obtain
\begin{equation}\label{eq: mds-balanced-nonimprovement}
    \Pr\left(|D_k(t) - \mathbb{E}[D_k(t)]| \ge x\right) \le 2\exp\left(-\frac{2x^2}{n}\right)
\end{equation}
for every $x > 0$. Furthermore, the definition of $t_c$ implies $\left(1 - \frac{1}{m_k}\right)^{t_c} = \frac{\alpha-1}{\alpha}$, which yields
$$
\mathbb{E}[n - D_k(\lceil t_c\rceil)] \leq n \left(1 - \frac{1}{m_k}\right)^{t_c} = (\alpha-1) k.
$$
Consequently, $\lceil t_c\rceil$ is the critical time when the expected number of unseen coupons reaches the threshold $(\alpha-1)k$.

We now estimate the tail of $\tau_k$. Let $r\geq 0$ be an integer and set $t=\floorenv{t_c+r}$. Since $\tau_k$ is integer valued, $\mathset{\tau_k>t_c+r}=\mathset{D_k(\floorenv{t_c+r})<k}=\mathset{D_k(t)<k}$. Since $t\geq t_c+r-1$ and $\E[D_k(\cdot)]$ is nondecreasing, we have
\begin{align}
    \mathbb{E}[D_k(t)]-k & \ge n-n\left(1-\frac{1}{m_k}\right)^{t_c+r-1}-k \nonumber \\
    &= (\alpha-1) k \left( 1-\left(1-\frac{1}{m_k}\right)^{r-1} \right). \label{eq: mds-balanced-nonimprovement1}
\end{align}
Note that for $0\le r\le m_k$, $1-\left(1-\frac{1}{m_k}\right)^r \ge c\frac{r}{m_k}$ for an absolute constant $0<c\leq 1$, and $M_k=\frac{n}{m_k}=\frac{\alpha k}{m_k}$. Hence, for $1\leq r\leq m_k+1$,
\[ 
\E[D_k(t)]-k\geq c(\alpha-1)k \frac{r-1}{m_k} =\frac{c(\alpha-1)}{\alpha} M_k(r-1),
\]
and Theorem \ref{thm: CH bound for NA variables} yields 
\begin{equation}\label{eq: mds-balanced-nonimprovement2}
\Pr(\tau_k>t_c+r)\leq \Pr\parenv{D_k(t)-\E[D_k(t)]\leq - \frac{c(\alpha-1)}{\alpha} M_k(r-1)}\leq 2\exp \parenv{-C_\alpha \frac{M_k(r-1)^2}{m_k}},
\end{equation} 
where $C_\alpha \triangleq 2\left(\frac{c(\alpha-1)}{\alpha}\right)^{2}$, and the bound is vacuous for $r=1$. 

Let $0\le r\le t_c-1$ and set
$t=\lfloor t_c-r\rfloor$. Since $\mathbb{E}[D_k(t)]$ is increasing in $t$,
\begin{align}
    k-\mathbb{E}[D_k(t)] & \ge k-n+n\left(1-\frac{1}{m_k}\right)^{t_c-r} \nonumber \\
    & = (\alpha-1) k\left(\left(1-\frac{1}{m_k}\right)^{-r}-1\right).  \label{eq: mds-balanced-nonimprovement3}
\end{align}
Since $\left(1-\frac{1}{m_k}\right)^{-r}=e^{\lambda r}$ and $\lambda=-\log(1-1/{m_k})\ge 1/{m_k}$, \eqref{eq: mds-balanced-nonimprovement3} can be simplified as
$$
k-\mathbb{E}[D_k(t)] \ge (\alpha-1) k(e^{\lambda r}-1) \ge (\alpha-1) k\lambda r \ge \frac{\alpha-1}{\alpha} M_k r.
$$
Hence,
\begin{align}
    \Pr(\tau_k<t_c-r) & \le \Pr(D_k(t)\ge k) \nonumber \\
    & \le \Pr\left(D_k(t) - \mathbb{E}[D_k(t)]\ge \frac{\alpha-1}{\alpha} M_k r\right) \nonumber \\
    &\le 2\exp\left(-2\left(\frac{\alpha-1}{\alpha}\right)^2\frac{M_k r^2}{m_k}\right). \label{eq: mds-balanced-nonimprovement4}
\end{align}

Combining the two tail estimates \eqref{eq: mds-balanced-nonimprovement2} and \eqref{eq: mds-balanced-nonimprovement4}, and writing $a^2\triangleq \frac{m_k}{M_k}$, we have, for $2\le r\le m_k$, $\Pr(|\tau_k-t_c|>r) \le 4\exp\left(-C_\alpha\frac{(r-1)^2}{a^2}\right)$.
Since $\tau_k$ is integer-valued, the standard tail-sum bound gives
\begin{align}
    \mathbb{E}[|\tau_k-t_c|] & \le 1+\sum_{\ell\ge 1}
    \Pr(|\tau_k-t_c|>\ell) \nonumber \\
    & = 1+ \sum_{\ell=1}^{m_k}\Pr(|\tau_k-t_c|>\ell)+\sum_{\ell>m_k}\Pr(|\tau_k-t_c|>\ell) \nonumber \\
    & \leq 2+ 4\sum_{\ell=2}^{m_k} \exp\left(-C_\alpha\frac{(\ell-1)^2}{a^2}\right)+\sum_{\ell>m_k}\Pr(|\tau_k-t_c|>\ell) \nonumber \\ 
    & \leq 2+ \tilde{C}_{\alpha}(1+a)+\sum_{\ell>m_k}\Pr(|\tau_k-t_c|>\ell), \label{eq: mds-balanced-nonimprovement6}
\end{align}
where 
$\tilde{C}_{\alpha}$ in \eqref{eq: mds-balanced-nonimprovement6} is a constant depends only on $\alpha$.

It remains to check that the contribution of $\sum_{\ell>m_k}\Pr(|\tau_k-t_c|>\ell)$ is negligible. 

Since $\lambda=m_{k}^{-1}(1+o(1))$ as $k\rightarrow \infty$, we have $t_c=O_\alpha(m_k)$. Hence,  there are $O_\alpha(m_k)$ number of values of $\ell$ which makes $\Pr(\tau_k<t_c-\ell)$ non-zero. Therefore, by \eqref{eq: mds-balanced-nonimprovement4}, we have
\begin{align}
    \sum_{\ell>m_k}\Pr(\tau_k-t_c<-\ell) & \leq O_\alpha(m_k)\cdot 2\exp\left(-2\left(\frac{\alpha-1}{\alpha}\right)^2\frac{M_k m_{k}^2}{m_k}\right) \nonumber \\
    & = O_\alpha(m_k)\cdot 2\exp\left(-2\left(\frac{\alpha-1}{\alpha}\right)^2 n\right) =o(1), \label{eq: mds-balanced-nonimprovement7}
\end{align}
as $k\rightarrow \infty$.

We next estimate $\sum_{\ell>m_k}\Pr(\tau_k-t_c>\ell)$. For a constant $A$ and $m_k<\ell\le A m_k\log k$, by \eqref{eq: mds-balanced-nonimprovement1}, we have
$$
\mathbb{E}[D(\lceil t_c+\ell\rceil)]-k \ge (\alpha-1) k \left( 1-\left(1-\frac{1}{m_k}\right)^{\ell} \right) \ge c_{\alpha} k,
$$
for some absolute constant $c_{\alpha}$ only depends on $\alpha$. Hence, by $\Pr(\tau_k-t_c>\ell)\leq \Pr(D(\lceil t_c+\ell\rceil)<k)$ and \eqref{eq: mds-balanced-nonimprovement}, we have
\begin{align}
    \sum_{\ell>m_k}^{Am_k\log k}\Pr(\tau_k-t_c>\ell) & \leq O(m_k\log k)\cdot 2\exp\left(-2{c_{\alpha}}^2\frac{k^2}{n}\right) =o(1), \label{eq: mds-balanced-nonimprovement8}
\end{align}
as $k\rightarrow \infty$. For $\ell>A m_k\log k$, since
$$
\mathbb{E}[n-D(\lceil t_c+\ell\rceil)] \le n \left(1-\frac{1}{m_k}\right)^{t_c+\ell} = (\alpha-1) k \left(1-\frac{1}{m_k}\right)^\ell,
$$
by Markov's inequality, this yields
$$
\Pr(\tau_k>t_c+\ell) = \Pr (n-D(\lceil t_c+\ell\rceil)> (\alpha-1) k) \le \left(1-\frac{1}{m_k}\right)^\ell.
$$
Hence, using the inequality $(1-x)^a \le \exp(-ax)$ for all $a \ge 0$ and $0 \le x \le 1$, we obtain
\begin{equation}\label{eq: mds-balanced-nonimprovement9}
    \sum_{\ell>A m_k\log k} \Pr(T_k>t_c+\ell) \le \sum_{\ell>A m_k\log k} \left(1-\frac{1}{m_k}\right)^\ell \le \frac{\left(1-\frac{1}{m_k}\right)^{A m_k\log k}}{1-\left(1-\frac{1}{m_k}\right)} \le m_k k^{-A}.
\end{equation}
Choosing $A=2$, by $\varepsilon\in(0,1)$ and $m_k=k^{\varepsilon}$, we have this becomes $o(1)$ uniformly as $k \rightarrow \infty$.

Finally, combining \eqref{eq: mds-balanced-nonimprovement6}, \eqref{eq: mds-balanced-nonimprovement7}, \eqref{eq: mds-balanced-nonimprovement8}, and \eqref{eq: mds-balanced-nonimprovement9} together, we have
$$
\mathbb{E}[|\tau_k-t_c|] \leq 1+ \tilde{C}_{\alpha}(1+a)+o(1)=O_{\alpha}\left(1+\sqrt{\frac{m_k}{M_k}}\right).
$$
Since $m_k=k^\varepsilon$ and $M_k=\alpha k^{1-\varepsilon}$, we have $\sqrt{\frac{m_k}{M_k}} = \alpha^{-\frac{1}{2}}k^{\varepsilon-\frac{1}{2}}$ and 
$$
t_c = \frac{\log(\frac{\alpha}{\alpha-1})}{-\log(1-k^{-\varepsilon})}=k^\varepsilon\log\left(\frac{\alpha}{\alpha-1}\right) + O_\alpha(1)
$$
as $k\rightarrow \infty$. This finally yields
$$
\mathbb{E}[\tau_k] = k^\varepsilon \log\left(\frac{\alpha}{\alpha-1}\right) + O_{\alpha}\left(1+k^{\varepsilon-\frac{1}{2}}\right),
$$
which completes the proof.
\end{IEEEproof}

\begin{remark}\label{rmk: MDS multi-container vs single-container0}
By \cite[Corollary 1]{BSGY25}, we have $\E[\mathcal{C}] = n\bigl(H_n-H_{n-k}\bigr) = \alpha\log\left(\frac{\alpha}{\alpha-1}\right)k+o(k)$. On the other hand, by Theorem~\ref{thm: mds-balanced-nonimprovement}, we have $\mathbb{E}[\tau_k]=k^\varepsilon\log\left(\frac{\alpha}{\alpha-1}\right)+O_{\alpha}\left(1+k^{\varepsilon-\frac{1}{2}}\right)$. Hence, the expected total number of reads across all $M$ containers is $M\mathbb{E}[\tau_k]=\alpha\log\left(\frac{\alpha}{\alpha-1}\right)k+O_{\alpha}\left(k^{1-\varepsilon}+k^{\frac{1}{2}}\right)$. In total, we have $M\mathbb{E}[\tau_k]/\mathbb{E}[\mathcal{C}]\longrightarrow 1$ when $k\to\infty$. Hence, in cases where each container stores $k^{\varepsilon}$ strands, distributing the reads across multiple containers enables parallel access without incurring any asymptotic increase in the total number of reads.
\end{remark}

\subsection{Low-Rate Obstructions}

In Theorem~\ref{thm: mds-balanced-nonimprovement}, we established that the distributed model provides no asymptotic savings in terms of total number of reads when all containers are required to have equal size $k^{\varepsilon}$. Next, we present another negative result, showing that for $M=2$, the distributed model offers no savings in the low-rate regime.

\begin{theorem}\label{thm:tworate}
Let $\mathcal{C}$ be an $[n,k]$ MDS code with $k\ge3$. Then, $2\E[\tau_k]> \E[\mathcal{C}]$ holds for the following cases
$$
\begin{cases}
    k=3~\text{and}~n\geq 3;\\
    k\geq 4~\text{and}~n\geq k^2.
\end{cases}
$$
\end{theorem}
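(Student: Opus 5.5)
The plan is to lower-bound $\E[\tau_k]$ for an arbitrary ordered $2$-partition $\mathcal P=(P_1,P_2)$ with $n_1+n_2=n$, and to compare it with the exact single-container value $\E[\mathcal C]=k+\mathrm{Ov}(\mathcal C)$, where $\mathrm{Ov}(\mathcal C)=\sum_{i=1}^{k-1}\frac{i}{n-i}$ by Theorem~\ref{thm:mdssplit}. By the MDS property, $\tau_k=\min\{t:|J_t|\ge k\}$. Since at most $2t$ distinct columns are seen after $t$ rounds, the tail-sum formula gives $\E[\tau_k]\ge \lceil k/2\rceil+\sum_{t\ge\lceil k/2\rceil}\Pr(|J_t|<k)$. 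The argument then splits according to the parity of $k$. In every case we compare with the estimate $\mathrm{Ov}(\mathcal C)\le \frac{k(k-1)}{2(n-k+1)}$ from Remark~\ref{rmk: improved regimes}.

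\emph{Odd $k$.} Here $2\E[\tau_k]\ge k+1$, so it suffices to show $\mathrm{Ov}(\mathcal C)<1$. For $k\ge5$ odd and $n\ge k^2$ we have $2(n-k+1)\ge 2(k^2-k+1)>k(k-1)$, so the claim follows at once. For $k=3$ the overhead is $\frac1{n-1}+\frac2{n-2}$, which is $<1$ once $n\ge 5$. The remaining small cases $n=3$ and $n=4$ I would check directly. For $n=3$ we have $\E[\mathcal C]=5.5$, while the only partition shape is $(1,2)$, which gives $\tau_k$ equal to the time needed to see both columns of the size-$2$ container; hence $\E[\tau_k]=3$ and $2\E[\tau_k]=6$. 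For $n=4$ I would compute the shapes $(1,3)$ and $(2,2)$ explicitly via \eqref{eq:distinctlaw}, comparing against $\E[\mathcal C]=4(H_4-H_1)=13/3$.

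\emph{Even $k\ge4$.} Put $t_0=k/2$. Then $2\E[\tau_k]\ge k+2\Pr(\tau_k>t_0)$, and $\tau_k>t_0$ holds exactly when some container repeats a column within the first $t_0$ rounds. So I need $2\Pr(\text{collision})>\mathrm{Ov}(\mathcal C)$. The no-collision probability factors as $\prod_{i=1}^{2}\prod_{j<t_0}(1-j/n_i)$. I would lower-bound the collision probability by the second Bonferroni inequality over pairs of draws within a container. This gives
\[
\Pr(\text{collision})\ \ge\ \binom{t_0}{2}\Bigl(\tfrac1{n_1}+\tfrac1{n_2}\Bigr)-\tfrac12\Bigl[\binom{t_0}{2}\Bigl(\tfrac1{n_1}+\tfrac1{n_2}\Bigr)\Bigr]^2 .
\]
Next I would use $\frac1{n_1}+\frac1{n_2}\ge\frac4n$ (convexity), together with monotonicity of $x-x^2/2$ on $[0,1]$; this needs some care when one $n_i$ is tiny, in which case the collision probability is already large. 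This yields $\Pr(\text{collision})\ge x-x^2/2$ with $x=\frac{k(k-2)}{2n}\le\frac12$.

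The main obstacle is that this crude comparison is tight at $k=4$. There $2x(1-x/2)$ must beat $\sum_{i=1}^{3}\frac{i}{n-i}\approx\frac6n+\frac{14}{n^2}$, whereas $2x=\frac{8}{n}$ only at $k=4$ happens to dominate comfortably. For larger $k$ the leading terms are $\frac{k(k-2)}{n}$ against $\frac{k(k-1)}{2n}$. Their ratio $2(k-2)/(k-1)\ge 4/3$ leaves room to absorb the quadratic Bonferroni loss, since $x\le 1/2$, and the second-order terms of $\mathrm{Ov}$, since $n\ge k^2$. I would therefore use the exact expansion $\mathrm{Ov}(\mathcal C)=\frac{k(k-1)}{2n}+O(k^3/n^2)$ rather than the cruder bound with $n-k+1$. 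I would then verify the resulting inequality $2x-x^2>\frac{k(k-1)}{2n}+\frac{k^3}{3n^2}(1+o(1))$ uniformly for $n\ge k^2$. The small values $k=4,6$ may need to be checked by hand with explicit constants.
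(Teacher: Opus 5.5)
\textbf{Verdict: there is a genuine gap.} Your skeleton matches the paper's proof: the counting bound $\tau_k\ge\lceil k/2\rceil$, the parity split, odd $k$ via $\mathrm{Ov}(\mathcal C)<1$, direct computation for $k=3$, $n\in\{3,4\}$, and for even $k$ the bound $2\E[\tau_k]-k\ge 2\Pr(\tau_k>k/2)$ with $x=\frac{k(k-2)}{2n}$. However, the even case is not closed, and at $k=4$ the chain you propose fails.

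\textbf{The gap at $k=4$.} With $x=4/n$ your bound gives $2\E[\tau_k]-4\ge 2x-x^2=\frac{8}{n}-\frac{16}{n^2}$. At $n=16$ this equals $\frac{7}{16}=0.4375$. The exact overhead there is already $\frac{1}{15}+\frac{2}{14}+\frac{3}{13}\approx 0.4403$. So no sharpening of the overhead side can rescue the argument, and your remark that $8/n$ ``dominates comfortably'' at $k=4$ is wrong.

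The loss comes from replacing $S_2$ by $\frac12 S_1^2$. That bound is valid, since collision events of uniform draws are pairwise independent, but it is too weak here. For $k=4$ there are only two collision events. The paper uses the exact value
\[
\Pr(\tau_k>2)=1-\Bigl(1-\frac{1}{n_1}\Bigr)\Bigl(1-\frac{1}{n_2}\Bigr)=\frac{n-1}{n_1n_2}\ge\frac{4(n-1)}{n^2},
\]
compares it with $\mathrm{Ov}(\mathcal C)\le\frac{6}{n-3}$, and checks $8(n-1)(n-3)>6n^2$ for $n\ge 16$. Your hedge that $k=4$ ``may need to be checked by hand'' is therefore where the actual argument is missing.

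\textbf{Even $k\ge 6$: the refinement is unnecessary.} You overestimate the difficulty here. No second-order expansion of $\mathrm{Ov}(\mathcal C)$ is needed, and your non-uniform $(1+o(1))$ term would not give a proof for all $n\ge k^2$ anyway. From $n\ge k^2$ you get $x\le\frac{k-2}{2k}$, hence
\[
2x-x^2=2x\Bigl(1-\frac{x}{2}\Bigr)\ge\frac{(k-2)(3k+2)}{4n}.
\]
Also $n-k+1\ge n(1-\frac{1}{k})$, so $\mathrm{Ov}(\mathcal C)\le\frac{k(k-1)}{2(n-k+1)}\le\frac{k^2}{2n}$. The desired inequality then reduces to $k^2-4k-4>0$, which holds for $k\ge 5$.

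\textbf{Monotonicity when $y>1$.} The issue you flag for $y=\binom{k/2}{2}(\frac{1}{n_1}+\frac{1}{n_2})>1$ disappears if you use $1-u\le e^{-u}$ instead of Bonferroni. The no-collision probability $\prod_{j<k/2}(1-\frac{j}{n_1})(1-\frac{j}{n_2})$ is at most $e^{-y}\le e^{-x}$. Hence $\Pr(\tau_k>k/2)\ge 1-e^{-x}\ge x-\frac{x^2}{2}$ for every $2$-partition, with no case analysis.
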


\begin{IEEEproof}
Let $\mathcal{P}$ be any $2$-partition of $[n]$. Then, since each round contributes at most two distinct columns, we have $\tau_k\ge\lceil k/2\rceil$. Moreover, by Theorem~\ref{thm:mdssplit} and $n-i\ge n-k+1$, we have
\begin{equation}\label{eq:wastebound}
\E[\mathcal{C}]-k=\sum_{i=1}^{k-1}\frac{i}{n-i}\ \le\ \frac{\binom k2}{n-k+1} .
\end{equation}

We first consider the case where $k=3$. In this case, we have $\tau_k\geq \lceil3/2\rceil$, which implies that the total number of reads is at least $2\lceil3/2\rceil=4$. 
\begin{itemize}
    \item If $n\ge5$, since $\tfrac n{n-1}+\tfrac n{n-2}$ decreases in $n$, then
    $$
    \E[\mathcal{C}]=1+\tfrac n{n-1}+\tfrac n{n-2}\leq1+\frac{5}{4}+\frac{5}{3}=\tfrac{47}{12}<4.
    $$
    Hence, $2\E[\tau_k]> \E[\mathcal{C}]$ for $n\geq 5$.
    \item For $n=4$, we split the discussion into two cases: 1) both parts in $\mathcal{P}$ has size $2$; 2) one part $\mathcal{P}$ has size $1$. For the first case, we have $\tau_k>t$ if and only if drawings from both containers keep repeating their first column (i.e., $D_1=D_2=1$), so $\Pr(\tau_k>t)=2^{1-t}\cdot2^{1-t}=4^{1-t}$ and $2\E[\tau_k]=2\bigl(1+\sum_{t\ge1}4^{1-t}\bigr)=2\bigl(1+\tfrac43\bigr)=\tfrac{14}3$. 
    For the second case, we have $\tau_k\geq t$ if and only if the drawings from the container with $3$ columns keep repeating its first column. Thus, we have $\Pr(\tau>t)=3^{1-t}$ and $2\E[\tau]=2\bigl(1+\tfrac32\bigr)=5$. Both cases exceeding $\E[\mathcal{C}]=4(H_4-H_1)=\tfrac{13}3$.
    \item For $n=3$, the $[3,3]$-MDS code is just the identity code. In this case, the $2$-partition always contains one part of size $2$. Thus, we have $\Pr(\tau_k>t)=2^{1-t}$ and $2\E[\tau]=2(1+2)=6$, which is larger than $\E[\mathcal{C}]=3H_3=\tfrac{11}2$.
\end{itemize}

We now consider the case where $k\geq 4$.

When $k$ is odd. Then, we have $\tau_k\geq \lceil k/2\rceil$ and $2\E[\tau_k]\geq k+1$. However, $n\ge k^{2}$ gives
$n-k+1\ge k^{2}-k+1>\binom k2$. Hence, $2\E[\tau_k]\geq k+1>\E[\mathcal{C}]$ by \eqref{eq:wastebound}.

When $k=4$, the event $\tau_k>2$ occurs if and only if one of the columns drawn in the first two rounds repeats. Hence, 
$$
\Pr(\tau_k>2)=1-\Bigl(1-\frac{1}{n_1}\Bigr)\Bigl(1-\frac{1}{n_2}\Bigr)
=\frac{n_1+n_2-1}{n_1n_2}=\frac{n-1}{n_1n_2}\ \ge\ \frac{4(n-1)}{n^{2}},
$$
where the last inequality follows by $n_1n_2\le n^{2}/4$. Hence
$$
2\E[\tau_k]-4\ \ge\ 2\Pr(\tau>2)\ \ge\ \frac{8(n-1)}{n^{2}},
$$
and by \eqref{eq:wastebound}, $\E[\mathcal{C}]-4\ \le\ \frac{6}{n-3}$. Note that $8(n-1)(n-3)>6n^{2}$ holds for every $n\ge 16=k^{2}$. This yields that $2\E[\tau_k]>\E[\mathcal{C}]$.
 
When $k\ge6$ and $k$ is even. Let $r\triangleq k/2\ge3$. Then, $\tau_k\le r$ occurs iff the columns drawn from each container in the first $r$ rounds are distinct. Hence,
$$
\Pr(\tau>r)=1-\prod_{j=1}^{r-1}\Bigl(1-\frac{j}{n_1}\Bigr)\Bigl(1-\frac{j}{n_2}\Bigr)
\ \ge\ 1-\exp\Bigl(-\binom r2\Bigl(\frac{1}{n_1}+\frac{1}{n_2}\Bigr)\Bigr)
\ \ge\ 1-e^{-x},
$$
where $x=\frac{4\binom r2}{n}=\frac{k(k-2)}{2n}$ and the second inequality follows by $1-u\le e^{-u}$ and the last inequality follows by $\frac{1}{n_1}+\frac{1}{n_2}\ge\frac4n$. Then, by $n\ge k^{2}$ we have $x\le\frac{k-2}{2k}$, and hence $1-\frac x2\ge\frac{3k+2}{4k}$. Thus, by $1-e^{-x}\geq x-\frac{x^2}{2}$, we have
$$
2\E[\tau_k]-k\ \ge\ 2\Pr(\tau_k>r)\ \ge\ 2\Bigl(x-\frac{x^{2}}2\Bigr)\ \ge\ 2x\cdot\frac{3k+2}{4k}
=\frac{(k-2)(3k+2)}{4n} .
$$
On the other side, $n\ge k^{2}$ yields $n-k+1\ge n(1-\tfrac1k)$. Hence, by \eqref{eq:wastebound}, $\E[\mathcal{C}]-k\le\frac{k(k-1)}{2n(1-1/k)}=\frac{k^{2}}{2n}$. Note that
$(k-2)(3k+2)-2k^{2}=k^{2}-4k-4>0$ for $k\ge5$. Thus, $2\E[\tau_k]-k>\E[\mathcal{C}]-k$. This completes the proof.
\end{IEEEproof} 

For comparison, in Appendix~\ref{app:mds-technical}, we provide three exact finite-parameter examples of higher-rate MDS codes where a balanced two-container layout yields a strict saving.

\section{Distributed Coverage Depth Problem of Simplex Codes}\label{sec: simplex distributed coverage}

In this section, we consider the distributed coverage depth problem of simplex codes. We first study the general coverage depth problem in the single-container model. As a consequence, we show that the simplex code is optimal in terms of coverage depth in the single-container model, which resolves a conjecture proposed by Bertuzzo, Ravagnani, and Yaakobi~\cite{BRY26}. Next, we consider the distributed setting and show that the simplex code achieves optimal distributed coverage depth when the partition $\mathcal{P}$ has a staircase form. Furthermore, we obtain the asymptotic formula for the distributed coverage depth of the simplex code when the partition $\mathcal{P}$ is randomly balanced.

Throughout this section, we use $\mathbf{G}_{\mathrm{simp}}$ to denote the standard normalized generator matrix of simplex code, whose columns are the representatives of the projective points with first nonzero coordinate equal to $1$.

\subsection{Optimality of Simplex Codes in the Single-Container Model}

We start by recalling the coverage depth result of simplex codes in the single-container model.

\begin{theorem}[\!{\cite[Theorem 3.1]{BRY26}}]\label{thm: simplexE}
Let $\mathcal{C}\subseteq \mathbb{F}_q^{n}$ be the $q$-ary simplex code of dimension $k$ and length $n=\bra{k}$. Then we have
\begin{equation}\label{eq:simplexE}
\E[\mathcal{C}]=\sum_{i=0}^{k-1}\frac{q^{k}-1}{q^{k}-q^{i}} = (1-q^{-k})\sum_{j=1}^{k}\frac{1}{1-q^{-j}}.
\end{equation}
\end{theorem}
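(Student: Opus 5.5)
We compute the expectation through the stage decomposition of Corollary~\ref{cor:single-container-stage-decomposition}. Let $\bbG_{\mathrm{simp}}$ be the standard generator matrix. Its columns are exactly one representative of each projective point of $\mathbb{P}=\mathrm{PG}(k-1,q)$, each appearing once, so $n=\bra{k}$. The key observation is that for any subspace $U\le\Fq^k$ of dimension $i$, the number of columns lying in $U$ is the number of projective points of $U$, namely $\bra{i}$. Hence
$$
\pi(U)=\frac{\bra{i}}{\bra{k}}=\frac{q^{i}-1}{q^{k}-1}
$$
depends only on $\dim U$, and not on which subspace $U$ is.

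Next we apply the stage decomposition. In the single-container model the span grows by exactly one dimension at each stage, so the subspace $V^{(i)}$ reached at stage $i$ always has dimension $i$. Therefore $\pi(V^{(i)})=\frac{q^i-1}{q^k-1}$ deterministically. The expectation inside the $i$-th summand of \eqref{eq:single-container-stage-formula} is then the constant
$$
\frac{1}{1-\frac{q^i-1}{q^k-1}}=\frac{q^k-1}{q^k-q^i}.
$$
Summing over $i=0,\dots,k-1$ gives the first expression in \eqref{eq:simplexE}. Equivalently, the holding time at stage $i$ is geometric with success probability $1-\pi(V^{(i)})$, by the argument in the proof of Theorem~\ref{thm:distributed-exact-representations}, and the result follows by linearity of expectation.

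For the second expression, divide the numerator and denominator of each summand by $q^k$:
$$
\frac{q^k-1}{q^k-q^i}=\frac{1-q^{-k}}{1-q^{i-k}}.
$$
Then substitute $j=k-i$, which runs over $1,\dots,k$, to obtain $(1-q^{-k})\sum_{j=1}^{k}\frac{1}{1-q^{-j}}$.

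The only step needing care is the point count $|\bbG_{\mathrm{simp}}\cap U|=\bra{\dim U}$. It holds because each nonzero vector of $U$ is a scalar multiple of exactly one normalized column, and that column lies in $U$. Given this count, the dimension-only dependence of $\pi$ makes the computation immediate.
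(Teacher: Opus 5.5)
Your proof is correct: the point count $|\mathbf{G}_{\mathrm{simp}}\cap U|=[\dim U]_q$ makes $\pi(V^{(i)})=[i]_q/[k]_q$ deterministic. Corollary~\ref{cor:single-container-stage-decomposition} then gives the first sum directly, and the reindexing $j=k-i$ gives the second form. The paper does not reprove this statement (it cites \cite{BRY26}), but your argument is the same deterministic-stage computation the paper uses for $\mathrm{RM}_q(1,m)$ in Theorem~\ref{thm:rm}(i) and for $\pi=u$ in the equality case of Lemma~\ref{lem: mass} and Theorem~\ref{thm: main}.
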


Throughout this section we use $\cC_{\mathrm{simp}}$ to represent the simplex code and write $\delta_q \triangleq \sum_{j\geq 1}\frac{1}{q^{j}-1}$, and
\begin{equation}\label{eq:simplexE2}
C^{(k)}_q \triangleq \mathrm{Ov}(\cC_{\mathrm{simp}}) =\E[\mathcal{C}]-k=\sum_{j=1}^{k}\frac{1-q^{j-k}}{q^{j}-1}.
\end{equation}
Clearly, $C^{(k)}_q<\delta_q$ for every $k$ and $C^{(k)}_q\to\delta_q$ as $k\to\infty$. 

For the single-container model, Corollary~\ref{cor:single-container-stage-decomposition} shows that the coverage depth of any code depends solely on the probability law $\pi$ of its column distribution across subspaces of $\mathbb{F}_q^{k}$. Hence, in this section, we write $\mathbb{E}[\mathcal{C}] = \mathbb{E}(\pi)$.

Recall that $\mathbb{P}$ denotes the set of projective points in $\mathbb{F}_q^k$. For a rank-$k$ generator matrix $\mathbf{G}=(\mathbf{g}_1,\dots,\mathbf{g}_n)$, we define
\begin{equation}\label{eq:pi-p-def}
\pi_\mathbf{p} \triangleq \frac{|\{j\in[n] \colon \langle\mathbf{g}_j\rangle = \mathbf{p}\}|}{n}
\end{equation}
for each $\mathbf{p}\in \mathbb{P}$. Then, for any subspace $U \le \mathbb{F}_q^k$, according to its definition in Section~\ref{subsec: exact formula}, we have $\pi(U) =\sum_{\mathbf{p}\in \mathbb{P}(U)} \pi_\mathbf{p}$, where $\mathbb{P}(U)$ denotes the set of projective points in $U$. With this notation, the simplex code, in which each projective point appears as a column exactly once, has a column law given by the uniform distribution on $\mathbb{P}$.

Next, we proceed with the proof of the optimality of simplex codes in the single-container model, starting with the following lemma.

\begin{lemma}\label{lem: mass}
For any probability law $\pi$ whose support spans $\mathbb{F}_q^{k}$, let $\bar{\pi}_i \triangleq \mathbb{E}[\pi(V^{(i)})]$ for each $0 \le i \le k$. Then, we have $\bar{\pi}_i \ge \frac{\gv{i}}{\gv{k}}$, for $0 \le i \le k$, where equality holds for every $i$ if and only if $\pi = u$.
\end{lemma}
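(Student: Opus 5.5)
We argue by induction on $i$, using the Markov structure of the single-container span process from Corollary~\ref{cor:single-container-stage-decomposition}. Write $u$ for the uniform law on $\mathbb{P}$, so that $u(U)=\gv{\dim U}/\gv{k}$. The cases $i=0$ and $i=k$ are trivial, since $\pi(V^{(0)})=0$ and $\pi(V^{(k)})=1$.

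For the inductive step, fix $V^{(i)}=U$ with $\dim U=i<k$. The next new point is drawn from $\pi$ conditioned on $\mathbb{P}\setminus\mathbb{P}(U)$. Hence, conditionally on $V^{(i)}=U$,
\[
\E\bigl[\pi(V^{(i+1)})\bigr]=\pi(U)+\sum_{\mathbf{p}\notin U}\frac{\pi_\mathbf{p}}{1-\pi(U)}\,\pi\bigl(\langle U,\mathbf{p}\rangle\setminus U\bigr).
\]
Group the points outside $U$ according to the $(i+1)$-spaces $W\supsetneq U$. These $W$ partition $\mathbb{P}\setminus\mathbb{P}(U)$, and there are $N=\gv{k-i}$ of them. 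Writing $x_W=\pi(W)-\pi(U)$, the sum becomes $\sum_W x_W^2/(1-\pi(U))$. Since $\sum_W x_W=1-\pi(U)$, Cauchy--Schwarz gives
\[
\sum_W \frac{x_W^2}{1-\pi(U)}\ \ge\ \frac{1-\pi(U)}{N}.
\]
Therefore
\[
\E\bigl[\pi(V^{(i+1)})\mid V^{(i)}\bigr]\ \ge\ \pi(V^{(i)})\Bigl(1-\frac1N\Bigr)+\frac1N .
\]

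Taking expectations, the right-hand side is affine and increasing in $\pi(V^{(i)})$, so no convexity issue arises. This gives $\bar\pi_{i+1}\ge \bar\pi_i(1-1/N)+1/N$. Using the induction hypothesis $\bar\pi_i\ge \gv{i}/\gv{k}$, it remains to check the identity
\[
\frac{\gv{i}}{\gv{k}}\Bigl(1-\frac{1}{\gv{k-i}}\Bigr)+\frac{1}{\gv{k-i}}=\frac{\gv{i+1}}{\gv{k}} .
\]
Multiplying through by $\gv{k}\gv{k-i}$, it reduces to $\gv{i}(\gv{k-i}-1)+\gv{k}=\gv{i+1}\gv{k-i}$. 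This follows from $\gv{k}=\gv{i}+q^i\gv{k-i}$ and $\gv{i+1}=\gv{i}+q^i$. This completes the inequality.

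For the equality case, note first that $\pi=u$ gives equality at every step, since then all $x_W$ are equal. Conversely, suppose equality holds for all $i$. Then equality must hold in every Cauchy--Schwarz step for every $U$ reached with positive probability. That is, for every such $U$, all $(i+1)$-spaces $W\supsetneq U$ carry equal extra mass $x_W$.

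The main obstacle is the case $i=0$, $U=\{0\}$, which is always reached. There the $W$ are exactly the projective points, so equality forces $\pi_\mathbf{p}=1/\gv{k}$ for all $\mathbf{p}$, i.e.\ $\pi=u$. So the equality characterization follows from step $0$ alone. More precisely, equality already at $i=1$ forces uniformity. Care is needed because equality at level $i+1$ requires equality in the inductive chain at all lower levels, including the $i=0$ step. Here I rely on $\bar\pi_1$ being exactly the $i=0$ Cauchy--Schwarz expression $\sum_\mathbf{p}\pi_\mathbf{p}^2\ge 1/\gv{k}$.
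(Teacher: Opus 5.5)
Your proposal is correct and follows the paper's proof essentially step for step. It uses the same conditional-expectation formula $\pi(U)+\sum_W x_W^2/(1-\pi(U))$ over the $[k-i]_q$ extensions $W\supsetneq U$, the same Cauchy--Schwarz bound giving the monotone affine recursion $\bar\pi_{i+1}\ge\bar\pi_i(1-1/N)+1/N$, and the same equality argument via $\bar\pi_1=\sum_{\mathbf{p}}\pi_{\mathbf{p}}^2\ge 1/[k]_q$; the only differences are cosmetic (you index the step as $i\to i+1$ and check the $q$-identity by clearing denominators rather than via $[k]_q-[i-1]_q=q^{i-1}[k-i+1]_q$).
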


\begin{IEEEproof}
We proceed the proof by induction on $i$. The case $i=0$ is immediate, since $V^{(0)}=\{\mathbf{0}\}$ and hence $\bar\pi_0=0={\gv 0}/{\gv k}$.

Now fix an integer $i$ with $1\le i\le k$, and assume that $\bar\pi_{i-1}\ge {\gv{i-1}}/{\gv k}$. Let $U$ be a fixed $(i-1)$-dimensional subspaces. The number of $i$-dimensional subspaces $W$ satisfying $U\leq W$ is
$$
N_i \triangleq \qbin{k-(i-1)}{1} = \gv{k-i+1},
$$
which is independent of $U$. Moreover, the sets $W\setminus U$, where $W$ ranges over all such $i$-dimensional subspaces, form a partition of $\mathbb{F}_q^k\setminus U$. For each such subspace $W$, define $a_W \triangleq \pi(W)-\pi(U)$. By the definition of $\pi(U)$, we have
$$
a_W = \sum_{\mathbf{p}\in \mathbb{P}(W)}\pi_{\mathbf{p}} -\sum_{\mathbf{p}\in \mathbb{P}(U)}\pi_{\mathbf{p}}
 = \sum_{\mathbf{p}\in \mathbb{P}(W)\setminus \mathbb{P}(U)} \pi_{\mathbf{p}} \ge 0.
$$
Hence,
$$
\sum_{\substack{
        U\leq W\leq \mathbb{F}_q^k \\ \dim(W)=i
    }}
    a_W = \pi(\mathbb{F}_q^k)-\pi(U) = 1-\pi(U).
$$

Conditioned on $V^{(i-1)}=U$ and the first draw leaving $U$, the resulting span is equal to $W$ with probability $a_W/(1-\pi(U))$. Therefore,
\begin{align}
    \E\!\left[\pi(V^{(i)})\mid V^{(i-1)}=U\right]
    &=
    \sum_{\substack{
        U\leq W\leq \mathbb{F}_q^k \\ \dim(W)=i
    }}
    \frac{a_W}{1-\pi(U)}\bigl(\pi(U)+a_W\bigr) \nonumber\\
    &=
    \pi(U)
    +
    \sum_{\substack{
        U\leq W\leq \mathbb{F}_q^k \\ \dim(W)=i
    }}
    \frac{a_W^2}{1-\pi(U)} .
    \label{eq:mass1}
\end{align}
By the Cauchy--Schwarz inequality, we obtain
$$
    \sum_{\substack{
        U\leq W\leq \mathbb{F}_q^k \\ \dim(W)=i
    }}
    a_W^2
    \ge
    \frac{1}{N_i}
    \Bigl(
    \sum_{\substack{
        U\leq W\leq \mathbb{F}_q^k \\ \dim(W)=i
    }}
    a_W
    \Bigr)^2
    =
    \frac{(1-\pi(U))^2}{N_i}.
$$
Substituting this bound into \eqref{eq:mass1} yields
\begin{align}
    \E\!\left[\pi(V^{(i)})\mid V^{(i-1)}=U\right] \ge
    \pi(U)+\frac{1-\pi(U)}{N_i} =
    \pi(U)\left(1-\frac{1}{N_i}\right)
    +\frac{1}{N_i}.
    \label{eq:mass2}
\end{align}

Taking expectations in both sides of \eqref{eq:mass2} over $U$, we obtain $\bar\pi_i \ge \bar\pi_{i-1}\left(1-\frac{1}{N_i}\right) + \frac{1}{N_i}$. Since $N_i=\gv{k-i+1}\ge 1$, the map
$x\mapsto x(1-\frac{1}{N_i})+\frac{1}{N_i}$ is nondecreasing. By the
induction hypothesis, this yields
$$
\bar\pi_i \ge \frac{\gv{i-1}}{\gv k}\left(1-\frac{1}{N_i}\right) + \frac{1}{N_i}.
$$
Then, using $\gv k-\gv{i-1} = \frac{q^k-q^{i-1}}{q-1} = q^{i-1}\gv{k-i+1} = q^{i-1}N_i$, we further obtain
\begin{align*}
    \frac{\gv{i-1}}{\gv k}\left(1-\frac{1}{N_i}\right)
    +\frac{1}{N_i}
    &=
    \frac{\gv{i-1}}{\gv k}
    +
    \frac{\gv k-\gv{i-1}}{\gv k\,N_i} \\
    &=
    \frac{\gv{i-1}}{\gv k}
    +
    \frac{q^{i-1}}{\gv k} =
    \frac{\gv i}{\gv k},
\end{align*}
where the last equality follows from $\gv i-\gv{i-1}=q^{i-1}$. This proves
the desired lower bound.


Finally, the equality in the Cauchy-Schwarz step at level $i$ forces the numbers $a_W$ to be constant over all $i$-dimensional subspaces $W$ containing $U$, for every possible $U$ in the support of $V^{(i-1)}$. Specifically, taking $i=1$ already suffices: for $i=1$, we have $V^{(0)}=\{0\}$, and the relevant subspaces $W$ are exactly the points $p\in\mathcal{P}$, with $a_W=\pi_p$. Equivalently, $\bar\pi_1=\sum_{p\in\mathcal{P}}\pi_p^{2}\geqslant\frac{1}{[k]_q}$ with equality if and only if $\pi_p\equiv1/[k]_q$, i.e., $\pi=u$. Conversely, if $\pi=u$, then $u(V^{(i)})=[i]_q/[k]_q$ for every $i$, so equality holds at every level.
\end{IEEEproof}

With the help of Lemma \ref{lem: mass}, we next show that simplex codes has the smallest coverage depth among all kinds of linear codes with the same code length and dimension, which confirms \cite[Conjecture 3.2]{BRY26}.

\begin{theorem}\label{thm: main}
Let $u$ be the uniform probability law on $\mathbb{P}$. For every probability law $\pi$ on $\mathbb{P}$, we have
$$
E(\pi)\ge E(u) = \sum_{i=0}^{k-1}\frac{q^k-1}{q^k-q^i}.
$$
Moreover, equality holds if and only if $\pi=u$. 
In particular, among length-$[k]_q$, dimension-$k$ linear codes, the $q$-ary simplex code is, up to isomorphism, the unique minimizer.
\end{theorem}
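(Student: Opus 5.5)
We start from the stage decomposition of Corollary~\ref{cor:single-container-stage-decomposition},
\[
E(\pi)=\sum_{i=0}^{k-1}\E\!\left[\frac{1}{1-\pi(V^{(i)})}\right].
\]
Our plan is to bound each summand separately using Lemma~\ref{lem: mass}, and then check that the resulting bound is exactly the simplex value of Theorem~\ref{thm: simplexE}.

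\textbf{Lower bound.} The map $x\mapsto 1/(1-x)$ is convex on $[0,1)$. Since $\pi(V^{(i)})<1$ for $i<k$ almost surely, Jensen's inequality gives
\[
\E\!\left[\frac{1}{1-\pi(V^{(i)})}\right]\ge \frac{1}{1-\bar\pi_i}.
\]
The map is also increasing, so Lemma~\ref{lem: mass} yields
\[
\frac{1}{1-\bar\pi_i}\ge \frac{1}{1-\gv{i}/\gv{k}}=\frac{\gv{k}}{\gv{k}-\gv{i}}=\frac{q^k-1}{q^k-q^i}.
\]
Summing over $0\le i\le k-1$ gives $E(\pi)\ge\sum_{i=0}^{k-1}\frac{q^k-1}{q^k-q^i}$. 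This sum equals $E(u)$ by Theorem~\ref{thm: simplexE}; it can also be checked directly, since under $u$ we have $\pi(V^{(i)})=\gv{i}/\gv{k}$ deterministically.

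The case where the support of $\pi$ does not span $\F_q^k$ needs a separate remark. Then $\pi(H)=1$ for some hyperplane $H$, so the process never reaches full rank and $E(\pi)=\infty$. In this case the inequality holds trivially and is strict.

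\textbf{Equality.} Suppose $E(\pi)=E(u)$. Then every step above is tight. In particular, the $i=1$ term must satisfy $\frac{1}{1-\bar\pi_1}=\frac{1}{1-1/\gv{k}}$, because each term is individually bounded below and the sum is attained. Since $x\mapsto 1/(1-x)$ is strictly increasing, this forces $\bar\pi_1=1/\gv{k}$.

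The final paragraph of the proof of Lemma~\ref{lem: mass} gives $\bar\pi_1=\sum_{\mathbf p}\pi_{\mathbf p}^2$, with equality to $1/\gv{k}$ if and only if $\pi=u$. Hence $\pi=u$. Conversely, $\pi=u$ clearly attains equality.

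\textbf{Translation to codes.} Let $\cC$ be a length-$\gv{k}$, dimension-$k$ code. Its column law $\pi$ equals $u$ if and only if each projective point appears exactly once among the columns, up to nonzero scalars. This is precisely the statement that $\cC$ is monomially equivalent to the simplex code. By \cite[Proposition 2.1]{BRY26}, $\E[\cC]$ is independent of the choice of generator matrix, so minimality and uniqueness up to isomorphism follow.

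\textbf{Main obstacle.} The delicate point is the equality case. One must make sure that tightness of the total sum forces tightness of the $i=1$ term individually. This holds because every term has its own lower bound. The remaining work is to verify the identity $\bar\pi_1=\sum_{\mathbf p}\pi_{\mathbf p}^2$: the first drawn point is $\mathbf p$ with probability $\pi_{\mathbf p}$, and $V^{(1)}=\mathbf p$ then has mass $\pi_{\mathbf p}$.
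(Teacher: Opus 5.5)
Your proposal is correct and follows the paper's proof essentially step for step. The shared steps are the stage decomposition, Jensen's inequality for $x\mapsto(1-x)^{-1}$, Lemma~\ref{lem: mass} combined with monotonicity, and an equality case that comes down to $\bar\pi_1=\sum_{\mathbf p}\pi_{\mathbf p}^2=1/[k]_q$. The only cosmetic difference is in that equality case: you isolate the $i=1$ term directly, so you never need strict convexity, whereas the paper requires tightness at every level and then invokes the equality clause of Lemma~\ref{lem: mass}, whose proof itself reduces to the $i=1$ level.
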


\begin{IEEEproof}
If $\pi(V^{(i)})=1$ with positive probability for some $i<k$, then Corollary \ref{cor:single-container-stage-decomposition} gives $E(\pi)=\infty$, and the desired inequality is trivial. Thus, in what follows, we assume that $\pi(V^{(i)})<1$ a.s. for every
$0\le i\le k-1$.

Let $\phi(x)=(1-x)^{-1}$, for $0\le x<1$. By Corollary \ref{cor:single-container-stage-decomposition} and Jensen's inequality, for each $0\le i\le k-1$, we have
$$   \E\!\left[\phi\!\left(\pi(V^{(i)})\right)\right] \ge \phi\!\left(\E[\pi(V^{(i)})]\right) = \phi(\bar\pi_i).
$$
Since $\phi$ is strictly increasing on $[0,1)$, Lemma~\ref{lem: mass} implies
\begin{align*}
    \E(\pi)
    &=
    \sum_{i=0}^{k-1}
    \E\!\left[\phi\!\left(\pi(V^{(i)})\right)\right] \ge
    \sum_{i=0}^{k-1}
    \phi(\bar\pi_i) \\
    & \ge
    \sum_{i=0}^{k-1}
    \phi\!\left(\frac{\gv i}{\gv k}\right) =
    \sum_{i=0}^{k-1}
    \frac{\gv k}{\gv k-\gv i} \\
    & =
    \sum_{i=0}^{k-1}
    \frac{q^k-1}{q^k-q^i}= \E(u),
\end{align*}
where the last equality follows by Theorem \ref{thm: simplexE}.

It remains to characterize the equality case. Since all the gaps in the two inequalities above are nonnegative, equality $E(\pi)=E(u)$ holds if and only if
equality holds term by term in both inequalities, namely,
$$   \E\!\left[\phi\!\left(\pi(V^{(i)})\right)\right] = \phi(\bar\pi_i)
$$
and $\phi(\bar\pi_i) = \phi\!\left(\frac{\gv i}{\gv k}\right)$ for all $0\le i\le k-1$. For the first equality, since $\phi(x)=(1-x)^{-1}$ is strictly convex on $[0,1)$, equality in Jensen's inequality holds if and only if
$$
\pi(V^{(i)})=\E[\pi(V^{(i)})]=\bar\pi_i.
$$
For the second equality, since $\phi$ is strictly increasing, it is equivalent to $\bar\pi_i={\gv i}/{\gv k}$ holds for all $0\le i\le k-1$. Thus, by Lemma~\ref{lem: mass}, the second equality holds if and only if $\pi=u$. Moreover, when $\pi=u$, we have $u(V^{(i)})={\gv i}/{\gv k}$ for every $i$, and hence the equality condition in Jensen's inequality is also satisfied. Therefore, equality in the theorem holds if and only if $\pi=u$.
\end{IEEEproof}

\begin{corollary}\label{rev:projective-stability}
Let $k\ge2$, let $N=[k]_q$, and let $\pi$ be a probability distribution on $\mathbb{P}$ whose support spans $\mathbb F_q^k$. Then
\begin{equation}\label{rev:stability-bound}
E(\pi)-E(u) \ge 
\frac{\|\pi-u\|_2^2}{(1-1/N)^2}.
\end{equation}
\end{corollary}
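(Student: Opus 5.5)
The plan is to use the stage decomposition of Corollary~\ref{cor:single-container-stage-decomposition}. We bound all stages except the first ``free'' one by their uniform counterparts, exactly as in the proof of Theorem~\ref{thm: main}, and extract the entire quadratic gain from the single stage $i=1$, where the expectation is explicit. Write
\[
E(\pi)=\sum_{i=0}^{k-1}\E\bigl[\phi(\pi(V^{(i)}))\bigr],\qquad \phi(x)=(1-x)^{-1}.
\]
If $\pi(V^{(i)})=1$ with positive probability for some $i<k$, then $E(\pi)=\infty$ and the bound is trivial, so we may assume otherwise.

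First, we compare the terms termwise with $E(u)=\sum_{i}\phi([i]_q/[k]_q)$. The $i=0$ term equals $1$ for both laws. For $2\le i\le k-1$, Jensen's inequality and Lemma~\ref{lem: mass} give
\[
\E[\phi(\pi(V^{(i)}))]\ge\phi(\bar\pi_i)\ge\phi([i]_q/[k]_q),
\]
as in the proof of Theorem~\ref{thm: main}. Therefore
\[
E(\pi)-E(u)\ge \E[\phi(\pi(V^{(1)}))]-\phi(1/N).
\]
Since $V^{(1)}$ is the point $\mathbf p$ of the first draw, which occurs with probability $\pi_{\mathbf p}$, we get
\[
\E[\phi(\pi(V^{(1)}))]=\sum_{\mathbf p\in\mathbb P}\pi_{\mathbf p}\,\phi(\pi_{\mathbf p}).
\]
Also $\phi(1/N)=\sum_{\mathbf p}\frac1N\phi(\frac1N)$. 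Since $x\phi(x)=\phi(x)-1$ and both laws sum to one, the difference becomes
\[
\sum_{\mathbf p}\bigl(\phi(\pi_{\mathbf p})-\phi(1/N)\bigr).
\]

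Second, we use the exact second-order expansion of $\phi$ around $c=1/N$, which can be checked by clearing denominators:
\[
\phi(x)-\phi(c)-\frac{x-c}{(1-c)^2}=\frac{(x-c)^2}{(1-c)^2(1-x)} .
\]
We sum this over $\mathbf p$ with $x=\pi_{\mathbf p}$. The linear terms cancel because $\sum_{\mathbf p}(\pi_{\mathbf p}-1/N)=0$. Moreover, $1/(1-\pi_{\mathbf p})\ge 1$, and $\pi_{\mathbf p}<1$ holds because the support spans $\mathbb F_q^k$ with $k\ge2$. Hence
\[
\sum_{\mathbf p}\bigl(\phi(\pi_{\mathbf p})-\phi(1/N)\bigr)\ge\frac{\sum_{\mathbf p}(\pi_{\mathbf p}-1/N)^2}{(1-1/N)^2}=\frac{\|\pi-u\|_2^2}{(1-1/N)^2},
\]
which is \eqref{rev:stability-bound}.

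The only step needing care is the reduction to the first stage, namely checking that the discarded stages $i\ge2$ contribute nonnegatively. This is exactly the content of Lemma~\ref{lem: mass} combined with Jensen's inequality, so I expect no real obstacle. The algebraic identity above is what makes the constant $(1-1/N)^{-2}$ appear exactly, with no loss from bounding $\phi''$.
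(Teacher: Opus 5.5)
Your proof is correct and follows the paper's argument. Like the paper, you discard every stage except $i=1$ using Lemma~\ref{lem: mass} together with Jensen's inequality, and extract the whole gain from the first stage.

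The only difference is how you treat stage $1$. The paper applies Jensen there as well and uses $\bar\pi_1=\sum_{\mathbf p}\pi_{\mathbf p}^2=1/N+d$, where $d=\|\pi-u\|_2^2$. This gives $\phi(\bar\pi_1)-\phi(1/N)=\frac{d}{(1-1/N-d)(1-1/N)}\ge \frac{d}{(1-1/N)^2}$. You instead evaluate $\E[\phi(\pi(V^{(1)}))]$ exactly and apply the pointwise remainder identity. That yields a slightly sharper intermediate bound, but the final constant is the same.
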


\begin{proof}
Let $d \triangleq \|\pi-u\|_2^2 = \sum_{p\in\mathbb P}\left(\pi_p - 1/N\right)^2$. By Lemma~\ref{lem: mass}, for each $0 \le i \le k-1$, the difference $\E\!\left[\phi\!\left(\pi(V^{(i)})\right)\right] - \frac{q^k-1}{q^k-q^i}$ is non-negative. Hence, considering only the term corresponding to $i=1$, we have
$$
\E(\pi) - \E(u) \ge \frac{1}{1-\bar\pi_1} - \frac{1}{1-1/N} = \frac{d}{(1-1/N-d)(1-1/N)},
$$
where the second equality follows from $\bar\pi_1 = \sum_{p\in\mathbb P}\pi_p^2 = 1/N + d$. Since the support of $\pi$ spans $\mathbb{F}_q^k$, it must contain at least two points. Hence, $\sum_p\pi_p^2 < 1$ and $0 < 1 - 1/N - d < 1 - 1/N$, which directly yields \eqref{rev:stability-bound}.
\end{proof}

\subsection{Full Saving for $M=k$ and No-Saving Obstructions for $M<k$}\label{sec: simplex codes multi-container}

As a direct consequence of Theorem~\ref{thm:floor}, the following lemma establishes the explicit construction of one-round-perfect partitions for the simplex code.

\begin{lemma}\label{lem: optimality of simplex code for multiple containers}
For each $1\le i\le k$, assign to container $i$ all columns of the form
$$
\mathbf{g} = (\underbrace{0,\ldots,0}_{i-1},1,*,\ldots,*)^{T},
$$
where the entries denoted by $*$ are arbitrary elements of $\mathbb{F}_q$, and denote the corresponding $k$-partition of $[(q^k-1)/(q-1)]$. Then, $\mathcal{P}$ is one-round-perfect, and  $\tau(\mathbf{G}_{\mathrm{simp}},\mathcal{P})=1$ a.s.
\end{lemma}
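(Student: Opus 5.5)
The plan is to recognize the described layout as a staircase partition in the sense of the proof of Theorem~\ref{thm:floor} and then check one-round perfection directly. Write $\mathbf{e}_1,\dots,\mathbf{e}_k$ for the standard basis and $V_i\triangleq\spn(\mathbf{e}_k,\mathbf{e}_{k-1},\dots,\mathbf{e}_{k-i+1})$. The first step confirms that the sets $P_i$ form an ordered $k$-partition of $[\,[k]_q\,]$. The columns of $\mathbf{G}_{\mathrm{simp}}$ are the normalized projective representatives, i.e.\ the nonzero vectors whose first nonzero coordinate equals $1$. Each such column has a unique position $i$ of its leading $1$. Hence container $i$ consists precisely of the columns with leading position $i$, has $q^{k-i}$ elements, and is nonempty. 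This matches the container sizes $(q^{k-1},\dots,q,1)$ in Table~\ref{tab:one-round-perfect-layouts}.

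The second step verifies one-round decodability directly, which avoids matching the arbitrary choices in the staircase construction. Take any $\mathbf{h}_1,\dots,\mathbf{h}_k$ with $\mathbf{h}_i$ drawn from container $i$. Then the $k\times k$ matrix $(\mathbf{h}_1,\dots,\mathbf{h}_k)$ is lower unitriangular: column $i$ has zeros in rows $1,\dots,i-1$ and a $1$ in row $i$. Its determinant is therefore $1$, so the columns span $\mathbb{F}_q^k$. The same conclusion can be phrased as in Theorem~\ref{thm:floor}: reading the containers in reverse order $k,k-1,\dots,1$, each $\mathbf{h}_i$ lies in $V_{k-i+1}\setminus V_{k-i}$. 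This identifies the layout with the staircase partition built from the choice $\mathbf{g}_{j_1}=\mathbf{e}_k$, $\mathbf{g}_{j_2}=\mathbf{e}_{k-1}$, and so on.

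The final step concludes. Since every $k$-tuple with one column per container spans $\mathbb{F}_q^k$ and $M=k$, Definition~\ref{def:one-round-perfect-unified} gives that $\mathcal{P}$ is one-round-perfect and $\tau(\mathbf{G}_{\mathrm{simp}},\mathcal{P})=1$ a.s. Alternatively, one can invoke Theorem~\ref{thm: oneround} by noting that no hyperplane meets all containers; the determinant argument is more direct, so I would use it.

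There is no real obstacle here. The only point needing care is the bookkeeping that links the leading-position description to the staircase ordering, in particular that the ordering is reversed relative to the chain $V_1\le\cdots\le V_k$. The triangular-determinant check sidesteps this issue entirely.
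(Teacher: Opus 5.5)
Your proposal is correct and follows essentially the same argument as the paper. Both identify container $i$ with the columns whose leading $1$ sits in position $i$; the paper writes this as $W_i\setminus W_{i+1}$ with $W_i=\langle \mathbf{e}_i,\dots,\mathbf{e}_k\rangle$. Both then conclude that any one-per-container choice forms a triangular matrix with unit diagonal, hence a basis. Your ``lower unitriangular'' and the paper's ``upper triangular'' differ only in whether the vectors are written as columns or as rows.
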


\begin{proof}
Let $W_k\subsetneq W_{k-1}\subsetneq\dots\subsetneq W_1=\F_q^{k}$ be a sequence of subspaces of $\F_q^k$, where $W_i=\langle \bbe_i,\bbe_{i+1},\dots,\bbe_k\rangle$ and $W_{k+1}\triangleq \{0\}$. Then, container $i$ is exactly the set of columns lying in $W_i\setminus W_{i+1}$, with $q^{k-i}$ elements, and any set of $k$ columns $(\bbh_1,\dots,\bbh_k)$ with $\bbh_i\in W_i\setminus W_{i+1}$ forms an upper triangular matrix with nonzero diagonal entries, hence a basis of $\F_q^{k}$. Hence, the result follows by the definition of one-round-perfection.
\end{proof}

\begin{remark}\label{rem:small-overhead}
By \eqref{eq:simplexE2}, $\mathrm{Ov}(\cC_{\mathrm{simp}})<\delta_q = \sum_{j\geq 1}(q^{j}-1)^{-1}$, with $\delta_2=1.6067$ and $\delta_q<1$ for $q\geq 3$. Thus, for simplex codes, whether distribution reduces the total number of reads concerns an overhead bounded by an absolute constant, independent of $k$ (at most about $1.61$ when $q=2$, and less than one when $q\geq 3$). 
\end{remark}

By Theorem \ref{thm:floor}, we have the following immediate corollary.

\begin{corollary}\label{cor: staircase-opt}
For any $q$-ary $k$-dimensional simplex code, we have $\min_{M,\mathcal{P}} M \mathbb{E}\left[\tau(\mathbf{G}_{\mathrm{simp}},\mathcal{P})\right] = k$.
This minimum is attained by the $k$-partition of $[(q^{k}-1)/(q-1)]$ given in Lemma~\ref{lem: optimality of simplex code for multiple containers}. Thus, distributing the simplex code into $k$ containers as in Lemma~\ref{lem: optimality of simplex code for multiple containers} achieves full savings $\mathrm{Ov}(\mathcal{C}_{\mathrm{simp}})$.
\end{corollary}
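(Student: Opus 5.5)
The corollary has two halves, a universal lower bound and a matching construction, and I would prove each from results already in hand.

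\textbf{Lower bound.} I would first show that $M\E[\tau(\mathbf{G}_{\mathrm{simp}},\mathcal{P})]\ge k$ for every $M$ and every ordered $M$-partition $\mathcal{P}$. By Remark~\ref{rmk: synchronous}, $\tau\ge \lceil k/M\rceil$ holds deterministically: after $t$ rounds only $Mt$ columns have been observed, so the span has dimension at most $Mt$. Hence
\[
M\E[\tau]\ge M\lceil k/M\rceil\ge k.
\]
This covers $1\le M\le k$. For $M>k$, we still have $\tau\ge 1$, so $M\E[\tau]\ge M>k$. For $M\le k$, Theorem~\ref{thm:floor} also gives the bound directly, with a small positive additive term. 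Therefore the infimum over all $(M,\mathcal{P})$ is at least $k$.

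\textbf{Attainment.} Next I would take $M=k$ and the partition $\mathcal{P}$ of Lemma~\ref{lem: optimality of simplex code for multiple containers}. Each container is nonempty: container $i$ has $q^{k-i}\ge 1$ columns, so this is a valid ordered $k$-partition of $[(q^k-1)/(q-1)]$. That lemma gives $\tau(\mathbf{G}_{\mathrm{simp}},\mathcal{P})=1$ a.s., so
\[
k\,\E[\tau]=k.
\]
The minimum is therefore attained and equals $k$.

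\textbf{Full saving.} Finally, by Definition~\ref{def:overhead-saving-unified}, the layout achieves full saving exactly when $M\E[\tau]=k$. This was just verified, so the saving is
\[
\E[\cC_{\mathrm{simp}}]-k=\mathrm{Ov}(\cC_{\mathrm{simp}})=C_q^{(k)},
\]
with $C_q^{(k)}$ given by \eqref{eq:simplexE2}.

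There is no real obstacle. The only point needing care is the case $M>k$ in the lower bound, which is handled by $\tau\ge1$, together with checking that the staircase parts are nonempty.
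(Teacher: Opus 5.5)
Your proof is correct and follows the paper's route: the paper treats this as an immediate consequence of the counting lower bound in Theorem~\ref{thm:floor}, with the bound attained by the one-round-perfect partition of Lemma~\ref{lem: optimality of simplex code for multiple containers}. Your explicit treatment of $M>k$ via $\tau\ge 1$ covers a case that Theorem~\ref{thm:floor}, stated only for $M\le k$, leaves implicit; one small correction is that its additive term is zero, not positive, when $M=k$.
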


We stress that the weaker consequence $M\E[\tau]\geq k$ of Theorem~\ref{thm:floor} is not sufficient here, since the single-container coverage depth of the simplex code equals $k+C^{(k)}_q>k$. What is needed is the \emph{integrality} of the bound. The following theorem gives $M\E[\tau]\geq M\ceilenv{k/M}$, and by Remark~\ref{rem:small-overhead}, the overhead $C^{(k)}_q$ of the simplex code is bounded by an absolute constant, so this rounding alone already rules out any saving whenever $M\ceilenv{k/M}>k$.

\begin{theorem}\label{thm:intclass}
Let $\mathcal{C}$ be a $q$-ary simplex code of dimension $k$ with standard generator matrix $\mathbf{G}_{\mathrm{simp}}$, and let $M$ be any integer satisfying $1 < M < k$. Then, for every $M$-partition $\mathcal{P}$ of $[(q^k-1)/(q-1)]$, it holds that $M\mathbb{E}[\tau(\mathbf{G}_{\mathrm{simp}}, \mathcal{P})] \ge M\lceil k/M\rceil$. Moreover, we have the following two integrality obstructions:
\begin{enumerate}
\item[(i)] If $M \nmid k$ and $M \nmid (k+1)$, then $M\lceil k/M\rceil \ge k+2 > \mathbb{E}[\mathcal{C}]$.
\item[(ii)] If $M \mid (k+1)$ and either $q \ge 3$ or $k = 3$, then $M\lceil k/M\rceil = k+1 > \mathbb{E}[\mathcal{C}]$.
\end{enumerate}
\end{theorem}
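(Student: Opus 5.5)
The first claim is the counting part of Theorem~\ref{thm:floor}. After $t$ rounds at most $Mt$ columns have been seen. So for $t<\lceil k/M\rceil$ we have $\Pr(\tau>t)=1$, and the tail-sum formula gives $\E[\tau(\mathbf{G}_{\mathrm{simp}},\mathcal P)]\ge\lceil k/M\rceil$. Multiplying by $M$ yields the first inequality. It then remains to compare the integer $M\lceil k/M\rceil$ with $\E[\mathcal C]=k+C^{(k)}_q$, using the explicit formula \eqref{eq:simplexE2}.

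Write $k=aM+b$ with $0\le b<M$, so that $M\lceil k/M\rceil=k+(M-b)$ when $b\neq0$.

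For (i), the hypothesis $M\nmid k$ gives $b\ne0$, and $M\nmid k+1$ gives $b\ne M-1$. Hence $M-b\ge2$ and $M\lceil k/M\rceil\ge k+2$. For $q\ge3$, Remark~\ref{rem:small-overhead} already gives $C^{(k)}_q<\delta_q<1<2$. For $q=2$ we use $C^{(k)}_2<\delta_2\approx1.6067<2$. Either way $\E[\mathcal C]<k+2$.

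For (ii), $M\mid k+1$ means $b=M-1$, so $M\lceil k/M\rceil=k+1$, and we need $C^{(k)}_q<1$. If $q\ge3$, this follows from $C^{(k)}_q<\delta_q\le\delta_3$. To verify $\delta_3<1$ cleanly, bound
\[
\sum_{j\ge1}\frac{1}{3^j-1}\le \frac12+\frac18+\sum_{j\ge3}\frac{1}{3^j-1}\le \frac12+\frac18+\frac{27}{26}\cdot\frac{1}{18},
\]
which is about $0.683<1$; this uses $3^j-1\ge\frac{26}{27}3^j$ for $j\ge3$. If $q=2$ and $k=3$, compute directly from \eqref{eq:simplexE2}:
\[
C^{(3)}_2=\frac{1-2^{-2}}{1}+\frac{1-2^{-1}}{3}+0=\frac34+\frac16=\frac{11}{12}<1.
\]
Here $M=2$ is the only admissible $M$ with $1<M<3$, and indeed $2\mid4$.

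The only delicate point is justifying the numerical values of $\delta_q$, namely $\delta_2<2$ and $\delta_3<1$. These follow from elementary geometric-series bounds such as the one above. For $\delta_2$, for instance, $1+\frac13+\frac17+\sum_{j\ge4}\frac{1}{2^j-1}\le 1.477+\frac{16}{15}\cdot\frac18<2$. Beyond that, everything is integrality bookkeeping.
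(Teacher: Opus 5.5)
Your proposal is correct and follows the paper's proof. Both arguments use three ingredients: the counting part of Theorem~\ref{thm:floor}, the residue bookkeeping $M\lceil k/M\rceil = k+(M-b)$, and the bounds $C^{(k)}_q<\delta_q$ with $\delta_2<2$ and $\delta_3<1$, together with the direct value $C^{(3)}_2=\tfrac{11}{12}$. The only cosmetic difference is in the tail estimates: you use $q^j-1\ge(1-q^{-j_0})q^j$ for $j\ge j_0$, while the paper uses $q^j-1>q^{j-1}$; your constants are slightly sharper, and both suffice.
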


\begin{IEEEproof}
The result $M\mathbb{E}[\tau(\mathbf{G}_{\mathrm{simp}}, \mathcal{P})] \ge M\lceil k/M\rceil$ follows directly by Theorem~\ref{thm:floor}. 

First, $M\nmid k$ gives
$M\lceil k/M\rceil=k+(M-(k\bmod M))\ge k+1$. Moreover, note that $M-(k\bmod M)=1$ if and only if $M\mid k{+}1$. Hence, for (i), we have $M\lceil k/M\rceil\geq k+2$, and for (ii), $M\lceil k/M\rceil= k+1$.

On the other hand, since $\frac{1-q^{j-k}}{q^{j}-1}<\frac{1}{q^{j}-1}$, we have $C_q^{(k)}<\delta_q \triangleq \sum_{j\geq 1}(q^j-1)^{-1}$. Since $q^{j}-1>q^{j-1}$ holds for $q\geq 2$ and $j\geq 3$, we have 
$$
\delta_{2}=1+\frac{1}{3}+\sum_{j\geq 3}(2^{j}-1)^{-1}<1+\frac{1}{3}+\sum_{j\geq 3}2^{1-j}=\frac{11}{6}<2,
$$
and for $q\geq 3$
$$
\delta_q\le \delta_3<\frac{1}{2}+\frac{1}{8}+\sum_{j\geq 3}3^{1-j}=\frac{1}{2}+\frac{1}{8}+\frac{1}{6}<1.
$$
Consequently, this proves (i) holds for every $q$ and (ii) for $q\ge3$. 

Finally, for $q=2$ and $k=3$, the results follows since $C_2^{(3)}=\E[\mathcal{C}]-3=\tfrac{47}{12}-3=\tfrac{11}{12}<1$.
\end{IEEEproof}

The two smallest parameter pairs not settled by the preceding integrality obstruction require a separate codimension-two argument. Corresponding arguments for $M=2$ is deferred to Appendix~\ref{app:simplex-exceptional}.

\subsection{Distributed Coverage Depth under Balanced Partitions}

Fix $q$, we next restrict our attention to the case where $M \mid \bra{k}$ and every part of the ordered $M$-partition $\mathcal{P}$ has the same size $\bra{k}/M$. 

 
\begin{remark}\label{rem:balanced-existence}
A balanced $M$-partition of the columns of $\bbG_{\mathrm{simp}}$ exists only if $M\mid [k]_q$, and this imposes a restriction on $k$. If prime $p$ divides both $M$ and $q$, then $[k]_q=1+q+\cdots+q^{k-1}\equiv 1 \bmod p$, so $M\nmid[k]_q$ for \emph{every} $k$. In particular, for $q=2$ no balanced $M$-partition exists for any even $M$. If $p\mid M$ and $p\nmid q$, then $p\mid[k]_q$ is equivalent to $d\mid k$, where $d=p$ when $p\mid q-1$ and $d=\mathrm{ord}_p(q)$ otherwise.\footnote{Here $\mathrm{ord}_p(q)$ denotes the order of $q$ modulo $p$, i.e., the smallest positive integer $d$ satisfying $q^d \equiv 1 \pmod{p}$.} 
For instance, for $q=3$ and $M=2$ only even $k$ are admissible. All results in this subsection are stated for admissible dimensions $k$ only.
\end{remark}

We first prove the following upper bound on $\E[\tau(\mathbf{G}_{\mathrm{simp}},\mathcal{P})]$.

\begin{proposition}\label{prop:balrate}
Fix integer $M$. Let $\mathcal{P}$ be a balanced ordered $M$-partition of $[\bra{k}]$. Then, we have
$$
M\E[\tau(\mathbf{G}_{\mathrm{simp}},\mathcal{P})] \leq k+c_{q,M}+o(1),
$$
as $k\rightarrow\infty$, where $c_{q,M}=O(M)$ is independent of $k$.
\end{proposition}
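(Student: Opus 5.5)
We will prove the bound for an arbitrary balanced partition, so only the container sizes $m\triangleq \bra{k}/M$ matter. The tool is the Möbius tail formula of Theorem~\ref{thm:distributed-exact-representations} in the tail-sum form $\E[\tau]=\sum_{t\ge0}\Pr(\tau>t)$, with the inclusion--exclusion expansion truncated to a union bound over hyperplanes:
\[
\Pr(\tau(\mathbf G_{\mathrm{simp}},\mathcal P)>t)\le \min\Bigl\{1,\sum_{H}\rho_H^t\Bigr\}.
\]
Here $H$ ranges over the $\bra{k}$ hyperplanes. For each hyperplane $H$ and container $i$, write $m_i^H=|\mathbf G_i\cap H|$. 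Then $\sum_i m_i^H=\bra{k-1}$, since $H$ contains $\bra{k-1}$ projective points.

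The first step bounds $\rho_H$ uniformly over hyperplanes. By AM--GM,
\[
\rho_H=\prod_i \frac{m_i^H}{m}\le \Bigl(\frac{\bra{k-1}}{M m}\Bigr)^M=\Bigl(\frac{\bra{k-1}}{\bra{k}}\Bigr)^M\le q^{-M}.
\]
The last inequality holds because $q\bra{k-1}=\bra{k}-1$. The union bound therefore gives $\Pr(\tau>t)\le \bra{k}q^{-Mt}\le 2q^{k-Mt}$.

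The second step is the tail summation. Set $t_0\triangleq\lceil k/M\rceil$, and bound $\Pr(\tau>t)$ by $1$ for $t<t_0$. For $t\ge t_0$ use the geometric bound, which gives
\[
\E[\tau]\le t_0+\sum_{j\ge0}2q^{k-Mt_0-Mj}\le t_0+\frac{2}{1-q^{-M}}.
\]
Multiplying by $M$ yields
\[
M\E[\tau]\le M\lceil k/M\rceil+\frac{2M}{1-q^{-M}}\le k+(M-1)+4M,
\]
so $c_{q,M}=5M=O(M)$, independent of $k$. Balancedness is used only in the AM--GM step, where every denominator equals $m$.

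The main obstacle is conceptual rather than technical. This crude bound ignores the $o(1)$ refinement and does not exploit structure, so a sharper constant would need a finer estimate near $t\approx k/M$. There one would use the negative-sign terms of the Möbius expansion, or codimension-two subspaces, to show that the union bound is tight only up to constants. Since the statement asks only for $c_{q,M}=O(M)$, the union bound over hyperplanes together with AM--GM suffices. The one point to check carefully is that $\bra{k}q^{-Mt}$ is summable uniformly in $k$ once $t\ge k/M$, which holds because $q^{k-Mt}\le1$ there.
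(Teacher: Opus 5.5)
Your proof is correct and follows essentially the same route as the paper. Both arguments take a union bound over the $[k]_q$ hyperplanes, use balancedness in the AM--GM bound $\rho_H\le\bigl([k-1]_q/[k]_q\bigr)^M$, and split the tail sum at a threshold. The only difference is cosmetic: you cut at $\lceil k/M\rceil$ and use the cruder estimates $[k-1]_q/[k]_q<q^{-1}$ and $[k]_q\le 2q^k$, which gives $c_{q,M}=5M$ valid for every $k$ with no $o(1)$ term, whereas the paper cuts at $t^*=\ln([k]_q)/(M\ln(1/\beta))$ to obtain the slightly sharper constant $M-1+\frac{\ln(q/(q-1))}{\ln q}+\frac{M}{1-q^{-M}}$.
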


\begin{IEEEproof}
Note that a round fails to span $\Fq^k$ only if its draws lie in a common hyperplane, so for any layout and any $t$, a union bound over all the $\bra{k}$ hyperplanes gives $\Pr(\tau(\mathbf{G}_{\mathrm{simp}},\mathcal{P})>t)\leq \sum_{H}\rho_H^{t}$. 

For balanced partitions, since $n_i=\bra{k}/M$ for every $i\in [M]$, the AM-GM inequality gives 
$$
\rho_H=\prod_{i=1}^{M}\frac{m_i^{H}}{\bra{k}/M}\leq \parenv{\frac{1}{M}\sum_i\frac{m_i^{H}}{\bra{k}/M}}^M=\parenv{\frac{\bra{k-1}}{\bra{k}}}^M.
$$

Denote $\beta\triangleq [k-1]_q/[k]_q$. We have $\ln(1/\beta)=\ln q+O(q^{-k})$ and $\ln ([k]_q)=(k-1)\ln q+\ln\tfrac{q}{q-1}+O(q^{-k})$. Let $t^*\triangleq \frac{\ln(\bra{k})}{M\ln(1/\beta)}$. Hence
\[ M t^{*}= \frac{\ln ([k]_q)}{\ln(1/\beta)} =(k-1)+\frac{\ln\frac{q}{q-1}}{\ln q}+o(1).\]
Then, we have  $\Pr\parenv{\tau(\mathbf{G}_{\mathrm{simp}},\mathcal{P})>t}\leq \bra{k}\beta^{Mt}$,
which is less than $1$ when $t>t^{*}$.

Since $\E[\tau(\bbG_{\mathrm{simp}},\cP)]=\sum_{t=0}^{\infty}\Pr(\tau(\mathbf{G}_{\mathrm{simp}},\mathcal{P})>t)$, then by bounding the first $\ceilenv{t^{*}}$ tail terms by $1$ and the remaining ones by a geometric series of ratio $\beta^{M}$, and using $\beta\to q^{-1}$, we get 
\[ M \E[\tau(\bbG_{\mathrm{simp}},\cP)] \leq M\ceilenv{t^*} +\frac{M\bra{k}\beta^{M\ceilenv{t^*}}}{1-\beta^{M}} \leq k+c_{q,M}+o(1),\]
where $c_{q,M}\triangleq M-1+\frac{\ln\frac{q}{q-1}}{\ln q}+\frac{M}{1-q^{-M}}$ is independent of $k$, which completes the proof.
\end{IEEEproof} 
 
\begin{remark}
Since $\E[\cC_{\mathrm{simp}}]<k+\delta_q<k+2$ by Remark~\ref{rem:small-overhead}, the bound of Proposition~\ref{prop:balrate} does not exhibit any saving over the single-container model. It demonstrates that the total-read overhead of an \emph{arbitrary} balanced layout remains $O(M)$ uniformly in $k$, i.e., that it does not grow with the dimension.
\end{remark}

To quantify the saving in terms of the expected total number of reads achieved by the layout $(\mathbf{G}_{\mathrm{simp}}, \mathcal{P})$ when $\mathcal{P}$ is balanced, we consider the case where $\mathcal{P}$ is a random balanced partition.

Let $\mathcal{P}_{\mathrm{rb},k} = (P_1, \dots, P_M)$ be a uniformly distributed balanced $M$-partition of the columns of $\mathbf{G}_{\mathrm{simp},k}$. Clearly, $|P_1| = \cdots = |P_M| = \frac{\gv{k}}{M}$. We define the average coverage depth over all possible partitions as
\begin{equation}\label{eq:avg-exp-depth-def}
\overline{E}_{\mathrm{rb}}(k,M) \triangleq \mathbb{E}_{\mathcal{P}_{\mathrm{rb},k}} \left[ \mathbb{E} \left[ \tau\left(\mathbf{G}_{\mathrm{simp},k}, \mathcal{P}_{\mathrm{rb},k}\right) \right] \right].
\end{equation}
We then have the following result.

\begin{theorem}\label{thm:random-balanced}
Let $\tau_{\mathrm{simp},k}$ denote the single-container coverage time of the $k$-dimensional simplex code. Then, as $k\to\infty$ through dimensions satisfying $M\mid\gv{k}$,
$$
M\overline E_{\mathrm{rb}}(k,M) = \E[\tau_{\mathrm{simp},k}] + \gamma_M(k) + o(1) = k+\delta_q+\gamma_M(k)+o(1),
$$
where $\gamma_M(k) \triangleq \E\!\left[[-\tau_{\mathrm{simp},k}]_M\right]$, and $[a]_M$ denotes the least nonnegative residue of an integer $a$ modulo $M$.

Moreover, let $\mathset{\Gamma_j}_{j\geq 1}$ be independent geometric random variables with $\Gamma_j\sim \mathrm{Geom}(1-q^{-j})$ and let $D_\infty\triangleq \sum_{j\geq 1}(\Gamma_j-1)$. For every $a\in\mathset{0,1,\dots,M-1}$ that is attained by infinitely many admissible dimensions (see Remark~\ref{rem:balanced-existence}),
\[ \gamma_M(k) \rightarrow \gamma^{\infty}_{M,a} \triangleq \E\sparenv{[-(a+D_\infty)]_M} 
\qquad\text{as }k\to\infty\text{ with }k\equiv a (\bmod M). \]
Hence $\gamma_M(k)$ is asymptotically periodic with period $M$ on the set of admissible dimensions. If all $M$ residue classes are attained, then, 
one has $\frac{1}{M}\sum_{a=0}^{M-1}\gamma^{\infty}_{M,a}=\frac{M-1}{2}$. 
\end{theorem}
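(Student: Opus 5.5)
The plan is to couple the distributed reads, averaged over the random balanced partition, with the single-container i.i.d. read sequence. Within round $t$, order the reads from container $1$ up to container $M$, which gives one linear sequence of reads $X_1,X_2,\dots$. Reading $t$ full rounds means observing the first $Mt$ terms. The key claim is this: under the joint randomness of $\mathcal{P}_{\mathrm{rb},k}$ and the sampling, for any horizon $T$ the law of $(X_1,\dots,X_T)$ is within total variation $O(MT^2/[k]_q)$ of the law of $T$ i.i.d. uniform columns of $\mathbf{G}_{\mathrm{simp},k}$.

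To prove the claim, note first that both laws put all but $O(MT^2/[k]_q)$ of their mass on sequences with pairwise distinct entries. For the distributed model this is a birthday bound: any two reads collide with probability at most $M/[k]_q$, since reads in the same container collide with probability $M/[k]_q$ and reads in the same round but different containers never collide. Second, by the $S_{[k]_q}$-symmetry of the uniform balanced partition, every injective sequence of length $T$ has the same probability under the distributed law, because its position pattern (which container, which round) does not depend on the labels. Hence, conditioned on injectivity, both laws are uniform on injective sequences.

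Let $\tau_1$ be the first index at which $X_1,\dots,X_{\tau_1}$ span $\F_q^k$. Then $\tau(\mathbf{G}_{\mathrm{simp},k},\mathcal{P})=\lceil \tau_1/M\rceil$, so
\[
M\lceil \tau_1/M\rceil=\tau_1+[-\tau_1]_M .
\]
Take $T=k+C\log k$. On the event $\tau_1\le T$, the coupling changes $\E[M\lceil\tau_1/M\rceil]$ by at most $O(T\cdot MT^2/q^{k})=o(1)$. The tail beyond $T$ is negligible in both models: in the single-container model the stage decomposition of Corollary~\ref{cor:single-container-stage-decomposition} shows that $\tau^{(1)}-k$ has geometric tails uniformly in $k$, and for the distributed model, uniformly over balanced partitions, I would use the union bound $\Pr(\tau>t)\le [k]_q\beta^{Mt}$ from the proof of Proposition~\ref{prop:balrate}. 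These give
\[
M\overline{E}_{\mathrm{rb}}(k,M)=\E[\tau_{\mathrm{simp},k}]+\gamma_M(k)+o(1).
\]
Finally, Theorem~\ref{thm: simplexE} gives $\E[\tau_{\mathrm{simp},k}]=k+C_q^{(k)}$, and $C_q^{(k)}\to\delta_q$.

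For the periodicity statement, use the stage decomposition for the uniform law $u$. Here $\tau_{\mathrm{simp},k}=k+D_k$ with
\[
D_k=\sum_{j=1}^{k}(\Gamma_j-1),\qquad \Gamma_j\sim\mathrm{Geom}(1-q^{-j})\ \text{independent},
\]
where stage $i$ succeeds with probability $1-q^{i-k}$ and we set $j=k-i$. Couple all $k$ on a single sequence $(\Gamma_j)_{j\ge1}$. Then $D_k\uparrow D_\infty$ almost surely, and $D_\infty<\infty$ because $\sum_j q^{-j}<\infty$. For $k\equiv a \pmod M$ we have $[-\tau_{\mathrm{simp},k}]_M=[-(a+D_k)]_M$. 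This quantity is bounded by $M-1$ and converges almost surely, so bounded convergence gives $\gamma_M(k)\to\gamma^\infty_{M,a}$. For the average, fix any value of $D_\infty$. As $a$ runs over the residues mod $M$, the values $[-(a+D_\infty)]_M$ run over $0,\dots,M-1$ once each. Summing over $a$ and taking expectations therefore gives $\frac1M\sum_a\gamma^\infty_{M,a}=\frac{M-1}{2}$.

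The main obstacle is the total-variation coupling: one must verify that the symmetry argument really gives exact uniformity on injective sequences under the balanced-partition constraint, and make the tail truncation uniform in $k$. I would handle both using the explicit $[k]_q\beta^{Mt}$ bound and a Markov bound for the single-container model.
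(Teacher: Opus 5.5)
Your first half is sound and is essentially the paper's argument. Linearizing the reads, getting uniformity on injective arrays from label symmetry, bounding the total variation by the two collision probabilities, and using the union-bound tails $[k]_q\beta^{Mt}$ all match the paper. The paper truncates at $L=2k$ rounds rather than $k+C\log k$ reads, which is immaterial. It also applies the same union bound to the i.i.d.\ process, so you need neither a Markov bound nor the stage decomposition for that tail.

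The genuine error is in the periodicity half: the stage law you use is wrong. A column of $\mathbf{G}_{\mathrm{simp},k}$ is a uniform \emph{projective point}, not a uniform vector of $\mathbb{F}_q^k$. From an $i$-dimensional span, the next draw therefore escapes with probability $1-[i]_q/[k]_q=(q^k-q^i)/(q^k-1)$, not $1-q^{i-k}$. With $j=k-i$ the stage times are $G_{k,j}\sim\mathrm{Geom}\bigl(1-[k-j]_q/[k]_q\bigr)$, and these parameters depend on $k$. Already for $q=k=2$ the first draw raises the dimension with probability $1$, not $3/4$. Your law would also give $\mathbb{E}[\tau_{\mathrm{simp},k}]=k+\sum_{j\le k}(q^j-1)^{-1}$, contradicting the value $k+C^{(k)}_q$ from Theorem~\ref{thm: simplexE} that you invoke a paragraph earlier. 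Hence $D_k$ is not a partial sum of a single sequence $(\Gamma_j)_{j\ge1}$, and the step ``$D_k\uparrow D_\infty$ almost surely'' is unjustified as written.

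The conclusion survives, but you need a step that handles the $k$-dependent parameters. Within your almost-sure framework, set $p_{k,j}\triangleq 1-[k-j]_q/[k]_q$; it satisfies $p_{k,j}\ge 1-q^{-j}$ and $p_{k,j}\to 1-q^{-j}$. Generate $G_{k,j}$ and $\Gamma_j$ from a common uniform $U_j$ by inverse CDF. This gives $1\le G_{k,j}\le \Gamma_j$ and $G_{k,j}\to\Gamma_j$ a.s.\ for each fixed $j$. Dominated convergence for series, with dominating sequence $\Gamma_j-1$ (summable a.s.), then yields $D_k\to D_\infty$ a.s., and bounded convergence finishes as you intended.

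The paper takes a distributional route instead. It shows $G_{k,j}\xrightarrow{d}\Gamma_j$ for each fixed $j$. It then truncates at a fixed $J$, using the uniform bound $\sup_k\mathbb{E}[D_k-D_{k,J}]\le\sum_{j>J}(q^j-1)^{-1}$, to obtain $D_k\xrightarrow{d}D_\infty$ and pointwise convergence of the mass functions. Finally, the Markov bound $\Pr(D_k>J)\le\delta_q/J$ controls the tail of the bounded function $[-(a+\cdot)]_M$. Either route works, but your proposal currently has no such step.
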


The proof of Theorem~\ref{thm:random-balanced} is technical and combines total-variation bounds, tail truncation, and a limiting geometric-stage decomposition. To maintain the flow of the main text, the full proof is deferred to Appendix~\ref{app:random-balanced-proof}. It is worth to mention that, in general, only the limits $\gamma^{\infty}_{M,a}$ with $a$ attained carry information about the model. For example, for $q=3$ and $M=2$ only $a=0$ is attained, and $\gamma^{\infty}_{2,0}=\Pr(D_\infty\text{ odd}) =\frac{1}{2}\parenv{1-\prod_{j\geq 1}\frac{3^{j}-1}{3^{j}+1}}=0.321$, rather than the unconditional average $\frac{1}{2}$.

\section{Conclusion and Open Problems}
\label{sec:conclusion}

This paper establishes a general framework for analyzing full-message recovery in distributed DNA storage via the notion of distributed coverage depth. First, we provide exact formulas, general lower bounds, and optimal layout constructions for linear codes, proving that MDS codes remain optimal in minimizing the expected number of sequencing rounds. 

Second, a systematic evaluation of MDS codes reveals a key trade-off: while constant-size balanced containers achieve a genuine reduction in total reads compared to the single-container case, polynomial-size containers merely accelerate recovery via parallelism without reducing the total read volume. Furthermore, our low-rate lower bounds and higher-rate constructions show that these distributional gains are governed jointly by code rate and container scaling. 

Third, for simplex codes, we resolve a conjecture by Bertuzzo, Ravagnani, and Yaakobi~\cite{BRY26} in the single-container setting, construct optimal staircase layouts achieving minimal total reads when $M=k$, and bound the total-read overhead for both deterministic and random balanced layouts.

Together, these results demonstrate that distribution is neither inherently beneficial nor universally detrimental. Ultimately, whether distribution reduces coverage overhead or merely provides parallel speedup depends fundamentally on the underlying rank geometry of the code and the combinatorial structure of the layout.

Several natural directions and open problems emerge from our results. Here we highlight a few of them for future research.

\begin{enumerate}

\item \emph{Optimal layouts under given constraints.}
For a fixed generator matrix $\mathbf{G}$, a fixed number of containers $M$, and prescribed container sizes $n_1,\ldots,n_M$, determine
$$
\min_{\mathcal{P}}
\E[\tau(\mathbf{G},\mathcal{P})].
$$
Even when the partition is required to be balanced, a structural characterization of the optimal layout is not known for a general code. It would be particularly useful to characterize which codes admit balanced one-round-perfect layouts and, when this is impossible, to determine the smallest achievable number of total reads. A related question is to develop nearly balanced constructions when $M\nmid n$.

\item \emph{Optimal codes over small fields.}
Given a field $\mathbb{F}_q$, a fixed integer $M$, and a fixed ordered $M$-partition $\mathcal{P}$ of $[n]$, determine
$$
\min_{\mathbf{G}} \mathbb{E}[\tau(\mathbf{G},\mathcal{P})],
$$
and identify the optimal rank-$k$ generator matrix $\mathbf{G}$. This problem represents the counterpart to the previous one, focusing on finding optimal codes that minimize the distributed coverage depth. While this problem is completely settled when MDS codes exist, it remains open for parameters where no MDS code is available (e.g., over small fields where $q < n - 3$). In the single-container setting, the analogous question was studied in \cite{BRY26} using duality arguments and extended weight enumerators; adapting such combinatorial tools may provide a fruitful direction for the distributed storage model.

\item \emph{Simplex code layouts with intermediate numbers of containers.}
The integrality arguments rule out savings for many values of $q,k,$ and $M$, and the codimension-two analysis settles two small exceptional cases. It remains to determine the optimal distributed coverage depth in the remaining regimes, especially when $M\mid k$, and in the binary cases where $M\mid(k+1)$. 

\item \emph{Nonuniform, asynchronous, adaptive, and noisy sampling.}
The present model assumes that all strands within a container are equally likely to be sampled, that one read is obtained from every container in every round, and that every sampled strand is observed without error. Practical systems may exhibit unequal strand abundances, different sequencing rates across containers, adaptive allocation of reads, erasures, or sequencing errors. Extending the present results to these models, and determining which of the layout-optimality principles remain valid, are natural directions for future work.

\end{enumerate}

\appendices

\section{One-Round-Perfect Constructions and Examples}\label{app:one-round-perfect additional}



\begin{theorem}\label{thm:circuit}
Let $\mathcal{C}$ be an $[n,k]$ linear code with redundancy $r = n - k \ge 1$ and parity-check matrix $\mathbf{H}$. Suppose some $r+1$ columns of $\mathbf{H}$, indexed by $B \subseteq [n]$, are in general position\footnote{A set of $\ell$ vectors in $\F_q^{m}$ with $\ell\geq m$ is said to be \emph{in general position} if every $m$ of them are linearly independent.}.
Then, the generator matrix $\mathbf{G}$ and the $k$-partition 
\begin{equation}\label{eq:partition-gen-pos}
\mathcal{P} = (B,\{j_1\},\{j_2\},\ldots,\{j_{n-|B|}\}),
\end{equation}
where $[n]\setminus B=\{j_1,j_2,\ldots,j_{n-|B|}\}$ together yield a one-round-perfect layout.
\end{theorem}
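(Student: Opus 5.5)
The plan is to check directly from Definition~\ref{def:one-round-perfect-unified} that every transversal of $\mathcal P$ spans $\Fq^k$, and to translate this spanning condition into a statement about $\mathbf H$.

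First, the counting. The partition has $1+(n-|B|)=1+n-(r+1)=n-r=k$ parts, so $M=k$. Hence one-round-decodability is exactly one-round-perfection. A transversal picks one index $b\in B$ together with every index of $[n]\setminus B$, so its index set is $J_b=([n]\setminus B)\cup\{b\}$, which has size $k$. It therefore suffices to show that the columns of $\mathbf G$ indexed by $J_b$ are linearly independent, i.e.\ that $J_b$ is an information set of $\mathcal C$.

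Next, I use the standard duality fact: a $k$-subset $J\subseteq[n]$ is an information set of $\mathcal C$ if and only if its complement $[n]\setminus J$ indexes $r$ linearly independent columns of $\mathbf H$. To justify this, observe that $\mathbf G|_J$ is singular exactly when some nonzero codeword vanishes on $J$. Such a codeword is a nonzero vector $\mathbf c$ supported on $[n]\setminus J$ with $\mathbf H\mathbf c^T=0$, which is precisely a linear dependency among the columns of $\mathbf H$ indexed by $[n]\setminus J$. A short sentence making this equivalence explicit will be included.

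Finally, I apply this to $J=J_b$. Its complement is $B\setminus\{b\}$, which consists of $r$ columns of $\mathbf H$ drawn from the $r+1$ columns indexed by $B$. Since these $r+1$ columns are in general position in $\Fq^{r}$, every $r$ of them are independent. So each $J_b$ is an information set, every first-round sample spans $\Fq^k$, and $\tau(\mathbf G,\mathcal P)=1$ a.s. Alternatively, one could phrase the conclusion through Theorem~\ref{thm: oneround}, but the direct transversal argument is simpler.

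The only real step is the duality equivalence. I need to handle it carefully in the multiset setting (repeated or zero columns of $\mathbf H$ pose no problem, because general position already excludes them within $B$). I also need to confirm that all parts are nonempty: $B$ is nonempty since $r\ge1$, and the singletons are nonempty by construction. The case $n=|B|$, where the partition is just $(B)$, forces $k=1$; the statement implicitly assumes $k\ge2$, but the argument covers this case uniformly anyway.
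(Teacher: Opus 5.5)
Your proposal is correct and follows essentially the same route as the paper. Both identify each transversal as $([n]\setminus B)\cup\{b\}$ and note that a spanning failure yields a nonzero codeword supported on $B\setminus\{b\}$. That codeword gives a dependency among at most $r$ of the general-position columns of $\mathbf{H}$, which is a contradiction. Only the one direction of your duality equivalence is actually needed.
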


\begin{IEEEproof}
Recall that a parity-check matrix of an $[n,k]$ code $\mathcal{C}$ is any full-rank matrix $\mathbf{H}=(\mathbf{h}_1,\ldots,\mathbf{h}_n)\in \mathbb{F}_q^{r\times n}$ such that $\mathcal{C}=\ker (\mathbf{H})$.

First, by definition, the partition has $(n-r-1)+1 = n-r = k$ non-empty parts and is clearly a $k$-partition.

Let $T\subseteq[n]$ be an index set such that $([n]\setminus B)\subseteq T$ and $|T\cap B|=1$. Thus, its complement $T^C = [n]\setminus T$ has the form $B\setminus\{b\}$ and is of size $r$. Suppose that the columns of the generator matrix indexed by $T$ fail to span $\mathbb{F}_q^{k}$. Then, there exists a non-zero vector $\mathbf{h}\in \mathbb{F}_q^{k}$ such that the codeword $\mathbf{c}=\mathbf{h}\mathbf{G}$ is non-zero and has its support contained in $[n]\setminus T = B\setminus\{b\}$. However, the parity-check condition $\mathbf{H}\mathbf{c}^{\top}=\mathbf{0}$ reads $\sum_{j\in\mathrm{supp}(\mathbf{c})} c_j\mathbf{h}_j = \mathbf{0}$. This yields a non-trivial linear dependence among at most $r$ columns of $\mathbf{H}$ indexed by $B$, contradicting the condition that the columns indexed by $B$ are in general position. Hence, the columns of the generator matrix $\mathbf{G}$ indexed by every such $k$-set $T$ span $\mathbb{F}_q^k$, which proves that $(\mathbf{G},\mathcal{P})$ is one-round-perfect.
\end{IEEEproof}

According to the proof of Theorem~\ref{thm:circuit}, since $\mathbf{H}$ has rank $r$, any $r+1$ of its columns are linearly dependent. Thus, by the definition of general position, every dependency inside $B$ must include all $r+1$ columns. This implies that there exists a codeword of $\mathcal{C}$ with support set $B$, while no non-zero codeword of $\mathcal{C}$ is supported on a proper subset of $B$. From this observation, we draw two consequences and one caution. First, if $\mathcal{C}$ is an MDS code, then so is its dual $\mathcal{C}^{\perp}$; hence, every $r$ columns of $\mathbf{H}$ are linearly independent, and by Theorem~\ref{thm:circuit}, \emph{every} $(r+1)$-subset $B$ induces a one-round-perfect $k$-partition (see Theorem~\ref{thm:mdssplit} in Section~\ref{sec: MDS distributed coverage} below). The second consequence is the following result, which shows that Hamming codes also admit one-round-perfect $k$-partitions.

\begin{corollary}\label{cor:hammingfloor}
For every prime power $q$ and every integer $r\ge 2$, let $\mathcal{C}_{\mathrm{Hamming}}$ be the $[\bra{r},\bra{r}-r]$ Hamming code over $\mathbb{F}_q$ whose parity-check columns represent the $\bra{r}$ distinct directions of $\mathbb{F}_q^{r}$. Let $B$ be the set of those $r+1$ indices whose corresponding parity-check columns are the $r$ standard basis vectors $\mathbf{e}_1,\ldots,\mathbf{e}_r$ and their sum $\mathbf{e}_1+\cdots+\mathbf{e}_r$. Then the layout $(\mathbf{G}_{\mathrm{Hamming}},\mathcal{P})$ with $\mathcal{P} = (B,\{j_1\},\{j_2\},\ldots,\{j_{n-|B|}\})$, where $\{j_1,j_2,\ldots,j_{n-|B|}\}=[\bra{r}]\setminus B$ is one-round-perfect layout.
\end{corollary}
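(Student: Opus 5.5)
The plan is to apply Theorem~\ref{thm:circuit} directly. The one thing to check is that the $r+1$ parity-check columns $\mathbf{e}_1,\ldots,\mathbf{e}_r,\mathbf{e}_1+\cdots+\mathbf{e}_r$ are in general position in $\mathbb{F}_q^{r}$. The Hamming code $\mathcal{C}_{\mathrm{Hamming}}$ has length $n=\bra{r}$, dimension $k=\bra{r}-r$ and redundancy $n-k=r\ge 1$. Its parity-check matrix $\mathbf{H}$ has as columns one representative of each of the $\bra{r}$ projective points of $\mathbb{F}_q^r$.

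First I would note that $\mathbf{H}$ has full rank $r$, since it contains $\mathbf{e}_1,\ldots,\mathbf{e}_r$. Hence it is a valid parity-check matrix, and the hypotheses of Theorem~\ref{thm:circuit} on $\mathcal{C}$ are met. The index set $B$ is well defined because each of $\mathbf{e}_1,\ldots,\mathbf{e}_r$ and $\mathbf{e}_1+\cdots+\mathbf{e}_r$ spans a distinct projective point. Up to rescaling each column, which changes neither the code nor linear (in)dependence, each of these points appears exactly once among the columns of $\mathbf{H}$, so $|B|=r+1$.

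Next I would verify general position: every $r$ of the $r+1$ vectors are linearly independent. Removing $\mathbf{e}_1+\cdots+\mathbf{e}_r$ leaves the standard basis, which is independent. Removing $\mathbf{e}_i$ leaves $\{\mathbf{e}_j\}_{j\neq i}\cup\{\mathbf{e}_1+\cdots+\mathbf{e}_r\}$. Suppose $\sum_{j\ne i}\lambda_j\mathbf{e}_j+\mu(\mathbf{e}_1+\cdots+\mathbf{e}_r)=\mathbf{0}$. The $i$-th coordinate gives $\mu=0$, and then all $\lambda_j=0$. This holds over every field $\mathbb{F}_q$, with no characteristic issue.

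Theorem~\ref{thm:circuit} then gives that $(\mathbf{G}_{\mathrm{Hamming}},\mathcal{P})$, with $\mathcal{P}=(B,\{j_1\},\ldots,\{j_{n-r-1}\})$, is one-round-perfect. The partition has $1+(n-r-1)=k$ parts, and its container sizes are $(r+1,1,\ldots,1)$, matching Table~\ref{tab:one-round-perfect-layouts}. There is no real obstacle here; the only points needing care are that $B$ has exactly $r+1$ elements and that the general-position check covers both kinds of $r$-subsets.
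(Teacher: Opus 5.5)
Your proposal is correct and follows the paper's own proof: reduce to Theorem~\ref{thm:circuit} and check that $\mathbf{e}_1,\ldots,\mathbf{e}_r,\mathbf{e}_1+\cdots+\mathbf{e}_r$ are in general position, which you verify in more detail than the paper does. One small inaccuracy: rescaling a column of $\mathbf{H}$ does in general change the code (it gives a monomially equivalent code). This aside is harmless, because general position is invariant under rescaling the vectors and the statement fixes these columns exactly anyway.
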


\begin{IEEEproof}
According to Theorem~\ref{thm:circuit}, it suffices to verify that the $r+1$ columns indexed by $B$ are in general position. This follows directly from their specific forms.
\end{IEEEproof}

The caution is that the condition in Theorem~\ref{thm:circuit} is sufficient but not necessary. For instance, the binary first-order Reed--Muller code $\mathrm{RM}(1,3)$ is self-dual, and its parity-check matrix has \emph{no} $5$-subset of parity-check columns in general position; nevertheless, it admits a one-round-perfect $4$-partition.

To support the above discussion, we next focus on first-order Reed--Muller codes. Unlike MDS codes and Hamming codes, which can admit high code rates, the first-order Reed--Muller code is a low-rate code family, similar to simplex codes.

\begin{theorem}\label{thm:rm}
For $\mathcal{C}=\mathrm{RM}_q(1,m)$, the following two statements hold.
\begin{enumerate}
\item[(i)] The coverage depth of $\mathcal{C}$ in the single-container model is
$$
\mathbb{E}[\mathcal{C}] = k + \sum_{j=1}^{m}\frac{1}{q^{j}-1},
$$
so the overhead is $\mathrm{Ov}(\mathcal{C}) = \sum_{j=1}^{m}(q^{j}-1)^{-1}$.
\item[(ii)] The \emph{affine staircase} $k$-partition of $[n]=[q^{m}]$, consisting of $P_0=\{j\in [q^m] \colon \mathbf{p}_j=\mathbf{0}\}$ and, for $1\le i\le m$, 
$$
P_i = \{j\in [q^m]:~ \mathbf{p}_j=(p_{j,1},\ldots,p_{j,m}) \text{ such that } p_{j,1}=\dots=p_{j,i-1}=0,\ p_{j,i}\neq 0\}
$$
of size $(q-1)q^{m-i}$, is one-round-perfect.
\end{enumerate}
\end{theorem}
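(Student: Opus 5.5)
Part (ii) is the easy half, and I will do it first by the same triangularity argument used in Lemma~\ref{lem: optimality of simplex code for multiple containers}. Write the columns as $\mathbf g_{\mathbf p}=(1,\mathbf p^{T})^{T}\in\mathbb F_q^{m+1}$ and index coordinates $0,1,\dots,m$. Pick one column from each part $P_0,\dots,P_m$. The column from $P_0$ is $\mathbf e_0$. For $i\ge1$, the column from $P_i$ has the form $(1,0,\dots,0,p_i,*,\dots,*)$ with $p_i\neq0$ in coordinate $i$. Subtracting the $P_0$ column from every other chosen column clears coordinate $0$. The resulting $(m+1)\times(m+1)$ matrix is upper triangular with nonzero diagonal entries $1,p_1,\dots,p_m$, so it is invertible. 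Hence every transversal spans $\mathbb F_q^{k}$, which is one-round-decodability with $M=m+1=k$. The part sizes are immediate: $|P_0|=1$ and $|P_i|=(q-1)q^{m-i}$.

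For (i) I will use Corollary~\ref{cor:single-container-stage-decomposition}, computing $\E[\mathcal C]=\sum_{i=0}^{k-1}\E[(1-\pi(V^{(i)}))^{-1}]$ directly. The key structural fact is that, since all columns are distinct points of the affine hyperplane $x_0=1$, the span of $i\ge1$ sampled columns corresponds to their affine hull. This affine hull is an $(i-1)$-dimensional affine flat, and it contains exactly $q^{i-1}$ columns. Thus, for $1\le i\le k$, the quantity $\pi(V^{(i)})=q^{i-1}/q^m$ is deterministic, while $\pi(V^{(0)})=0$. The stage sum is therefore
\[
\E[\mathcal C]=1+\sum_{i=1}^{m}\frac{1}{1-q^{i-1-m}}=1+\sum_{j=1}^{m}\frac{q^{j}}{q^{j}-1},
\]
where I substituted $j=m-i+1$. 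Since $\frac{q^j}{q^j-1}=1+\frac{1}{q^j-1}$, this equals $m+1+\sum_{j=1}^m\frac1{q^j-1}=k+\sum_{j=1}^m(q^j-1)^{-1}$, as claimed.

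The only step needing care is justifying that $V^{(i)}\cap\{\text{columns}\}$ is exactly an affine flat of size $q^{i-1}$. I will argue it as follows. A linear subspace $W\le\mathbb F_q^{m+1}$ not contained in $\{x_0=0\}$ meets $\{x_0=1\}$ in an affine flat of dimension $\dim W-1$. Moreover, $V^{(i)}$ is spanned by columns, so it is not contained in $\{x_0=0\}$. Every point of $\{x_0=1\}$ is a column, exactly once, so the count is $q^{\dim V^{(i)}-1}$. No Jensen step is needed, because the stage masses are constant.
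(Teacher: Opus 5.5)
Your proposal is correct and follows essentially the same route as the paper. For (i), you use the stage decomposition of Corollary~\ref{cor:single-container-stage-decomposition}, together with the observation that $V^{(i)}$ meets the hyperplane $x_0=1$ in an $(i-1)$-dimensional affine flat, so $\pi(V^{(i)})=q^{i-1-m}$ is deterministic. For (ii), you subtract the $P_0$ column and invoke triangularity, which is the paper's affine-independence argument for $\mathbf{x}^{(i)}-\mathbf{x}^{(0)}$ written as column operations.
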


\begin{proof}
The proof of both parts relies on a key observation. A linear combination $\sum_i \lambda_i (1,\mathbf{p}_i)$ of columns has $1$ as its first coordinate if and only if $\sum_i \lambda_i = 1$; more generally, it is a non-zero scalar multiple of some column $(1,\mathbf{a})$ if and only if $\sum_i \lambda_i \neq 0$. Thus, the vector $\mathbf{a}$ can be viewed as an \emph{affine} combination of $\{\mathbf{p}_i\}_{i\in [q^m]}$. 

For any non-empty subset $T$ of columns, let $X_T \triangleq \{\mathbf{p}_i \colon (1,\mathbf{p}_i)\in T\}$. By fixing any $\mathbf{p}_0 \in X_T$, we have $\spn(T) = \spn\{(1,\mathbf{p}_0)\} \oplus \spn\{(0,\mathbf{p}_i - \mathbf{p}_0) \colon \mathbf{p}_i \in X_T\}$. This yields
\begin{equation}\label{eq:affinehull}
\begin{cases}
    \spn(T) \cap \{(1,\mathbf{p}_i)\}_{i\in [q^m]} = \{(1,\mathbf{a}) \colon \mathbf{a} \in \mathrm{aff}(X_T)\}; \\[1ex]
    \dim(\spn(T)) = 1 + \dim(\mathrm{aff}(X_T)),
\end{cases}
\end{equation}
where $\mathrm{aff}(X_T)$ denotes the affine hull of $X_T$ in $\mathbb{F}_q^{m}$. In particular, $\spn(T)=\mathbb{F}_q^{k}$ if and only if $\mathrm{aff}(X_T)=\mathbb{F}_q^{m}$.
 
For (i), note that this is the single-container model. Let $V^{(i)}$ be the span of the sampled columns at the first time its dimension becomes $i$ as in Corollary~\ref{cor:single-container-stage-decomposition}. By \eqref{eq:affinehull}, $V^{(i)}$ corresponds to an affine subspace of dimension $i-1$ for $i\ge1$, which contains exactly $q^{i-1}$ vectors in $\Fq^{m}$. Hence, $V^{(i)}$ holds exactly $q^{i-1}$ of the $q^{m}$ columns, which yields $\pi(V^{(i)})=q^{i-1-m}$ is deterministic. Hence, by \eqref{eq:single-container-stage-formula} in Corollary~\ref{cor:single-container-stage-decomposition}, we have
$$
\E[\mathcal{C}]=1+\sum_{i=1}^{m}\frac1{1-q^{i-1-m}}=1+\sum_{j=1}^{m}\frac1{1-q^{-j}}
=1+\sum_{j=1}^{m}\Bigl(1+\frac1{q^{j}-1}\Bigr)=k+\sum_{j=1}^{m}\frac1{q^{j}-1}.
$$
 
For (ii), we first note that $\{P_i\}_{i=0}^{m}$ forms a $k$-partition of $[q^{m}]$. The pairwise disjointness of the sets $P_i$ and their corresponding sizes follow directly from their definitions. Specifically, $\mathbf{p}_j=\mathbf{0}$ lies in $P_0$, and any $\mathbf{p}_j\neq \mathbf{0}$ belongs to the part $P_i$ corresponding to its first non-zero coordinate $i$. Hence, $\bigcup_{i=0}^m P_i = [q^m]$.

Next, we show that the partition $\{P_i\}_{i=0}^{m}$ is one-round-perfect. Let $\{(1,\mathbf{x}^{(0)}),(1,\mathbf{x}^{(1)}),\dots,(1,\mathbf{x}^{(m)})\}$ be a set of drawn vectors within one time round, with $\mathbf{x}^{(0)}=\mathbf{0} \in P_0$ and $\mathbf{x}^{(i)}\in P_i$ for $1\le i\le m$. By \eqref{eq:affinehull}, this set spans $\mathbb{F}_q^{k}$ if and only if vectors $\mathbf{x}^{(0)},\ldots,\mathbf{x}^{(m)}$ affinely span $\mathbb{F}_q^{m}$, which is equivalent to $\{\mathbf{x}^{(i)}-\mathbf{x}^{(0)} = \mathbf{x}^{(i)}\}_{i\in [m]}$ being linearly independent. This follows directly because the $m \times m$ matrix with rows $\mathbf{x}^{(1)},\dots,\mathbf{x}^{(m)}$ is upper triangular with non-zero diagonal entries, since each $\mathbf{x}^{(i)}$ has its first non-zero coordinate precisely at position $i$ and is therefore invertible. Hence, the set $\{(1,\mathbf{x}^{(0)}),(1,\mathbf{x}^{(1)}),\dots,(1,\mathbf{x}^{(m)})\}$ spans $\mathbb{F}_q^{k}$, which confirms that the partition $\{P_i\}_{i=0}^{m}$ is one-round-perfect. This completes the proof.
\end{proof} 

\section{Technical Results for the MDS Analysis}\label{app:mds-technical}

\subsection{Concentration Results for Negatively Associated Random Variables}

We first recall the following definition and properties of negatively associated random variables.

\begin{definition}[\!\!{\cite{Dubhashi_Panconesi_2009}}]\label{def: NA}
Random variables $X_1, X_2, \ldots, X_n$ are said to be negatively associated if for every disjoint index sets $I, J \subseteq [n]$ and coordinate-wise non-decreasing functions $f$ and $g$, it holds that $\mathrm{Cov}\bigl(f(X_I), g(X_J)\bigr) \leq 0$, where $X_I = (X_i : i \in I)$ and $X_J = (X_j : j \in J)$.
\end{definition}

\begin{theorem}[Chernoff--Hoeffding bounds for NA variables {\cite{Hoeffding63,DR98}}]\label{thm: CH bound for NA variables}
Let $X_1, X_2, \ldots, X_m$ be negatively associated random variables with $X_j \in [a_j,b_j]$ always. Then $Y = \sum_{j=1}^{m} X_j$ satisfies the following bound
\begin{equation}
\Pr(|Y-\mathbb{E}[Y]| \ge t) \le 2\exp\left( -\frac{2t^2}{\sum_{j=1}^{m} (b_j-a_j)^2} \right).
\end{equation}
\end{theorem}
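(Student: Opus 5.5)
The statement is the Chernoff--Hoeffding bound for negatively associated (NA) variables. The plan is to follow the classical Hoeffding route: an exponential-moment (Chernoff) argument in which independence is replaced by negative association. Since the result is cited from \cite{Hoeffding63,DR98}, the proof reduces to two standard ingredients: a product inequality for moment generating functions of NA variables, and Hoeffding's lemma for bounded variables.

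\textbf{Step 1: upper tail.} For $\lambda>0$, Markov's inequality gives
\[
\Pr(Y-\E[Y]\ge t)\le e^{-\lambda t}\,\E\bigl[e^{\lambda(Y-\E[Y])}\bigr].
\]

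\textbf{Step 2: factorise the moment generating function.} The key claim is
\[
\E\Bigl[\prod_{j} e^{\lambda X_j}\Bigr]\le \prod_j \E\bigl[e^{\lambda X_j}\bigr].
\]
This follows by induction on $m$ from Definition~\ref{def: NA}. Take $I=\{1,\dots,m-1\}$ and $J=\{m\}$. The functions $f(X_I)=\prod_{j<m}e^{\lambda X_j}$ and $g(X_m)=e^{\lambda X_m}$ are coordinatewise non-decreasing for $\lambda>0$. They are also nonnegative, and bounded because each $X_j$ is bounded, so the covariance is well defined. Negative association then gives
\[
\E[fg]\le \E[f]\,\E[g].
\]
Finally, any subset of NA variables is again NA, so the induction hypothesis applies to $X_I$.

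\textbf{Step 3: Hoeffding's lemma.} For each $j$, because $X_j\in[a_j,b_j]$,
\[
\E\bigl[e^{\lambda(X_j-\E X_j)}\bigr]\le \exp\bigl(\lambda^2(b_j-a_j)^2/8\bigr).
\]
This comes from convexity of $x\mapsto e^{\lambda x}$ on $[a_j,b_j]$, followed by bounding the logarithm of the resulting two-point moment generating function by a second-order Taylor estimate.

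\textbf{Step 4: optimise.} Combining the steps yields
\[
\Pr(Y-\E Y\ge t)\le \exp\Bigl(-\lambda t+\lambda^2 \sum_j (b_j-a_j)^2/8\Bigr).
\]
Choosing $\lambda=4t/\sum_j(b_j-a_j)^2$ gives $\exp\bigl(-2t^2/\sum_j (b_j-a_j)^2\bigr)$.

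\textbf{Step 5: lower tail.} The variables $-X_j$ are also NA, since non-increasing functions of disjoint blocks likewise have nonpositive covariance: apply the definition to $-f$ and $-g$ composed with negation. They are bounded in intervals of the same lengths, so the same bound holds for $\Pr(Y-\E Y\le -t)$. A union bound then gives the factor $2$.

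\textbf{Main obstacle.} The only nonclassical point is Step 2. The definition must be applied to products of exponentials, which are not arbitrary monotone functions, and the monotonicity and integrability hypotheses must hold. Boundedness makes integrability automatic, so the main care is the heredity of NA to subsets and to negation. For the lower tail, the cleanest route is to note that the definition, applied to the pair $(-f,-g)$ of non-decreasing functions, directly yields the needed covariance inequality for non-increasing functions of the original variables.
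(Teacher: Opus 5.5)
Your proposal is correct and follows essentially the same route as the paper, which gives no proof of its own but cites Hoeffding and Dubhashi--Ranjan, whose argument is the one you give. Negative association, applied to the coordinatewise non-decreasing functions $\prod_{j<m}e^{\lambda X_j}$ and $e^{\lambda X_m}$ and inherited by subsets, yields $\mathbb{E}[e^{\lambda Y}]\le\prod_j\mathbb{E}[e^{\lambda X_j}]$; Hoeffding's lemma with $\lambda=4t/\sum_j(b_j-a_j)^2$ then gives the one-sided bound, and your treatment of the lower tail via the negative association of $(-X_j)_j$, plus the union bound for the factor $2$, is also correct.
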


\subsection{Balanced Two-Container MDS Examples with Reduced Total Reads}

\begin{example}\label{prop:mdstwo}
In this example, we provide three sets of parameters for which there exist MDS codes with balanced $2$-partitions under the distributed model that require strictly fewer total reads than in the single-container model. All three parameter sets can be realized by Reed--Solomon codes.

\begin{itemize}
    \item Case 1: $n=6$ , $k=4$. Recall that $D_i(t)$ denote the number of distinct columns observed from container $i$ up to round $t$. Hence, $\tau_k>t$ if and only if $(D_1(t),D_2(t))\in\{(1,1),(1,2),(2,1)\}$. Thus,
    \begin{align*}
        \Pr(\tau_k>t) & = \Pr(D_1(t)=1)\bigl(\Pr(D_2(t)=1)+\Pr(D_2(t)=2)\bigr) \\
        &~~~ +\Pr(D_1(t)=2)\Pr(D_2(t)=1) \\
        & = 9^{1-t}(2^{t+1}-3)=18(\tfrac29)^{t}-27(\tfrac19)^{t},
    \end{align*}
    where the last equality follows by \eqref{eq:distinctlaw} with $n_i=3$. Summing the geometric series yields
    $$
    \mathbb{E}[\tau_k] = 1 + \sum_{t\ge 1} \Pr(\tau_k > t) = 1 + 18 \cdot \frac{2/9}{7/9} - 27 \cdot \frac{1/9}{8/9} = \frac{155}{56}.
    $$
    On the other hand, $\mathbb{E}[\mathcal{C}] = 6(H_6 - H_2) = \tfrac{57}{10} = 5.7$, which is strictly larger than $2\mathbb{E}[\tau_k] \approx 5.5357$.
    \item Case 2: $n=8$, $k=4$. Similarly, we have $\tau_k > t$ if and only if $(D_1(t),D_2(t))\in\{(1,1),(1,2),(2,1)\}$. Using \eqref{eq:distinctlaw} with $n_i=4$, the same computation yields $\Pr(\tau_k>t)=48(\tfrac18)^{t}-80(\tfrac1{16})^{t}$ and $\mathbb{E}[\tau_k]=1+\tfrac{48}{7}-\tfrac{16}{3}=\tfrac{53}{21}$. Thus, the expected total number of reads is $2\mathbb{E}[\tau_k] = \tfrac{106}{21}=\tfrac{530}{105}$, which improves upon $\mathbb{E}[\mathcal{C}]=8(H_8-H_4)=\tfrac{533}{105}$.
    \item Case 3: $n=6$, $k=5$. Since $k=5$, we have $\tau_k> t$ if and only if $(D_1(t),D_2(t))\in\{(1,1),(1,2),(2,1),(1,3),$ $(3,1),(2,2)\}$. Using \eqref{eq:distinctlaw} with $n_i=3$, by similar computations to those in the previous cases, we have
    $$
    \Pr(\tau_k>t)=6(\tfrac13)^{t}+9(\tfrac49)^{t}-36(\tfrac29)^{t}+27(\tfrac19)^{t},
    $$
    which leads to $\E[\tau_k]=1+6\cdot \frac{1}{2}+9\cdot \frac{4}{5}-36\cdot \frac{2}{7}+27\cdot \tfrac{1}{8}=\tfrac{1201}{280}$. Hence, the total number of reads is $2\E[\tau_k]=\frac{1201}{140}$ which improves upon $\E[\mathcal{C}]=6(H_6-H_1)=\tfrac{87}{10}=\tfrac{1218}{140}$.
\end{itemize}
\end{example}

\section{Exceptional Two-Container Cases for Simplex Codes}\label{app:simplex-exceptional}

By Theorem~\ref{thm:intclass}, the only parameter regimes in which the distributed model with $M$ ($1<M<k$) containers could potentially yield a smaller expected total number of reads are those where $M \mid k$ (for any $q$), and $M \mid (k+1)$ with $q=2$ and $k \ge 4$. In the following, we consider the two simplest parameter settings in these regimes for $M=2$, namely, $(q,k)=(2,5)$ and $(3,4)$. We show that for $M=2$, the distributed model cannot improve upon the single-container model in terms of the expected total number of reads.
 
\begin{theorem}\label{thm:simplexm2}
Let $\mathcal{C}$ be a $q$-ary simplex code of dimension $k$ with standard generator matrix $\mathbf{G}_{\mathrm{simp}}$, and let $\mathcal{P}$ be any $2$-partition of $[(q^k-1)/(q-1)]$. If $(q,k)=(2,5)$ or $(3,4)$, then $2\E[\tau(\mathbf{G}_{\mathrm{simp}},\mathcal{P})] > \E[\mathcal{C}]$.
\end{theorem}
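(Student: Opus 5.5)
The plan is to show $2\E[\tau(\mathbf{G}_{\mathrm{simp}},\mathcal{P})]>\E[\mathcal{C}]=k+C^{(k)}_q$ by lower-bounding the tail $\Pr(\tau>t)$ beyond the counting floor. By Theorem~\ref{thm:floor} we already have $\Pr(\tau>t)=1$ for $t<\lceil k/2\rceil$. The target values come from \eqref{eq:simplexE2}:
\begin{itemize}
\item for $(q,k)=(2,5)$, $C^{(5)}_2\approx 1.370$, so we need $\E[\tau]>3.185$, i.e.\ $\sum_{t\ge 3}\Pr(\tau>t)>0.185$;
\item for $(q,k)=(3,4)$, $C^{(4)}_3\approx 0.6$ (computed exactly from \eqref{eq:simplexE2}), so we need $\sum_{t\ge 2}\Pr(\tau>t)$ to exceed roughly $0.3$.
\end{itemize}

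For the lower bound I would use the Bonferroni truncation of Lemma~\ref{lem:qie} at codimension two. Pointwise,
\[
\mathbf 1\{\spn(A_t)\ne\Fq^k\}\ \ge\ \sum_{H}\mathbf 1\{A_t\subseteq H\}-\sum_{\{H,H'\}}\mathbf 1\{A_t\subseteq H\cap H'\},
\]
where $H$ runs over hyperplanes and $\{H,H'\}$ over unordered pairs of distinct hyperplanes. This holds because if $A_t$ lies in exactly $N\ge 1$ hyperplanes, the right side is $N-\binom N2\le 1$. Since each codimension-two $W$ lies in $q+1$ hyperplanes, taking expectations gives
\[
\Pr(\tau>t)\ \ge\ \sum_H\rho_H^t-\binom{q+1}{2}\sum_{\codim W=2}\rho_W^t .
\]
The first-order term I would bound below by an exact count. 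Since columns in different containers are distinct projective points, every cross pair $(x,y)\in P_1\times P_2$ spans a $2$-space and lies in exactly $[k-2]_q$ hyperplanes. Hence $\sum_H m_1^Hm_2^H=n_1n_2[k-2]_q$, i.e.\ $\sum_H\rho_H=[k-2]_q$ for every $2$-partition. The power-mean inequality then gives
\[
\sum_H\rho_H^t\ \ge\ [k]_q\Bigl(\frac{[k-2]_q}{[k]_q}\Bigr)^t .
\]
For $(2,5)$ at $t=3$ this is $31(7/31)^3\approx 0.357$, with further positive contributions from $t=4,5,\dots$.

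The main obstacle is the upper bound on the codimension-two correction $\sum_W\rho_W^t$, uniformly over all (possibly very unbalanced) partitions. For it I would first try the pointwise bound $\rho_W\le\max_H\rho_H$ combined with $\rho_W\le (m_1^W/n_1)(m_2^W/n_2)$ and $m_1^W+m_2^W=[k-2]_q$. By AM--GM on the numerators, $\rho_W\le [k-2]_q^2/(4n_1n_2)$, which is small whenever both $n_1,n_2$ are not tiny. When one container is tiny ($n_1\le$ some small constant), I would abandon Bonferroni and argue directly: the large container alone behaves like nearly single-container sampling from almost all of $\mathbb{P}$. Then $\Pr(\tau>t)$ can be compared to the single-container tail at $t$ draws plus at most $t$ extra points, which is far larger than needed.

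Finally, I would sum the resulting explicit lower bounds over $t\ge\lceil k/2\rceil$, using that the second-order correction decays geometrically faster than the first-order term. For larger $t$ I would just keep $\Pr(\tau>t)\ge 0$, and verify numerically in each case that the total exceeds the required threshold. If the uniform codimension-two bound is too lossy in some intermediate range of $(n_1,n_2)$, I would parametrize by $n_1$ and check that finite range of cases directly.
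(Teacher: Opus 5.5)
Your outline has the same architecture as the paper's proof: codimension-two truncation, Jensen on the exact hyperplane mass $\sum_H\rho_H=[k-2]_q$, a $\mu_*$-bound on the codimension-two term, and separate treatment of tiny containers. As stated, however, it does not close, because your second-order inequality is too lossy.

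Pairing hyperplanes gives $\Pr(\tau>t)\ge X_t-\binom{q+1}{2}Y_t$, i.e.\ coefficient $3$ for $q=2$ and $6$ for $q=3$. Truncating the M\"obius expansion on the subspace lattice (Corollary~\ref{cor:codim-two-bonferroni}) instead gives $\Pr(\tau>t)\ge X_t-qY_t$, since pointwise $1-[u]_q+q\qbin{u}{2}=q^3\qbin{u-1}{2}\ge0$. Your bound is weaker by $\frac{q(q-1)}{2}Y_t$, and the margins cannot absorb this.

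This holds even with the best available estimates. These are the identity $\sum_{\codim W=2}\rho_W=\qbin{k-2}{2}$ and $\mu_*\le\lfloor[k-2]_q/2\rfloor\lceil[k-2]_q/2\rceil/(n_1n_2)$. You need the identity too: summing $\mu_*^t$ over all codimension-two subspaces is far too crude once the partition is unbalanced. With these estimates you get:
for $(2,5)$ and $n_1=4$, only $\frac{343}{744}-3\cdot\frac{7}{72}\approx0.169<\frac{207}{1120}\approx0.185$;
for $(2,5)$ and $n_1=3$, even after isolating a hyperplane through $P_1$ in the Jensen step, only about $0.542-0.5\approx0.04$;
for $(3,4)$, the uniform bound fails for every $n_1\in\{3,4,5\}$, e.g.\ $\frac49-\frac{24}{107}\approx0.220<\frac{217}{702}\approx0.309$ at $n_1=3$.
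With coefficient $q$, the same estimates close every case with $n_1\ge3$ (using the isolated-hyperplane refinement for $n_1=3$ in $(2,5)$). That is exactly how the paper proceeds.

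Your fallback for small containers does not fill this hole as described. Regarding $P_1$ as ``at most $t$ extra points'' only certifies failure when the $P_2$-draws are themselves rank-deficient by that many dimensions, which has probability $O(1/n_2)$. The tool that works is the quotient reduction. If $W=\spn(P_1)$ has dimension $w\le k-2$, the $P_2$-draws outside $W$ project uniformly onto the projective points of $\Fq^k/W$, so $\E[\tau]\ge\E[T_{q,k-w}]$. This settles $n_1\le2$ in both cases, and collinear triples in $(2,5)$. But for $n_1=3$ with $w=3$ in $(2,5)$ it gives only $\E[T_{2,2}]=5/2<3567/1120$, which is precisely the case the paper handles with the sharper coefficient $q=2$.

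To complete your argument, either replace the pairwise Bonferroni step by $X_t-qY_t$, or give explicit arguments for the failing cases:
for $(2,5)$, $n_1=4$, isolate the hyperplane through $P_1$, which yields $\approx0.522-0.292=0.230>0.185$;
for $(2,5)$, $n_1=3$, evaluate $\sum_W\rho_W^t$ exactly from the intersection numbers $|P_1\cap W|$;
for $(3,4)$, $n_1\in\{3,4,5\}$, bound $\Pr(\tau>2)$ directly by a repeat in $P_1$ plus, via Cauchy--Schwarz, a collision of the two $P_2$-draws in the quotient of $\F_3^4$ by the span of the two distinct $P_1$-draws.
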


For the proof of Theorem~\ref{thm:simplexm2}, we need some preliminary results. First, we present the following corollary of Theorem \ref{thm:distributed-exact-representations} for general $M$.

\begin{corollary}\label{cor:codim-two-bonferroni}
Under the assumptions of
Theorem~\ref{thm:distributed-exact-representations}, let $u_t \triangleq k-\dim(V_t)$, and define $X_t \triangleq \sum_{H\lneq \Fq^{k}}\rho_H^t$, $Y_t \triangleq \sum_{\mathrm{codim}(U)=2}\rho_U^t$. Then, for every integer $t\geq0$,
\begin{equation}\label{eq:codim-two-exact-remainder}
    \Pr\bigl(\tau(\mathbf G,\mathcal P)>t\bigr) = X_t-qY_t + q^3
    \mathbb E\left[
        \mathbf 1_{\{u_t\geq3\}}
        \qbin{u_t-1}{2}
    \right].
\end{equation}
In particular, $\Pr\bigl(\tau(\mathbf G,\mathcal P)>t\bigr) \geq X_t-qY_t$.
\end{corollary}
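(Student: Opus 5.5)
Our starting point is the Möbius tail formula \eqref{eq:mobius-tail-formula} of Theorem~\ref{thm:distributed-exact-representations}. We group its terms by codimension. The codimension-one terms contribute exactly $X_t$ (the sign is $(-1)^0q^0=1$). The codimension-two terms contribute $-qY_t$, since $(-1)^1q^{\binom22}=-q$. What remains is the sum over $U$ with $\codim(U)\ge3$, and the task is to identify this remainder with the expectation in \eqref{eq:codim-two-exact-remainder}. We do this at the level of indicators, before taking expectations, exactly as in the proof of Theorem~\ref{thm:distributed-exact-representations}.

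Fix a realization, let $A_t$ be the set of sampled columns, and set $u=u_t=k-\dim\spn(A_t)$. By the quotient argument in Lemma~\ref{lem:qie}, the number of codimension-$c$ subspaces containing $A_t$ is $\qbin{u}{c}$. Lemma~\ref{lem:qie} then gives
$$
\mathbf 1_{\{u\ge1\}}=\sum_{c\ge1}(-1)^{c-1}q^{\binom c2}\qbin{u}{c}.
$$
Hence the remainder indicator is
$$
R(u)\triangleq\sum_{c\ge3}(-1)^{c-1}q^{\binom c2}\qbin uc=\mathbf 1_{\{u\ge1\}}-\qbin u1+q\qbin u2 .
$$
Taking expectations and using $\E\bigl[\sum_{\codim U=c}\mathbf 1\{A_t\subseteq U\}\bigr]=\sum_{\codim U=c}\rho_U^t$ reduces the statement to the purely combinatorial identity
$$
\mathbf 1_{\{u\ge1\}}-\gv u+q\qbin u2=q^3\,\mathbf 1_{\{u\ge3\}}\qbin{u-1}{2}.
$$

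We verify this identity directly.
\begin{itemize}
\item For $u=0$, both sides vanish.
\item For $u=1$, the left side is $1-1+0=0$.
\item For $u=2$, the left side is $1-(1+q)+q=0$.
\item For $u\ge3$, we use the $q$-Pascal rule $\qbin u2=\qbin{u-1}{1}+q^{2}\qbin{u-1}{2}$ together with $\gv u=1+q\gv{u-1}$. The left side becomes $1-1-q\gv{u-1}+q\gv{u-1}+q^3\qbin{u-1}2$, which equals $q^3\qbin{u-1}2$. This formula also gives $0$ for $u=2$, consistent with the indicator.
\end{itemize}
This identity is the only real content. The main care needed is to use the correct orientation of the $q$-Pascal rule: $\qbin{u}{2}=\qbin{u-1}{1}+q^2\qbin{u-1}{2}$, not $\qbin{u}{2}=q^{u-2}\qbin{u-1}{1}+\qbin{u-1}{2}$. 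The alternative orientation gives a valid but different expression, so we check the small cases $u=3$ explicitly: the left side is $1-(1+q+q^2)+q(1+q+q^2)=q^3$, and the right side is $q^3\qbin22=q^3$. The two sides agree.

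Finally, the remainder is nonnegative pointwise, since $q^3\qbin{u-1}2\ge0$, so its expectation is nonnegative. This yields the inequality $\Pr(\tau>t)\ge X_t-qY_t$. All sums are over a finite lattice, so interchanging the expectation with the sum over $U$ is immediate.
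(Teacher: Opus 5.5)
Your proposal is correct and follows essentially the same route as the paper. Both arguments group the M\"obius tail formula by codimension and identify $\sum_{\codim U=d}\rho_U^t$ with $\E\bigl[\qbin{u_t}{d}\bigr]$. Both then reduce the claim to the identity $1-\gv{u}+q\qbin{u}{2}=q^3\qbin{u-1}{2}$, proved via $\gv{u}=1+q\gv{u-1}$ and the $q$-Pascal rule $\qbin{u}{2}=\gv{u-1}+q^2\qbin{u-1}{2}$. The only difference is cosmetic: you obtain the alternating sum from Lemma~\ref{lem:qie} in indicator form, which covers $u=0$ automatically, whereas the paper quotes the $q$-binomial theorem for $u\ge1$.
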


\begin{IEEEproof}
For $1\leq d\leq k$, put $Z_{d,t}\triangleq \sum_{\mathrm{codim}(U)=d}\rho_U^t$. By~\eqref{eq:mobius-tail-formula},
\begin{equation}\label{eq:mobius-grouped-by-codim}
    \Pr\bigl(\tau(\mathbf{G},\mathcal{P})>t\bigr)
    =
    \sum_{d=1}^k
    (-1)^{d-1}q^{\binom d2}Z_{d,t}.
\end{equation}

Note that for each subspace $H\lneq \Fq^k$, $\Pr(V_t\subseteq H)=\rho_H^{t}$, and the number of codimension-$d$ subspaces containing $V_t$ is $\qbin{u_t}{d}$. Hence, $Z_{d,t} =\sum_{\mathrm{codim}(U)=d}\Pr(V_t\subseteq U)= \mathbb E\left[\qbin{u_t}{d}\right]$. In particular, $Z_{1,t}=\mathbb E[\bra{u_t}]=X_t$ and $Z_{2,t}=\mathbb E\left[\qbin{u_t}{2}\right]=Y_t$. Moreover, by the linearity of expectation, we have
\begin{align}
\sum_{d=3}^{k}
(-1)^{d-1}q^{\binom d2}Z_{d,t} & =
\mathbb{E}\left[
    \sum_{d=3}^{k}
    (-1)^{d-1}q^{\binom d2}
    \qbin{u_t}{d}
\right] \nonumber \\
& = \mathbb{E}\left[
    \sum_{d=3}^{u_t}
    (-1)^{d-1}q^{\binom d2}
    \qbin{u_t}{d}
\right], \label{eq: change of upper limits of summation}
\end{align}
where the second equality follows since for every realization of $u_t$, one has $\qbin{u_t}{d}=0$ whenever $d>u_t$.

For every integer $u\geq1$, the \emph{$q$-binomial theorem} \cite[Equation 1.87]{Stanley2011enumerative} gives $1 = \sum_{d=1}^{u} (-1)^{d-1}q^{\binom d2}\qbin{u}{d}$. Separating the terms with $d=1$ and $d=2$, we obtain
\begin{align}\label{eq:mobius-two-term-remainder}
    \sum_{d=3}^{u} (-1)^{d-1}q^{\binom d2}\qbin{u}{d} = 1-\bra{u}+q\qbin{u}{2} = q^3\qbin{u-1}{2}
\end{align}
for every $u\geq1$, where the second equality follows by $\bra{u}=1+q\bra{u-1}$ and the $q$-Pascal identity $\qbin{u}{2} = q^2\qbin{u-1}{2}+\bra{u-1}$. Hence, by substituting \eqref{eq:mobius-two-term-remainder} into \eqref{eq:mobius-grouped-by-codim} and taking expectations, \eqref{eq: change of upper limits of summation} yields
$$
\Pr\bigl(\tau(\mathbf G,\mathcal P)>t\bigr) = X_t-qY_t + q^3
    \mathbb E\left[
        \mathbf 1_{\{u_t\geq3\}}
        \qbin{u_t-1}{2}
    \right].
$$
Since the last term is nonnegative, $\Pr\bigl(\tau(\mathbf G,\mathcal P)>t\bigr) \geq X_t-qY_t$ follows.
\end{IEEEproof}

Next, we establish some lower bounds on the partial sums of $X_t$ and $Y_t$ defined in Corollary~\ref{cor:codim-two-bonferroni}.

\begin{lemma}\label{lem:simplex-two-container-moments}
Let $\mathbf{G}_{\mathrm{simp}}$ be the standard generator matrix of the $q$-ary simplex code of dimension $k$, and let $\mathcal{P}$ be a $2$-partition of $[(q^k-1)/(q-1)]$. Let $X_t$ and $Y_t$ be as defined in Corollary~\ref{cor:codim-two-bonferroni}. Let $\mathcal{A}$ be any nonempty family of hyperplanes, and define
$\overline{\rho}_{\mathcal{A}} \triangleq \frac{1}{|\mathcal{A}|} \sum_{H\in\mathcal{A}}\rho_H$. Then, for every integer $t_0 \ge 1$,
\begin{equation}\label{eq:subfamily-jensen-tail}
    \sum_{t\ge t_0} \sum_{H\in\mathcal{A}} \rho_H^t
    \ge
    |\mathcal{A}|
    \frac{\overline{\rho}_{\mathcal{A}}^{\,t_0}}{1 - \overline{\rho}_{\mathcal{A}}}.
\end{equation}
In particular,
\begin{equation}\label{eq:full-jensen-tail}
    \sum_{t\ge t_0} X_t
    \ge
    [k]_q
    \frac{\left([k-2]_q / [k]_q\right)^{t_0}}{1 - [k-2]_q / [k]_q}.
\end{equation}
Moreover, defining $\mu_* \triangleq \max_{\mathrm{codim}(U) = 2} \rho_U$, we have $\mu_* < 1$ and
\begin{equation}\label{eq:Y-tail-general}
    \sum_{t\ge t_0} Y_t
    \le
    \frac{\mu_*^{\,t_0-1}}{1 - \mu_*}
    \begin{bmatrix} k-2 \\ 2 \end{bmatrix}_q.
\end{equation}
Finally,
\begin{equation}\label{eq:mu-star-combinatorial}
    \mu_* \le \frac{\lfloor [k-2]_q / 2 \rfloor \lceil [k-2]_q / 2 \rceil}{n_1 n_2}.
\end{equation}
\end{lemma}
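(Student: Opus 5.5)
The plan is to treat the four claims in order, since each is a short convexity or counting argument built on the definition \eqref{eq:rho} of $\rho_U$ with $M=2$. Throughout I use the fact that the columns of $\mathbf{G}_{\mathrm{simp}}$ are representatives of pairwise distinct projective points.

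\emph{Claim \eqref{eq:subfamily-jensen-tail}.} For a fixed $t\ge1$, the map $x\mapsto x^t$ is convex on $[0,1]$. Jensen's inequality over the uniform distribution on $\mathcal{A}$ then gives $\sum_{H\in\mathcal{A}}\rho_H^t\ge|\mathcal{A}|\,\overline{\rho}_{\mathcal{A}}^{\,t}$. Summing this geometric lower bound over $t\ge t_0$ yields \eqref{eq:subfamily-jensen-tail}. This uses $\overline{\rho}_{\mathcal{A}}<1$, which holds because $\rho_H<1$ for every proper subspace (as noted in the proof of Theorem~\ref{thm:distributed-exact-representations}).

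\emph{Claim \eqref{eq:full-jensen-tail}.} Take $\mathcal{A}$ to be the family of all $[k]_q$ hyperplanes. Repeating the double-counting step in the proof of Theorem~\ref{thm:floor} with $M=2$ gives
\[
\sum_H\rho_H=\frac{1}{n_1n_2}\sum_{x_1\in P_1,x_2\in P_2}\bigl|\{H:\langle x_1,x_2\rangle\le H\}\bigr|\ge [k-2]_q,
\]
so $\overline{\rho}_{\mathcal{A}}\ge[k-2]_q/[k]_q$. Since $x\mapsto x^{t_0}/(1-x)$ is increasing on $[0,1)$, claim \eqref{eq:full-jensen-tail} follows from \eqref{eq:subfamily-jensen-tail}.

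\emph{Claim \eqref{eq:Y-tail-general}.} First, $\mu_*<1$ because it is a maximum of finitely many $\rho_U$ with $U$ proper. For $t\ge1$ I bound $\rho_U^t\le\mu_*^{t-1}\rho_U$, which gives $Y_t\le\mu_*^{t-1}\sum_{\codim(U)=2}\rho_U$. The key observation is that this last sum is exact. Any $x_1\in P_1$ and $x_2\in P_2$ are distinct columns of the simplex code, hence represent distinct projective points, so $\langle x_1,x_2\rangle$ is always $2$-dimensional. Double counting then gives
\[
\sum_{\codim(U)=2}\rho_U=\frac{1}{n_1n_2}\sum_{x_1,x_2}\qbin{k-2}{2}=\qbin{k-2}{2},
\]
using that codimension-$2$ subspaces containing a fixed $2$-space correspond to codimension-$2$ subspaces of a $(k-2)$-dimensional quotient. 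Summing $\mu_*^{t-1}$ over $t\ge t_0$ gives \eqref{eq:Y-tail-general}.

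\emph{Claim \eqref{eq:mu-star-combinatorial}.} A codimension-$2$ subspace $U$ contains exactly $[k-2]_q$ projective points. Each such point appears once among the columns, so $m_1^U+m_2^U\le[k-2]_q$ with $m_1^U,m_2^U$ nonnegative integers. The product of two nonnegative integers with bounded sum is maximized by the most balanced split, giving $m_1^Um_2^U\le\lfloor[k-2]_q/2\rfloor\lceil[k-2]_q/2\rceil$. Dividing by $n_1n_2$ proves \eqref{eq:mu-star-combinatorial}. I expect no real obstacle here; the only point requiring care is the exactness of the codimension-$2$ count, which relies on the distinct-points property of the simplex code.
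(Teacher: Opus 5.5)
Your proposal is correct and follows essentially the paper's route: Jensen for $x\mapsto x^t$ summed over $t\ge t_0$, double counting $\sum_H\rho_H$ and $\sum_{\codim(U)=2}\rho_U$ over pairs of distinct projective points, and a balanced-split bound on $m_1^Um_2^U$. The differences are cosmetic and equally valid. You use the inequality $\sum_H\rho_H\ge[k-2]_q$ together with monotonicity of $x^{t_0}/(1-x)$, where the paper observes that this sum equals $[k-2]_q$ exactly. You also bound $\rho_U^t\le\mu_*^{t-1}\rho_U$ termwise, whereas the paper invokes monotonicity of $x^{t_0-1}/(1-x)$.
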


\begin{IEEEproof}
For every fixed integer $t\geq1$, the function $f(x)= x^t$ is convex on $[0,1]$. Hence, by Jensen's inequality,
$$
\sum_{H\in\mathcal{A}}\rho_H^t \geq
    |\mathcal{A}|
    \left( \frac{1}{|\mathcal{A}|} \sum_{H\in\mathcal{A}}\rho_H
    \right)^t 
    = |\mathcal{A}|\,\overline\rho_{\mathcal{A}}^{\,t}.
$$
Since every hyperplane is a proper subspace, one has $\rho_H<1$ for every hyperplane $H$. In particular,
$\overline\rho_{\mathcal{A}}<1$. Summing the preceding inequality over
$t\geq t_0$ proves \eqref{eq:subfamily-jensen-tail}.

According to \eqref{eq:rho}, we have
\begin{align*}
    \sum_{H\lneq \Fq^k}\rho_H & = \sum_{H\lneq \Fq^k}\sum_{\mathbf{g}_1\in \mathbf{G}_1}\sum_{\mathbf{g}_2\in \mathbf{G}_2}\frac{\mathbf{1}\{\mathbf{g}_1,\mathbf{g}_2\in H\}}{n_1n_2} \\
    & = \sum_{\mathbf{g}_1\in \mathbf{G}_1}\sum_{\mathbf{g}_2\in \mathbf{G}_2}\frac{|\{H\lneq \Fq^k:\spn\{\mathbf{g}_1,\mathbf{g}_2\}\subseteq H\}|}{n_1n_2} \\
    & = \bra{k-2},
\end{align*}
where the last equality follows since columns of $\mathbf{G}_{\mathrm{simp}}$ are projective points in $\mathbb{P}$ and any two of them are linearly independent. Note that the number of hyperplanes in $\Fq^k$ is $\bra{k}$. Hence, \eqref{eq:full-jensen-tail} follows by applying \eqref{eq:subfamily-jensen-tail} to the family of all hyperplanes.

We next consider $Y_t$. Since  $\rho_U<1$ for every codimension-two subspace $U$, we have $\mu_*<1$. Therefore,
\begin{align*}
    \sum_{t\geq t_0}Y_t = \sum_{\mathrm{codim}(U)=2}
    \frac{\rho_U^{t_0}}{1-\rho_U}  \sum_{\mathrm{codim}(U)=2} \rho_U \frac{\rho_U^{t_0-1}}{1-\rho_U} \leq
    \frac{\mu_*^{\,t_0-1}}{1-\mu_*}\sum_{\mathrm{codim}(U)=2}\rho_U,
\end{align*}
where the last inequality follows since $\frac{x^{t_0-1}}{1-x}$ is increasing in $x$ for $0<x<1$ and $t_0\geq 1$. Again, according to \eqref{eq:rho}, we have
$$
\sum_{\mathrm{codim}(U)=2}\rho_U =
    \sum_{\mathbf{g}_1\in \mathbf{G}_1}\sum_{\mathbf{g}_2\in \mathbf{G}_2}\frac{|\{\mathrm{codim}(U)=2:\spn\{\mathbf{g}_1,\mathbf{g}_2\}\subseteq U\}|}{n_1n_2} = \qbin{k-2}{2},
$$
and hence \eqref{eq:Y-tail-general} follows.

Finally, every codimension-two subspace contains exactly $\bra{k-2}$ projective points. Denote $a_U\triangleq |\mathbf{G}_1\cap U|$, then, we have $|\mathbf{G}_2\cap U|=\bra{k-2}-a_U$. Thus, $\rho_U=\frac{a_U(\bra{k-2}-a_U)}{n_1n_2}$. Set $b\triangleq\min\mathset{n_1,\floorenv{[k-2]_q/2}}$. Since $a_U=|\bbG_1\cap U|\leq \min\mathset{n_1,[k-2]_q}$ and $a\mapsto a([k-2]_q-a)$ is unimodal, we obtain 
\begin{equation}\label{eq:mu-star}
\rho_U\leq \frac{b\parenv{[k-2]_q-b}}{n_1n_2}\leq \frac{\lfloor\bra{k-2}/2\rfloor\lceil\bra{k-2}/2\rceil}{n_1n_2}.
\end{equation}
Taking the maximum over $U$ proves
\eqref{eq:mu-star-combinatorial}.
\end{IEEEproof}

Next, using Lemma~\ref{lem:simplex-two-container-moments}, we proceed to prove Theorem~\ref{thm:simplexm2}.

\begin{IEEEproof}[Proof of Theorem \ref{thm:simplexm2}]
We identify the columns of $\mathbf{G}_{\mathrm{simp}}$ with the set of projective points $\mathbb{P}$ in $\Fq^{k}$. Put $n=\bra{k}$, and $n_i=|P_i|$, $i=1,2$. By the symmetry between the two containers, assume throughout the proof that $n_1\leq n_2$.

For a vector subspace $U\leq\mathbb{F}_q^k$, recall that $\rho_U =\frac{|C_1\cap U|}{n_1}\frac{|C_2\cap U|}{n_2}$ and $\rho_U$ is the probability that the two columns drawn in one round both lie in $U$. Consequently, the probability that all columns drawn during the first $t$ rounds lie in $U$ is $\rho_U^t$.

We first record a quotient reduction that will be used when the container with smaller size of the two is very small. Let $T_{q,k}$ denote the recovery time of $k$-dim $q$-ary simplex code in the single-container model, i.e., $T_{q,k}=\tau(\mathbf{G}_{\mathrm{simp}},\mathcal{P}_0)$ with $\mathcal{P}_0=\{[(q^k-1)/(q-1)]\}$. 

Let $W\triangleq \mathrm{span}\{\mathbf g_j:j\in \mathbf{G}_1\}$ and suppose that $w\triangleq \dim W\leq k-2$. Consider the quotient space $Q\triangleq \mathbb F_q^k/W$. Clearly, every column drawn from $\mathbf{G}_1$ projects to zero in $Q$. Moreover, every projective point of $Q$ has $\bra{w+1}-\bra{w}=q^w$ projective preimages outside $W$, all of which belong to $\mathbf{G}_2$. Hence, conditioned on lying outside $W$, the projection of a column drawn from $\mathbf{G}_2$ in $Q$ is uniform on the projective points of $Q$.

Let $R_j$ be the round in which the $j$-th drawn column from $\mathbf{G}_2$ lies outside $W$, and let $T_Q$ be the number of such draws required to span $Q$. Then, according to the discussion above, $T_Q$ has the same distribution as $T_{q,k-w}$. Moreover, since all columns drawn from $\mathbf{G}_1$ project to zero in $Q$, we have
$\tau(\mathbf G_{\mathrm{simp}},\mathcal P) \geq R_{T_Q} \geq T_Q$ a.s. Therefore,
\begin{equation}\label{eq:quotient-bound-in-proof}
    \mathbb E[\tau(\mathbf G_{\mathrm{simp}},\mathcal P)]
    \geq
    \mathbb E[T_{q,k-w}].
\end{equation}
If $n_1=1$, then $W$ is the line spanned by the unique projective point of $\bbG_1$, so $w=1$ and every column drawn from $\bbG_2$ lies outside $W$. Since each projective point of $Q=F_q^{k}/W$ has exactly $[2]_q-[1]_q=q$ preimages outside $W$, all of them in $\bbG_2$, the projection of a draw from $\bbG_2$ is uniform on the projective points of $Q$. Consequently $\tau(\bbG_{\mathrm{simp}},\cP)$ has the same distribution as $T_{q,k-1}$.

We now consider the two parameter sets separately. 

\medskip
\noindent
\emph{Case 1: $(q,k)=(2,5)$.}
Here $n=31$, and $\mathbb E[\mathcal C]=\mathbb E[T_{2,5}]=\frac{3567}{560}$ by Theorem \ref{thm: simplexE}.

Since at most four columns have been drawn after two rounds, for $0\leq t\leq 2$, we have$\Pr\bigl(\tau(\mathbf G_{\mathrm{simp}},\mathcal P)>t\bigr)=1$. Therefore, denote
$$
    \Sigma_{2,5}\triangleq \sum_{t\geq3}
    \Pr\bigl( \tau(\mathbf{G}_{\mathrm{simp}},\mathcal P)>t\bigr),
$$
we have $\mathbb E[\tau(\mathbf G_{\mathrm{simp}},\mathcal P)] = 3+\Sigma_{2,5}$. It suffices to prove $\Sigma_{2,5}> \frac{\mathbb E[\mathcal C]-6}{2} = \frac{207}{1120}$.

Suppose first that $n_1\geq4$. In this case, $\bra{k-2}=\bra{3}=7$, $\qbin{k-2}{2}=\qbin{3}{2}=7$. By \eqref{eq:full-jensen-tail}, with $t_0=3$,
$$
\sum_{t\geq3}X_t \geq 31 \frac{(7/31)^3}{1-7/31} = \frac{343}{744}.
$$
By \eqref{eq:mu-star-combinatorial}, $\mu_* \leq \frac{3\cdot4}{n_1(31-n_1)} \leq \frac{12}{4\cdot27} = \frac{1}{9}$. Hence \eqref{eq:Y-tail-general} yields
$$
2\sum_{t\geq3}Y_t \leq 2 \frac{(1/9)^2}{1-1/9} \cdot7 = \frac{7}{36}.
$$
Then, by Corollary~\ref{cor:codim-two-bonferroni}, we obtain
$$
\Sigma_{2,5} \geq \frac{343}{744}-\frac{7}{36} = \frac{595}{2232} > \frac{207}{1120}.
$$

Next suppose that $n_1=3$. The three projective points of $\mathbf{G}_1$ are contained in some hyperplane $H_0$. Since $H_0$ contains $\bra{4}=15$ projective points, we have $\rho_{H_0} = \frac{3}{3}\frac{15-3}{28} =\frac{3}{7}$. Recall that $\sum_{H\lneq \Fq^k}\rho_H=\bra{k-2}=7$. Hence, the average mass of the remaining thirty hyperplanes is $u=\frac{7-3/7}{30}{}=\frac{46}{7\cdot30}=\frac{23}{105}$. Applying \eqref{eq:subfamily-jensen-tail} with $\mathcal{A}$ being the set of the remaining thirty
hyperplanes and $\overline\rho_{\mathcal{A}}=u$ yields
$$
\sum_{t\geq3}X_t \geq
    \frac{(3/7)^3}{1-3/7} + 30\frac{u^3}{1-u}
    =
    \frac{27}{196} + \frac{12167}{30135}.
$$
Moreover, by \eqref{eq:mu-star-combinatorial}, $\mu_*\leq\frac{3\cdot4}{3\cdot28}=\frac{1}{7}$. Hence,
$$
2\sum_{t\geq3}Y_t \leq 2 \frac{(1/7)^2}{1-1/7} \cdot 7 = \frac{1}{3}
$$
by \eqref{eq:Y-tail-general}. Then, it follows that
$$
\Sigma_{2,5} \geq \frac{27}{196} + \frac{12167}{30135} - \frac{1}{3} = \frac{25093}{120540} > \frac{207}{1120}.
$$


If $n_1=2$, then the two distinct projective points of $\mathbf{G}_1$ span a two-dimensional space. Thus \eqref{eq:quotient-bound-in-proof} yields
$$
\mathbb E[\tau(\mathbf G_{\mathrm{simp}},\mathcal P)] \geq \mathbb E[T_{2,3}] = \frac{47}{12} > \frac{3567}{1120} = \frac{\mathbb E[\mathcal C]}{2}.
$$
If $n_1=1$, since $\tau(\mathbf G_{\mathrm{simp}},\mathcal P)$ has the same distribution as $T_{q,k-1}$, then we have
$$
\mathbb E[\tau(\mathbf G_{\mathrm{simp}},\mathcal P)]=\mathbb E[T_{2,4}]=\frac{291}{56}>\frac{3567}{1120}=\frac{\mathbb E[\mathcal C]}{2}.
$$

\medskip
\noindent
\emph{Case 2: $(q,k)=(3,4)$.}
Here $n=40$, and $\mathbb E[\mathcal C]=\mathbb E[T_{3,4}]=\frac{1621}{351}$. Since two vectors cannot span $\mathbb F_3^4$, we have $\Pr\bigl(\tau(\mathbf{G}_{\mathrm{simp}},\mathcal{P})>t\bigr)=1$, for $0\leq t\leq1$. Denote
$$
\Sigma_{3,4}\triangleq\sum_{t\geq2}\Pr\bigl(\tau(\mathbf G_{\mathrm{simp}},\mathcal P)>t\bigr),
$$
we have $\mathbb E[\tau(\mathbf G_{\mathrm{simp}},\mathcal P)]=2+\Sigma_{3,4}$. It therefore suffices to prove $\Sigma_{3,4}> \frac{\mathbb E[\mathcal{C}]-4}{2} = \frac{217}{702}$.

Suppose first that $n_1\geq3$. In this case, $\bra{k-2}=\bra{2}=4$, $\qbin{k-2}{2}=\qbin{2}{2}=1$. By \eqref{eq:full-jensen-tail}, with $t_0=2$, we have
$$
\sum_{t\geq2}X_t \geq 40 \frac{(1/10)^2}{1-1/10} = \frac{4}{9}.
$$
Moreover, by \eqref{eq:mu-star-combinatorial} $\mu_*\leq \frac{2\cdot2}{n_1(40-n_1)} \leq \frac4{3\cdot37} = \frac{4}{111}$. Thus
$$
3\sum_{t\geq2}Y_t \leq 3 \frac{4/111}{1-4/111} = \frac{12}{107}.
$$
Therefore, by Corollary~\ref{cor:codim-two-bonferroni}, we have
$$
\Sigma_{3,4} \geq \frac{4}{9}-\frac{12}{107} = \frac{320}{963} > \frac{217}{702}.
$$

If $n_1=2$, then \eqref{eq:quotient-bound-in-proof} gives
$$
\mathbb E[\tau(\mathbf G_{\mathrm{simp}},\mathcal P)] \geq \mathbb E[T_{3,2}] = \frac{7}{3} > \frac{1621}{702} = \frac{\mathbb E[\mathcal C]}{2}.
$$
If $n_1=1$,  since $\tau(\mathbf G_{\mathrm{simp}},\mathcal P)$ has the same distribution as $T_{q,k-1}$, we have
$$
\mathbb E[\tau(\mathbf G_{\mathrm{simp}},\mathcal P)]=\mathbb E[T_{3,3}]=\frac{127}{36} > \frac{1621}{702}=\frac{\mathbb E[\mathcal C]}2.
$$
The proof is complete.
\end{IEEEproof}

\section{Proof of Theorem~\ref{thm:random-balanced}}\label{app:random-balanced-proof}

\begin{IEEEproof}
Set $N\triangleq \gv{k}$ and $s\triangleq \frac{N}{M}$. Let $\tau_{\mathrm{rb},k}$ denote the coverage time under the joint randomness of the balanced partition and the subsequent draws. Thus, $\overline E_{\mathrm{rb}}(k,M) = \E[\tau_{\mathrm{rb},k}]$.

Let $X_1,X_2,\ldots$ be i.i.d. columns drawn uniformly from the $N$ columns of $\mathbf{G}_{\mathrm{simp},k}$, and define
$$
\widehat\tau_k \triangleq
\min\left\{ t\ge1: \spn(X_1,\ldots,X_{Mt}) = \mathbb{F}_q^k \right\}.
$$
Clearly, we have $\widehat\tau_k = \left\lceil \frac{\tau_{\mathrm{simp},k}}{M} \right\rceil$.


Fix an integer $1\leq L\leq s$, we regard the indices of the columns drawn in the first $L$ rounds as an ordered array with $ML$ positions. Let $\mu_{\mathrm{rb}}^{(L)}$ and $\mu_{\mathrm{iid}}^{(L)}$ denote the distributions of these arrays in the random balanced model and in the i.i.d. process, respectively. Let $\mathcal{D}_L$ be the event that the $ML$ sampled columns are all distinct. We claim that
\begin{equation}
\mu_{\mathrm{rb}}^{(L)}(\,\cdot\mid\mathcal{D}_L) = \mu_{\mathrm{iid}}^{(L)}(\,\cdot\mid\mathcal{D}_L). \label{eq:rb-same-conditional-law}
\end{equation}
To show this, consider any ordered array $\mathbf{x} =(x_{i,t})_{i\in[M],\,t\in[L]}$
consisting of $ML$ distinct columns. In order to obtain this array, we must have $\{x_{i,t}\}_{t\in [L]}\subseteq  P_i$ holds for all $i\in[M]$ in the random balanced model. The probability that a uniformly random balanced partition satisfies these constraints is $((s)_L)^{M}/(N)_{ML}$, where $(a)_r=a(a-1)\cdots(a-r+1)$ denotes the falling factorial. Conditioned on such a partition, the probability
of drawing the prescribed array is $s^{-ML}$. Hence
$$
\mu_{\mathrm{rb}}^{(L)}(\mathbf{x}) = \frac{(s)_L^M}{(N)_{ML}}s^{-ML},
$$
which is independent of the particular choice of $\mathbf{x}$. Thus, conditioned on $\mathcal{D}_L$, the random balanced process is uniform over all ordered arrays of $ML$ distinct columns.

On the other hand, for the i.i.d. process, every ordered array has probability $N^{-ML}$. Hence, conditioned on $\mathcal{D}_L$, it is also uniform over all ordered arrays of distinct columns. This proves \eqref{eq:rb-same-conditional-law}.

Recall that for probability measures $\mu$ and $\nu$ on a common finite sample space $\Omega$, their total variation distance is defined as
$$
d_{\mathrm{TV}}(\mu,\nu) \triangleq \sup_{A\subseteq \Omega}|\mu(A)-\nu(A)|= \frac12\sum_{\omega}|\mu(\omega)-\nu(\omega)|.
$$
Thus, by \eqref{eq:rb-same-conditional-law}, it follows that
\begin{equation}\label{eq:rb-tv-collision}
    d_{\mathrm{TV}}
    \left( \mu_{\mathrm{rb}}^{(L)},\mu_{\mathrm{iid}}^{(L)} \right) \le
    \Pr_{\mathrm{rb}}(\mathcal{D}_L^c) + \Pr_{\mathrm{iid}}(\mathcal{D}_L^c).
\end{equation}
Under partition $\mathcal{P}_{\mathrm{rb},k}$, columns stored in different container cannot coincide. Thus a collision can occur only between two draws from the same container. By the union bound, we have
$$
\Pr_{\mathrm{rb}}(\mathcal{D}_L^c) \le M\binom{L}{2}\frac{1}{s} = \frac{M^2\binom{L}{2}}{N}.
$$
For the i.i.d. process, we have
$$
\Pr_{\mathrm{iid}}(\mathcal{D}_L^c) \le \binom{ML}{2}\frac{1}{N}.
$$
Therefore,
\begin{equation}
    d_{\mathrm{TV}}
    \left( \mu_{\mathrm{rb}}^{(L)}, \mu_{\mathrm{iid}}^{(L)}
    \right) = O\left(\frac{M^2L^2}{N}\right).
    \label{eq:rb-tv-bound}
\end{equation}

We next control the tails of the two coverage times. 

Let $\beta_k \triangleq \frac{\gv{k-1}}{\gv{k}}$. Consider an arbitrary balanced partition $\mathcal{P}=(P_1,\ldots,P_M)$ of $\bra{k}$ and a hyperplane $H\le\mathbb F_q^k$. Define $\rho_i(H) \triangleq \frac{|P_i\cap H|}{|P_i|}$. Then, we have $\rho_i(H)=M|P_i\cap H|/\bra{k}$ and  $\sum_{i=1}^{M}\rho_i(H) = M|H|/{\gv{k}} = M\beta_k$, which leads to $\prod_{i=1}^{M}\rho_i(H) \le \beta_k^M$ by the AM--GM inequality. Thus, the probability that all columns drawn in the first $t$ rounds lie in $H$ is at most $\beta_k^{Mt}$. Since column set not spanning the whole space $\mathbb{F}_q^k$ is contained in some hyperplane and there are $\gv{k}=N$ hyperplanes, this yields that
\begin{equation}
\Pr(\tau_{\mathrm{rb},k}>t) \le N\beta_k^{Mt}. \label{eq:rb-tail}    
\end{equation}
Moreover, note that $\widehat\tau_k>t$ is equivalent to $X_1,\ldots,X_{Mt}$ lies in certain hyperplane, thus we also have
\begin{equation}
    \Pr(\widehat\tau_k>t) \le N\beta_k^{Mt}. \label{eq:iid-tail}
\end{equation}

For each $0\le t\le L$, let $A_t\subseteq\Omega$ be the event that columns drawn in the first $t$ rounds fail to span $\mathbb F_q^k$. Then, we have $\mu_{\mathrm{rb}}^{(L)}(A_t) = \Pr(\tau_{\mathrm{rb},k}>t)$ and $\mu_{\mathrm{iid}}^{(L)}(A_t) = \Pr(\widehat\tau_k>t)$. By the definition of $d_{TV}(\cdot)$, this further implies that
\begin{align}
    \left| \Pr(\tau_{\mathrm{rb},k}>t) - \Pr(\widehat\tau_k>t)
    \right|
    & = \left|\mu_{\mathrm{rb}}^{(L)}(A_t) - \mu_{\mathrm{iid}}^{(L)}(A_t)
    \right| \nonumber\\
    & \le
    d_{\mathrm{TV}}
    \left( \mu_{\mathrm{rb}}^{(L)},      \mu_{\mathrm{iid}}^{(L)} \right). \label{eq:up-first part}
\end{align}
Recall that $\E[\tau]=\sum_{t\ge0}\Pr(\tau>t)$. Thus, for every $L\ge1$, \eqref{eq:rb-tv-bound}, \eqref{eq:iid-tail} and \eqref{eq:up-first part} imply
\begin{align}
    \left| \E[\tau_{\mathrm{rb},k}] - \E[\widehat\tau_k]
    \right|
    & \le
    \sum_{t=0}^{L}
    \left| \Pr(\tau_{\mathrm{rb},k}>t) - \Pr(\widehat\tau_k>t)
    \right| \nonumber\\
    & \quad + \sum_{t>L}\Pr(\tau_{\mathrm{rb},k}>t) + \sum_{t>L}\Pr(\widehat\tau_k>t) \nonumber\\
    &\le (L+1)O\left(\frac{M^2L^2}{N}\right)
    + \frac{2N\beta_k^{M(L+1)}}{1-\beta_k^M}.
    \label{eq:rb-expectation-comparison}
\end{align}
Take $L=2k$. Since $N=\frac{q^k-1}{q-1}$ and $\beta_k<q^{-1}$,
the first term on the right-hand side of
\eqref{eq:rb-expectation-comparison} is $O_{q}(M^2k^3q^{-k})$, while the second is $O\left(q^{-M-(2M-1)k}\right)$. Consequently, by $M\geq 1$ and $\widehat\tau_k = \left\lceil \frac{\tau_{\mathrm{simp},k}}{M} \right\rceil$, this proves
\begin{align}
    \E[\tau_{\mathrm{rb},k}] & = \E\left[
        \left\lceil \frac{\tau_{\mathrm{simp},k}}{M}
        \right\rceil
    \right] + O_{q,M}(k^3q^{-k}). \label{eq:rb-main-comparison}
\end{align}
Note that $M\left\lceil\frac{c}{M}\right\rceil = c+[-c]_M$ holds for every nonnegative integer $c$. Thus, multiplying \eqref{eq:rb-main-comparison} by $M$ yields
\begin{align}
    M\overline E_{\mathrm{rb}}(k,M) = \E[\tau_{\mathrm{simp},k}] + \gamma_M(k) + o(1). \label{eq:rb-rounding}
\end{align}
Moreover, by Theorem~\ref{thm: simplexE}, we have
\begin{align*}
    \E[\tau_{\mathrm{simp},k}] & = (1-q^{-k})\sum_{j=1}^{k}\frac{1}{1-q^{-j}}  = k+\sum_{j=1}^{k}
    \frac{q^{-j}-q^{-k}}{1-q^{-j}} \\
    & = k+ \sum_{j=1}^{k}
    \frac{1}{q^j-1}
    -
    q^{-k}
    \sum_{j=1}^{k}
    \frac{1}{1-q^{-j}} \\
    & = k+\delta_q+o(1),
\end{align*}
where the last equality follows since $\delta_q=\sum_{j=1}^{k}(q^j-1)^{-1}+\sum_{j> k}(q^j-1)^{-1} = \sum_{j=1}^{k}(q^j-1)^{-1}+O(q^{-k})$ and $q^{-k}\sum_{j=1}^{k}\frac{1}{1-q^{-j}} = O(kq^{-k})$. Hence, \eqref{eq:rb-rounding} becomes
$$
M\overline E_{\mathrm{rb}}(k,M) = k+\delta_q+\gamma_M(k)+o(1).
$$

It remains to analyze $\gamma_M(k)$. Decompose the single-container process into $k$ stages according to the dimension of the space spanned by the drawn columns. For $j\in[k]$, let $G_{k,j}$ denote the number of draws made to have dimension $k-j+1$ conditioned on the current span has dimension $k-j$. Then $G_{k,j}$'s are independent geometric random variables with success probabilities $1-\frac{\gv{k-j}}{\gv{k}}$, and we can write
$$
\tau_{\mathrm{simp},k} = \sum_{j=1}^{k}G_{k,j} = k+D_k,
$$
where $D_k \triangleq \sum_{j=1}^{k}(G_{k,j}-1)$. Then, by $\bra{k}-\bra{k-j}=q^{k-j}\bra{j}$, we have
\begin{align}
    \E[G_{k,j}-1]
    &=
    \frac{\bra{k}}{\bra{k}-\bra{k-j}}-1 \nonumber\\
    &=
    \frac{\bra{k-j}}{\bra{k}-\bra{k-j}} \nonumber\\
    &=
    \frac{1-q^{-(k-j)}}{q^j-1}  \le
    \frac{1}{q^j-1}.
    \label{eq:epochmean}
\end{align}
Note that for each fixed $j$, as $k \to \infty$, $1 - {\bra{k-j}}/{\bra{k}} \to 1 - q^{-j}$. Thus, $G_{k,j} \xrightarrow{d} \Gamma_j$\footnote{For two random variables $X$ and $Y$, the notation $X \xrightarrow{d} Y$ denotes that $X$ converges in distribution to $Y$.}. Moreover, for any fixed finite set of indices $j$, the variables $G_{k,j}$ are independent, hence the joint convergence also holds. Therefore, by $\sum_{j\ge1}\frac{1}{q^j-1}<\infty$, we have $D_\infty=\sum_{j\ge1}(\Gamma_j-1)<\infty$ a.s.

For a fixed $J$, denote
$D_{k,J} = \sum_{j=1}^{J}(G_{k,j}-1)$ and   $D_{\infty,J} = \sum_{j=1}^{J}(\Gamma_j-1)$. Then, since the random variables $\{G_{k,j}\}_{j=1}^J$ are mutually independent and so are $\{\Gamma_j\}_{j=1}^J$, by the Continuous Mapping Theorem (see~\cite[Theorem 2.7]{billingsley13}), we have $D_{k,J} \xrightarrow{d} D_{\infty,J}$.

On the other hand, by \eqref{eq:epochmean}, we have $\sup_k \E[D_k-D_{k,J}] \le \sum_{j>J}\frac{1}{q^j-1}$ and $\E[D_\infty-D_{\infty,J}] = \sum_{j>J}\frac{1}{q^j-1}$, both converge to $0$ as $J\to\infty$. Therefore, for every bounded
Lipschitz function $f$,
\begin{align*}
    \limsup_{k\to\infty}
    \left| \E[f(D_k)]-\E[f(D_\infty)] \right|
    & \le
    \limsup_{k\to\infty}
    \left| \E[f(D_{k,J})]-\E[f(D_{\infty,J})] \right| \\
    & \quad
    + L_f
    \left( \sup_k\E[D_k-D_{k,J}] + \E[D_\infty-D_{\infty,J}] \right).
\end{align*}
The first term on the right-hand side is zero for each fixed $J$, while the
second term tends to zero as $J\to\infty$. Thus,
we have $D_k\xrightarrow{d} D_\infty$. Since $D_k$ and $D_\infty$ are both integer-valued, the weak convergence implies:
\begin{equation}\label{eq:pmflimit}
    \Pr(D_k=d)\longrightarrow \Pr(D_\infty=d),
\end{equation}
for all integer $d\geq 0$.

Now fix $a\in\{0,1,\ldots,M-1\}$ and let $k\to\infty$ with $k\equiv a \pmod M$. For any integer $d\geq 0$, define $r_a(d)\triangleq [-(a+d)]_M$. Since $\tau_{\mathrm{simp},k}=k+D_k$, by the definition of $\gamma_{M}(k)$, we have
$$
\gamma_M(k) = \E[r_a(D_k)] = \sum_{d\ge0}\Pr(D_k=d)r_a(d).
$$
By the definition of $D_k$, we have $\E[D_k] \le \delta_q$ uniformly in $k$ from \eqref{eq:epochmean}. Thus, by Markov's inequality, we have $\Pr(D_k>J)\le \frac{\delta_q}{J}$ and $\Pr(D_\infty>J)\le \frac{\delta_q}{J}$. Note that for each $J\ge0$,
$$
\sum_{d=0}^{J}\Pr(D_k=d)r_a(d) \longrightarrow \sum_{d=0}^{J}\Pr(D_\infty=d)r_a(d)
$$
by the pointwise convergence of the probability mass functions. Moreover, by $r_a(d)\leq M-1$ for every $d\geq 0$, we also have
$$
\left| \sum_{d>J}\Pr(D_k=d)r_a(d)\right| \le (M-1)\Pr(D_k>J) \le (M-1)\frac{\delta_q}{J},
$$
uniformly in $k$, and the same bound holds with $D_\infty$ in place of $D_k$.
Hence, by first letting $k\to\infty$ with $J$ fixed and then letting $J\to\infty$, we obtain
$$
\gamma_M(k) = \E[r_a(D_k)] \longrightarrow \E[r_a(D_\infty)] = \E\!\left[[-(a+D_\infty)]_M\right].
$$
Therefore, along each residue class modulo $M$, the rounding term converges to a limit depending only on that residue class. Equivalently, $\gamma_M(k)$ is asymptotically periodic with period $M$.

Finally, if all $M$ residue classes are attained, we compute the average of these $M$ limiting values $\E[r_a(D_\infty)]$. For each fixed integer $d$, as $a$ ranges from $0$ to $M-1$, the residues $[-(a+d)]_M$ form a permutation of $0,1,\ldots,M-1$. Hence
$$
\frac{1}{M} \sum_{a=0}^{M-1}[-(a+d)]_M = \frac{M-1}{2}.
$$
Then, the result follows directly by averaging over $d=D_\infty$, completing the proof.
\end{IEEEproof}

\bibliographystyle{IEEEtran}
\bibliography{biblio}

\end{document}